\newtheorem{theorem}{Theorem}
\newtheorem{lemma}{Lemma}
\newtheorem{corollary}{Corollary}
\begin{document}

\title{Generating Probability Distributions using Multivalued Stochastic Relay Circuits}

\author{\authorblockN{David Lee}
\authorblockA{Dept. of Electrical Engineering\\
Stanford University\\
Palo Alto, CA 94305\\
Email: davidtlee@stanford.edu}
\and
\authorblockN{Jehoshua Bruck}
\authorblockA{Dept. of Electrical Engineering\\
California Institute of Technology\\
Pasadena, CA 91125\\
Email: bruck@caltech.edu}}

\maketitle

\begin{abstract}
The problem of random number generation dates back to von Neumann's work in 1951. Since then, many algorithms have been developed for generating unbiased bits from complex correlated sources as well as for generating arbitrary distributions from unbiased bits. An equally interesting, but less studied aspect is the {\it structural} component of random number generation as opposed to the algorithmic aspect. That is, given a network structure imposed by nature or physical devices, how can we build networks that generate arbitrary probability distributions in an optimal way?

In this paper, we study the generation of arbitrary probability distributions in multivalued relay circuits, a generalization in which relays can take on any of $N$ states and the logical `and' and `or' are replaced with `min' and `max' respectively. Previous work was done on two-state relays. We generalize these results, describing a duality property and networks that generate arbitrary rational probability distributions. We prove that these networks are robust to errors and design a universal probability generator which takes input bits and outputs arbitrary binary probability distributions.
\end{abstract}

\section{Introduction}

\subsection{Motivation}
Many biological systems involve stochasticity. Examples of these include gene expression\cite{gene}, chemical reactions\cite{crn}, and neuron signaling\cite{neuron}. However, despite the stochasticity, often deemed as noise, they are still capable of achieving functionalities that artificial systems cannot yet compete with. 

Motivated by the idea that stochasticity is an important enabler of biological computation, we tackle a simpler question as a stepping stone: Can we design networks that generate stochasticity in a systematic way?

\subsection{Structural Aspects of Random Number Generation}
This question is strongly connected to an important thread of work in computer science on random number generation. In 1951, Von Neumann \cite{Neumann1951} studied this problem in the context of generating a fair coin toss from a biased coin. Knuth and Yao \cite{Knuth1976} studied the reverse problem of generating arbitrary distributions from unbiased bits (fair coins), which was extended by Han and Hoshi \cite{Han1997} to generating arbitrary distributions from biased distributions. These primarily focus on an algorithmic perspective to random number generation.

In a biological system (or other physical device), we do not have the full generality of the algorithmic approach to generating probability distributions. Randomness arises in specific areas that can only propagate according to the structure of how components are composed in the given network. The study of random number generation under these constraints can be greatly beneficial to understanding natural stochastic systems as well as to designing devices for random number generation. In this paper, we will analyze random number generation in the context of multivalued relay circuits, a generalization of standard relay circuits to any number of states. This generalization is very natural and has an intuitive understanding as the timing of events that have dependencies on other events.

\subsection{Deterministic Relays}

\begin{figure}[!b]
\centering
\includegraphics[width=3.0in]{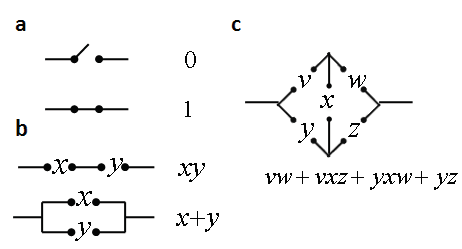}
\caption{\textbf{Deterministic Examples.} For two states, $xy$ is the boolean `and' of $x$ and $y$ while $x+y$ is the boolean `or'. Figures b and c also apply to multivalued relays where $xy$ is $min(x, y)$ and $x+y$ is $max(x, y)$. \textit{(a)} A relay can either be opened (state 0) or closed (state 1). \textit{(b)} In series, the entire circuit can only be closed if both $x$ and $y$ are closed. In parallel, the entire circuit will be closed if either $x$ or $y$ are closed. \textit{(c)} In this example of a non-sp circuit, we take the boolean `and' along the four possible paths, and then apply the boolean `or' to those subsequent values.}
\label{det_example}
\end{figure}

We will start by introducing deterministic relays. A deterministic relay switch is a 2 terminal object which can be connected (closed) by a wire or left open. The state of the switch, which can be either 0 or 1, describes an open or closed state respectively (see Figure \ref{det_example}a). These states are complements of each other. That is, if switch $x$ is in state 0, then the complement $\bar{x}$ is in state 1 and vice versa. When multiple switches are composed together, these networks are known as relay circuits. One of these, a series composition, is formed when two switches are put end to end. A parallel composition is formed when two switches are composed so that the beginning terminals are connected together and the end terminals are connected together.

Shannon showed that the series and parallel compositions can be represented by the boolean `and' and `or' operations\cite{Shannon}. If switches $x$ and $y$ are composed in series to form $z_1$, then $z_1$ will only be closed if both $x$ and $y$ are closed. On the other hand, if switches $x$ and $y$ are composed in parallel to form $z_2$, then $z_2$ will be closed if either $x$ or $y$ are closed. We will denote the series composition of $x$ and $y$ by $x*y$ or simply $xy$ and the parallel composition by $x+y$ (see Figure \ref{det_example}b). This notation will be preserved for further generalizations of the relay circuits.

Circuits formed solely by series and parallel compositions are called sp circuits. Shannon showed that non sp circuits could also be represented by boolean operations. The general mathematical representation of any relay circuits would be to find all paths that go from the beginning to the end terminal. For each path, we take the boolean `and' of all switches along that path; then we take the boolean `or' of the values derived for each path (see Figure \ref{det_example}c). 

\subsection{Stochastic Relays}

\begin{figure}[!b]
\centering
\includegraphics[width=3.0in]{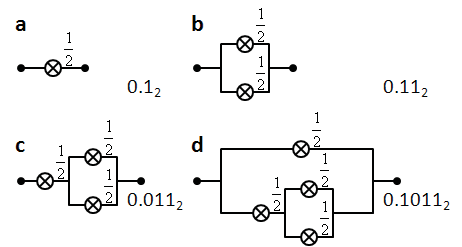}
\caption{\textbf{Simple Examples.} As stated, $\frac{1}{2}$ is short for $(\frac{1}{2}, \frac{1}{2})$. \textit{(a)} A single $\frac{1}{2}$-pswitch. \textit{(b)} Putting (a) in parallel with a $\frac{1}{2}$-pswitch gives $\frac{3}{4}$. \textit{(c)} Putting (b) in series with a $\frac{1}{2}$-pswitch gives $\frac{3}{8}$. \textit{(d)} Putting (c) in parallel with a $\frac{1}{2}$-pswitch gives $\frac{11}{16}$.}
\label{example}
\end{figure}

Recently, Wilhelm and Bruck introduced the notion of a stochastic relay circuit\cite{WilhelmBruck}. These circuits are a generalization of Shannon's relay circuits; instead of having deterministic relay switches that are in either the open or closed state, stochastic relay circuits can exist in either state with a specified probability. If a stochastic relay switch $x$, called a pswitch, has probability $p$ of being closed and probability $1-p$ of being open, this {\it distribution} is represented by a vector $v = (1-p, p)$ where $v_i$ corresponds to the probability of $x$ being in state $i$. We say that $x$ {\it realizes} $(1-p, p)$ or simply $x$ {\it realizes} $p$. If pswitches $x$ and $y$, which realize probabilities $p$ and $q$ respectively, are composed in series, the new composition will realize $pq$. If they are composed in parallel, the new composition will realize $p + q - pq$ (see Figure \ref{example}). 

One of the primary questions dealt with in their work was the generation of probability distributions using a limited number of base switch types, known as a switch set. For example, if the switch set {\bf S} $ = \{\frac{1}{2}\}$, then relay circuits built with this switch set can only use pswitches with the distribution $(\frac{1}{2}, \frac{1}{2})$. They proved that using the switch set {\bf S} $ = \{\frac{1}{2}\}$, all probability distributions $\frac{a}{2^n}$ could be realized with at most $n$ pswitches. Continuing, many more results were proved not only in realizing other probability distributions\cite{WilhelmBruck}\cite{ZhouBruck}, but also in circuit constructions such as a Universal Probability Generator\cite{WilhelmBruck} and in robustness\cite{LohZhouBruck} and expressibility\cite{ZhouBruck} properties of these circuits.

\subsection{Multivalued Stochastic Relays}

In order to study non-bernoulli random variables, it is necessary to generalize Shannon's relays to a larger number of states. Multivalued logics have been studied as early as in 1921 by Post\cite{Post} and followed up on by Webb\cite{Webb} and others\cite{Rine}. The work presented in this paper concerns one generalization of two-state relay circuits to multivalued relay circuits where we use two multivalued functions (gates).

A multivalued switch is a relay switch that can be in any of n states: 0, 1, 2, ..., n-1. We define the complement of a switch to be $n-1-i$, where $i$ is the state of the switch. Series and parallel compositions are redefined to `min' and `max', respectively, rather than the boolean `and' and `or'. This means that when switches $x$ and $y$ are composed in series, the overall circuit is in state $\min(x, y)$ and when they are composed in parallel, the overall circuit is in state $\max(x, y)$ (see Figure \ref{multi_example}a and b). Non-sp circuits are also defined in a similar way to 2-state circuits. The general mathematical representation of any multivalued relay circuit is to find all paths that go from the beginning to the end terminal. For each path, we take the `min' of all switches along that path; then we take the `max' of the values derived for each path (Figures \ref{det_example}b and c still apply). 

\begin{figure}[!b]
\centering
\includegraphics[width=3.0in]{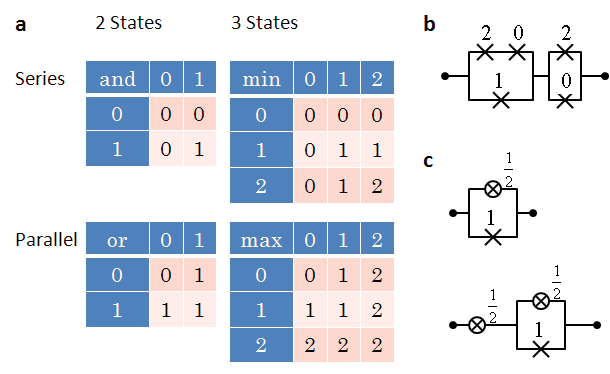}
\caption{\textbf{3-state Examples.} We use $\times$ to represent a deterministic switch and $\otimes$ to represent a pswitch. \textit{(a)} A comparison of the truth tables for 2-state and 3-state relay switches. \textit{(b)} A simple 3-state deterministic example. Evaluating series connections using min and parallel connections as max, we find that this circuit is in state 1. \textit{(c)} A simple 3-state stochastic example. On the top, we put a $(\frac{1}{2}, 0, \frac{1}{2})$ pswitch in parallel with a deterministic switch in state 1. Then, with $\frac{1}{2}$ probability we get $\max(0, 1)$ and with $\frac{1}{2}$ probability we get $\max(2, 1)$, which yields a $(0, \frac{1}{2}, \frac{1}{2})$ circuit. Similarly, we can calculate the bottom circuit to realize distribution $(\frac{1}{2}, \frac{1}{4}, \frac{1}{4})$.}
\label{multi_example}
\end{figure}

One physical understanding of this max-min algebra is found in the timing of relay switches. Let a switch start off in the closed position. Then the state of the switch will be the time $t \in \{0, 1, ..., n-1\}$ when the switch opens. Compose two switches, which open at time $t_1$ and $t_2$, in series. If either switch opens, then the overall connection is broken. Therefore, the time that the series composition will be open is $\min(t_1, t_2)$. In the same way, we can compose the switches in parallel. Since both switches need to be opened in order for the overall circuit to be open, the time calculated for the parallel composition is $\max(t_1, t_2)$. More generally, the max-min algebra can be used to understand the timing of events when an event occurence depends on the occurence of other events, i.e. event $A$ occurs if event $B$ occurs or if event $C$ and $D$ occur. Max-min functions have been studied as a way of reasoning about discrete event systems\cite{Gunawardena1994}.

Now that we have defined multivalued relays, we can also reason about probability distributions over the $n$ states of a multivalued stochastic relay. If a 3-state pswitch $x$ has probability $p_0$ of being in state 0, $p_1$ of being in state 1, and $p_2$ of being in state 2, we represent the distribution with the vector $v = (p_0, p_1, p_2)$ and say $x$ realizes $(p_0, p_1, p_2)$. If the distribution is in the form $(1-p, 0, ..., 0, p)$, we will shorten this to simply $p$ if the number of states can be inferred from the context or if the equation holds for any number of states (see Figure \ref{multi_example}c).

We present the following results in our paper:
\begin{enumerate}
\item A duality property (Section II).
\item Networks for generating binary probability distributions (Sections III and IV).
\item Networks for generating rational distributions (Section V).
\item Robustness of the previous networks to switch errors (Section VI).
\item Universal Probability Generation (Section VII).
\item Switching with partially ordered states (Section VIII).
\item Applications to neural circuits and DNA Computing (Section IX).
\end{enumerate}

\section{Duality}

It is important to characterize properties of multivalued circuits. One well-known property is duality, which plays a role in resistor networks, deterministic and two-state stochastic relay circuits. We show that a similar duality concept exists for multivalued circuits.
Define the dual state of $i$ as the state $n-1-i$; the dual distribution of $v$ as the distribution $\bar{v}$ where $\bar{v}_i = v_{n-1-i}$; the dual switch of $x$ as the switch $\bar{x}$ that realizes the dual distribution of $x$; and the dual circuit of $C$ as the circuit $\bar{C}$  that realizes the dual distribution of $C$.

\begin{figure}[!t]
\centering
\includegraphics[width=3.0in]{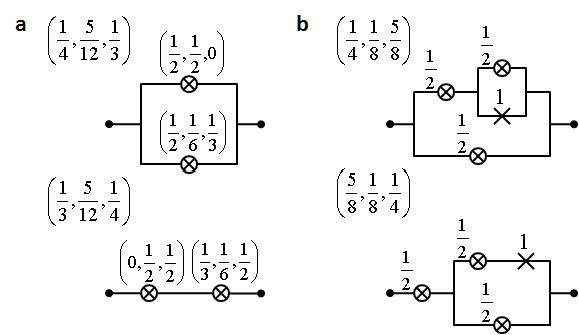}
\caption{\textbf{3-state Duality Example.} Remember that the 3-state $\frac{1}{2}$-pswitch is the shorthand for the $(\frac{1}{2}, 0, \frac{1}{2})$ pswitch. \textit{(a)} A two element duality example. \textit{(b)} A duality example on a larger circuit. Note that since the pswitches are symmetric distributions (duals of themselves), only the compositions are changed.}
\label{duality}
\end{figure}

\begin{theorem}[Duality Theorem]
Given a stochastic series-parallel circuit $C$, we can construct $\bar{C}$ by replacing all the switches in $C$ with their dual switches and by replacing series connections with parallel connections and vice versa. (see Figure \ref{duality}).
\end{theorem}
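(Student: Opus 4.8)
The plan is to argue by induction on the structure of the series-parallel circuit $C$, the whole proof resting on the single elementary fact that the map $\phi(i) = n-1-i$ is an order-reversing involution of $\{0,1,\dots,n-1\}$, so that $\phi(\min(a,b)) = \max(\phi(a),\phi(b))$ and $\phi(\max(a,b)) = \min(\phi(a),\phi(b))$. First I would restate this probabilistically: if a switch or circuit realizes $v$, meaning its random state $S$ has $\Pr[S=i]=v_i$, then $n-1-S$ satisfies $\Pr[n-1-S=i] = v_{n-1-i} = \bar v_i$, so $n-1-S$ realizes $\bar v$. The base case, where $C$ is a single switch, is then immediate: the dualization procedure replaces it by its dual switch, which realizes $\bar v$ by definition.

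For the inductive step, decompose $C$ as $C_1 * C_2$ (series) or $C_1 + C_2$ (parallel) with $C_1, C_2$ strictly smaller sp circuits, and let $S_1, S_2$ be their states. Since $C$ is series-parallel, $C_1$ and $C_2$ involve disjoint switch sets, so $S_1, S_2$ are independent, as are $n-1-S_1, n-1-S_2$. In the series case the procedure yields $\bar C = \bar C_1 + \bar C_2$; by the induction hypothesis $\bar C_i$ realizes the law of $n-1-S_i$, hence $\bar C$ realizes the law of $\max(n-1-S_1, n-1-S_2) = n-1-\min(S_1,S_2) = n-1-S$, which is exactly $\bar v_C$. The parallel case is symmetric, exchanging $\min$ and $\max$. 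This completes the induction.

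I do not anticipate a real difficulty here; the only points to get right are making the dictionary ``dual distribution of $v$'' $=$ ``law of $n-1-S$ for $S$ with distribution $v$'' precise enough that it threads through the recursion, and observing that the series-parallel hypothesis is exactly what supplies the independence of $S_1$ and $S_2$ used in the induction step. For a general non-sp circuit, distinct paths can share switches, so this direct argument breaks down and a different (e.g.\ planar graph dual) construction would be needed, which is presumably why the theorem is stated for sp circuits only.
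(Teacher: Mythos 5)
Your proposal is correct and follows essentially the same route as the paper: induction on the series-parallel structure, with a trivial base case for a single switch and an inductive step resting on the fact that the complement map $i \mapsto n-1-i$ is order-reversing and therefore exchanges $\min$ and $\max$. The paper carries out the inductive step by re-indexing the sum $\sum_{\max(i,j)=k} p_{n-1-i}q_{n-1-j}$ to show $d_k = c_{n-1-k}$, which is just the distribution-level version of your random-variable identity $n-1-\min(S_1,S_2)=\max(n-1-S_1,\,n-1-S_2)$ under the same (implicit in the paper, explicit in your write-up) independence of the two subcircuits.
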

\begin{proof}
This is shown using induction on series-parallel connections. 

Base Case: The dual of a single pswitch with distribution $(p_{0}, p_{1}, ..., p_{n-1})$ is $(p_{n-1}, ..., p_{1}, p_{0})$, which trivially satisfies the theorem. 

Inductive Step: Suppose a circuit $C$ with distribution $(p_{0}, p_{1}, ..., p_{n-1})$ and a circuit $C'$ with distribution $(q_{0}, q_{1}, ..., q_{n-1})$ satisfy the theorem, i.e. the distribution of $\bar{C}$ is $(p_{n-1}, ..., p_{1}, p_{0})$ and the distribution of $\bar{C'}$ is $(q_{n-1}, ..., q_{1}, q_{0})$. To prove the theorem, it is sufficient for us to show that $CC'$ is the dual of $\bar{C} + \bar{C'}$.

Let $C_s = CC'$ and $C_p = \bar{C} + \bar{C'}$. Then $C_{s} = (c_{0}, c_{1}, ..., c_{n-1})$ and $C_{p} = (d_{0}, d_{1}, ..., d_{n-1})$ where
\begin{align*}
\displaystyle c_{k} &= \sum\limits_{min(i,j)=k} Pr(C=i)Pr(C'=j)\\
                    &= \sum\limits_{min(i,j)=k} p_{i}q_{j}\\
              d_{k} &= \sum\limits_{max(i,j)=k} Pr(\bar{C}=i)Pr(\bar{C'}=j)\\
                    &= \sum\limits_{max(i,j)=k} p_{n-1-i}q_{n-1-j}\\
                    &= \sum\limits_{min(n-1-i,n-1-j)=n-1-k} p_{n-1-i}q_{n-1-j}\\
                    &= \sum\limits_{min(i',j')=n-1-k} p_{i'}q_{j'}\\
                    &= c_{n-1-k}.
\end{align*}
We see that $d_{0} = c_{n-1}$, $d_{1} = c_{n-2}$, ..., $d_{n-1} = c_{0}$, demonstrating that $C_s$ is the dual of $C_p$.
\end{proof}

\section{Realizing Binary 3-state Distributions}

We can now ask questions about generating probability distributions with stochastic relay circuits. These include: What are the possible distributions that can be realized? What is the smallest set of basic switching elements necessary to realize these distributions? Are there efficient constructions to realize these distributions? We will begin by demonstrating how to generate distributions on 3 states.

In writing out algorithms and circuit constructions we will use the notation introduced earlier for series and parallel connections, i.e. $xy$ and $x+y$ will denote series and parallel connections respectively. We will use this notation loosely, i.e., $(\frac{1}{2}, 0, \frac{1}{2})(\frac{1}{2}, \frac{1}{2}, 0)$ represents a circuit formed by composing a $(\frac{1}{2}, 0, \frac{1}{2})$ pswitch in series with a $(\frac{1}{2}, \frac{1}{2}, 0)$ pswitch. 

\begin{lemma}[3-state sp composition rules]
Given $p = (p_0, p_1, p_2)$ and $q = (q_0, q_1, q_2)$, let $x = pq$ and $y = p + q$. Then,
\begin{align*}
\displaystyle x_0 &= p_0q_0 + p_0q_1 + p_0q_2 + p_1q_0 + p_2q_0\\
                  &= 1 - (1-p_0)(1-q_0)\\
              x_1 &= p_1q_1 + p_1q_2 + p_2q_1\\
              x_2 &= p_2q_2\\
              y_0 &= p_0q_0\\
              y_1 &= p_1q_1 + p_1q_0 + p_0q_1\\
              y_2 &= p_2q_2 + p_2q_1 + p_2q_0 + p_1q_2 + p_0q_2\\
                  &= 1 - (1-p_2)(1-q_2)
\end{align*}
\end{lemma}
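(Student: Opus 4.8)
The plan is to prove the identities by a direct case analysis, working straight from the definitions of the series and parallel compositions as $\min$ and $\max$. First I would record, for the series composition $x = pq$, the general expression $x_k = \sum_{\min(i,j)=k} p_i q_j$, which simply says that $x$ lands in state $k$ exactly when the two (independent) pswitches land in states $i$ and $j$ with $\min(i,j)=k$. For $n=3$ there are only nine ordered pairs $(i,j)$, so I would partition them by the value of $\min(i,j)$: the pair $(2,2)$ gives $\min=2$; the pairs $(1,1),(1,2),(2,1)$ give $\min=1$; and the remaining five pairs $(0,0),(0,1),(0,2),(1,0),(2,0)$ give $\min=0$. Reading off the corresponding products $p_iq_j$ yields the stated formulas for $x_2$, $x_1$, and $x_0$.

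For the parallel composition $y = p+q$ I would proceed identically with $y_k = \sum_{\max(i,j)=k} p_i q_j$, partitioning the same nine pairs by the value of $\max(i,j)$: $(0,0)$ for $\max=0$; $(1,1),(1,0),(0,1)$ for $\max=1$; and $(2,2),(2,1),(2,0),(1,2),(0,2)$ for $\max=2$. Alternatively — and this avoids repeating the enumeration — one can observe that relabeling each state $i$ by its complement $2-i$ interchanges $\min$ and $\max$, so the parallel formulas are forced by the series formulas together with the Duality Theorem of Section II. I would most likely present the direct enumeration but remark on this shortcut.

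Finally I would verify the two closed-form simplifications. For $x_0$, factor the five-term sum as $p_0(q_0+q_1+q_2) + q_0(p_1+p_2)$ and apply the normalizations $q_0+q_1+q_2 = 1$ and $p_1+p_2 = 1-p_0$ to obtain $p_0 + q_0 - p_0 q_0 = 1-(1-p_0)(1-q_0)$; the computation for $y_2$ is the mirror image, factoring as $p_2(q_0+q_1+q_2) + q_2(p_0+p_1)$.

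There is essentially no real obstacle here: the statement is a routine computation. The only points requiring care are the bookkeeping in the case split — making sure each of the nine ordered pairs is assigned to exactly one state, with no product dropped or double-counted — and invoking the normalization $\sum_i p_i = \sum_i q_i = 1$ correctly in the two simplification steps.
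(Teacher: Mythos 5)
Your proposal is correct and matches the paper's proof, which simply enumerates all $3^2$ switch combinations and reads off the sums $x_k = \sum_{\min(i,j)=k} p_i q_j$ and $y_k = \sum_{\max(i,j)=k} p_i q_j$; your added verification of the closed forms via $\sum_i p_i = \sum_i q_i = 1$ is a routine elaboration of the same argument.
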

\begin{proof}
The above expressions follow from enumerating all $3^2$ switch combinations.
\end{proof}

\begin{figure}[!t]
\centering
\includegraphics[width=3.0in]{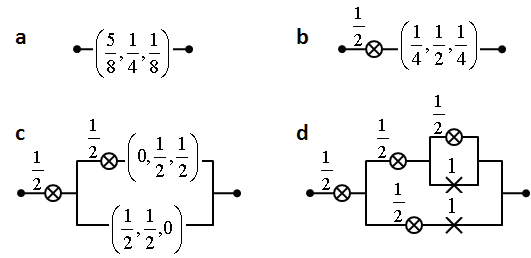}
\caption{\textbf{Realizing 3-state Distribution.} \textit{(a)} We want to realize $(\frac{5}{8}, \frac{1}{4}, \frac{1}{8})$. This falls under the first case. \textit{(b)} Now we have an unconstructed distribution that falls under the second case. \textit{(c)} There are two unconstructed distributions. The top one falls under case 3 and the bottom falls under case 2. (Note that case 2 is equivalent to case 1 for $p_0 = \frac{1}{2}$). \textit{(d)} The final circuit.}
\label{algthm_3}
\end{figure}

\begin{theorem}[Binary 3-state Distributions]
Using the switch set {\bf S} $ = \{\frac{1}{2}\}$ and the deterministic switches, we can realize all 3-state distributions of the form $(\frac{a}{2^n}, \frac{b}{2^n}, \frac{c}{2^n})$ using at most $2n-1$ pswitches with the following recursive circuit construction (see Figure \ref{algthm_3} for an example):

\begin{algorithm}[H]
\begin{algorithmic}
\IF {$p \in S$}
		\STATE return $p$
\ELSIF {$\frac{a}{2^n} > \frac{1}{2}$}
    \STATE return $(\frac{1}{2}, 0, \frac{1}{2})(\frac{a - 2^{n-1}}{2^{n-1}}, \frac{b}{2^{n-1}}, \frac{c}{2^{n-1}})$
\ELSIF {$\frac{a+b}{2^n} > \frac{1}{2}$}
    \STATE return $(\frac{a}{2^{n-1}}, \frac{2^{n-1} - a}{2^{n-1}}, 0) + (\frac{1}{2}, 0, \frac{1}{2})(0, \frac{2^{n-1}-c}{2^{n-1}}, \frac{c}{2^{n-1}})$
\ELSIF {$\frac{a + b + c}{2^n} > \frac{1}{2}$}
    \STATE return $(\frac{a}{2^{n-1}}, \frac{b}{2^{n-1}}, \frac{c - 2^{n-1}}{2^{n-1}}) + (\frac{1}{2}, 0, \frac{1}{2})$
\ENDIF
\end{algorithmic}
\end{algorithm}
\end{theorem}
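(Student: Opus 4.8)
The plan is to prove both the correctness of the construction and the $2n-1$ bound by a single induction on $n$, carrying a strengthened hypothesis. A naive induction fails because case~3 splits the target into \emph{two} recursive sub-distributions of denominator $2^{n-1}$ together with one explicit $\tfrac12$-pswitch; if each piece cost $2(n-1)-1$ pswitches the total would be roughly $4n$. The escape is that the two pieces produced by case~3, namely $(\tfrac{a}{2^{n-1}},\tfrac{2^{n-1}-a}{2^{n-1}},0)$ and $(0,\tfrac{2^{n-1}-c}{2^{n-1}},\tfrac{c}{2^{n-1}})$, are ``corner'' distributions supported on two \emph{adjacent} states, and such distributions are realizable with only $n-1$ pswitches. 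Accordingly I would prove simultaneously: \emph{(i)} every corner distribution $(\tfrac{a}{2^m},\tfrac{2^m-a}{2^m},0)$ or $(0,\tfrac{2^m-c}{2^m},\tfrac{c}{2^m})$ is realized by the algorithm with at most $m$ pswitches, and \emph{(ii)} every $(\tfrac{a}{2^n},\tfrac{b}{2^n},\tfrac{c}{2^n})$ is realized with at most $2n-1$ pswitches (for $n\ge1$; $n=0$ is a deterministic distribution realized with $0$ pswitches, the base of the recursion). The base cases $n\le1$ are checked directly.

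Two preliminary observations come first. The cases of the algorithm are exhaustive and the recursion is well founded: since $(a+b+c)/2^n=1>\tfrac12$ always, case~4 is a universal fallback, and each recursive sub-distribution in cases~2--4 has denominator $2^{n-1}$, so $n$ strictly decreases until the recursion reaches a deterministic distribution or a member of $S$. For clause~\emph{(i)} one checks that, applied to a $(*,*,0)$ corner, the algorithm can only invoke case~2 (recursing into a $(*,*,0)$ corner of denominator $2^{m-1}$, plus one pswitch) or case~3 (whose second factor collapses to $(\tfrac12,0,\tfrac12)(0,1,0)=(\tfrac12,\tfrac12,0)$, one pswitch, its first factor being again a $(*,*,0)$ corner of denominator $2^{m-1}$), never case~4; symmetrically a $(0,*,*)$ corner triggers only case~3 or case~4, each recursing into a $(0,*,*)$ corner of denominator $2^{m-1}$ plus one pswitch. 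In every subcase the count is (at most $m-1$ by induction) $+\,1=m$. (As an alternative to this sub-induction, \emph{(i)} follows from the two-state theorem of Wilhelm and Bruck that $a/2^m$ is realizable with at most $m$ $\tfrac12$-pswitches, because $\min$ and $\max$ restricted to a pair of adjacent states coincide with boolean ``and''/``or'' and the units $(\tfrac12,\tfrac12,0)$ and $(0,\tfrac12,\tfrac12)$ each cost one pswitch.)

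For the inductive step of \emph{(ii)}, I would handle the three cases in turn. In each case I first verify that the displayed sub-distributions are genuine probability vectors of denominator $2^{n-1}$, using the inequality defining the case: e.g.\ $a>2^{n-1}$ in case~2 gives $a-2^{n-1}\ge1$ and $(a-2^{n-1})+b+c=2^{n-1}$; in case~3 the condition $a+b>2^{n-1}$ forces $c\le2^{n-1}-1$; in case~4 the conditions force $c\ge2^{n-1}$. Then I check, via the $3$-state sp composition rules (the preceding Lemma), that the composed circuit realizes exactly $(\tfrac{a}{2^n},\tfrac{b}{2^n},\tfrac{c}{2^n})$ --- short computations, e.g.\ case~2 gives $x_0=1-(1-\tfrac12)(1-\tfrac{a-2^{n-1}}{2^{n-1}})=\tfrac{a}{2^n}$, $x_1=\tfrac12\cdot\tfrac{b}{2^{n-1}}=\tfrac{b}{2^n}$, $x_2=\tfrac12\cdot\tfrac{c}{2^{n-1}}=\tfrac{c}{2^n}$, and case~3 gives $y_0=\tfrac{2a}{2^n}\cdot\tfrac12=\tfrac{a}{2^n}$, $y_2=1-(1-0)(1-\tfrac{c}{2^n})=\tfrac{c}{2^n}$, $y_1=\tfrac{b}{2^n}$. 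Finally I count pswitches: case~2 costs $1+(2(n-1)-1)=2n-2$; case~4 costs $(2(n-1)-1)+1=2n-2$; case~3 costs $(n-1)+1+(n-1)=2n-1$, invoking clause~\emph{(i)} for its two corner factors. Each is $\le2n-1$, closing the induction.

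The crux of the whole argument --- and the step I expect to be the real obstacle --- is clause~\emph{(i)}: seeing that case~3's two-way split does not blow up the pswitch budget precisely because its pieces are adjacent-support (effectively two-state) distributions, and then establishing the sharp $m$-pswitch bound for those, either by the self-contained sub-induction above or by importing the two-state theorem. The remaining ingredients --- exhaustiveness of the cases, well-foundedness of the recursion, and the per-case distribution identities --- are routine verifications once the $3$-state composition lemma is in hand.
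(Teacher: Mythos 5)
Your proposal is correct and follows essentially the same route as the paper: verify each case against the 3-state composition rules, then bound the count by the recursion $f_n=\max(f_{n-1}+1,\,2(n-1)+1)$, where the crucial $2(n-1)+1$ term for the middle case rests on the observation that its two pieces are adjacent-support (effectively two-state) distributions costing only $n-1$ pswitches each. The paper establishes that last fact by importing the Wilhelm--Bruck two-state bound via a state-remapping argument (your ``alternative'' route), whereas you also supply a self-contained sub-induction for the corner distributions; otherwise the arguments coincide, and yours is if anything the more carefully detailed of the two.
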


\begin{proof}
For any distribution $p = (p_0, p_1, p_2)$, there exists some smallest $k$ such that $\sum_{i=0}^{k}p_i > \frac{1}{2}$, which correspond to the 3 recursive cases enumerated above. We can verify that for each of these cases:
\begin{enumerate}
\item The decompositions obey the 3-state composition rules.
\item The switches are valid (non-negative probabilities and sum to 1)
\end{enumerate}
Since each algorithm's decomposition uses switches of type $(\frac{a}{2^{n-1}}, \frac{b}{2^{n-1}}, \frac{c}{2^{n-1}})$, then we will eventually have $n=0$, corresponding to a deterministic switch; at this point the algorithm terminates and has successfully constructed any $(\frac{a}{2^n}, \frac{b}{2^n}, \frac{c}{2^n})$.

\begin{figure}[!t]
\centering
\includegraphics[width=3.0in]{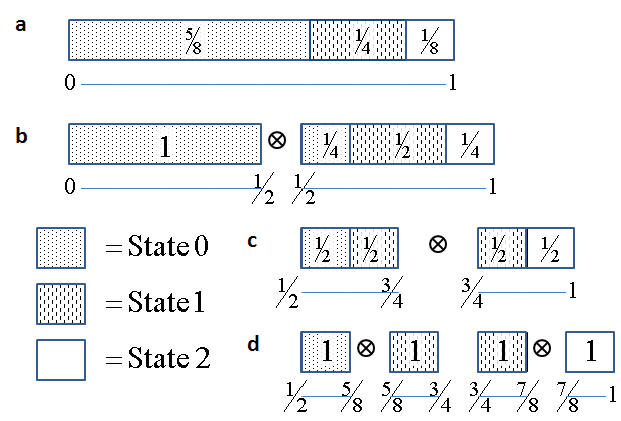}
\caption{\textbf{Block-Interval visualization of 3-state Algorithm.} We want to realize $(\frac{5}{8}, \frac{1}{4}, \frac{1}{8})$. A $\otimes$ symbol marks where the cuts occur which correspond to the use of one pswitch. \textit{(a)} We start with one interval split into blocks corresponding to the pswitch probabilities. \textit{(b)} When we cut $[0, 1]$ in half, we get two intervals; the blocks are cut and the probabilities change according to the new interval size. \textit{(c)} The second application of the algorithm. \textit{(d)} When we are left with single block-interval pairs, we know we are done since the deterministic switch is in our switch set.}
\label{visualization}
\end{figure}

We will now prove that we use at most $2n-1$ pswitches for all $n \geq 1$. Define $f_n$ as the maximum number of pswitches used in the construction of any distribution $(\frac{a}{2^n}, \frac{b}{2^n}, \frac{c}{2^n})$. Then,
\begin{align}
\displaystyle f_1 &= 1\label{algthm_3_base}\\
              f_n &= \max(f_{n-1} + 1, 2(n-1) + 1, f_{n-1} + 1)\label{algthm_3_recur}\\
                  &= \max(f_{n-1} + 1, 2n-1)\notag
\end{align} 
\noindent where (\ref{algthm_3_base}) is shown trivially, and (\ref{algthm_3_recur}) is derived from the 3 cases of the algorithm. Note that we are using a previous result\cite{WilhelmBruck} that 2 state distributions of form $(\frac{a}{2^n}, \frac{b}{2^n})$ use at most $n$ switches, and also assuming here that our algorithm generalizes the previous algorithm; that is, distributions of the form $(\frac{a}{2^n}, \frac{b}{2^n}, 0)$ and its permutations also use at most $n$ switches -- this application of their 2-state proof to our multivalued distribution is sound as will be explained more rigorously in the following section. We are now left with a simple induction exercise.

Base Case: $f_1 = 1 = 2(1)-1$.

Inductive Step: Assume $f_k = 2k-1$. Then,
\begin{align*}
f_{k+1} &= \max(f_k + 1, 2(k+1)-1)\\
        &= \max(2k, 2(k+1)-1)\\
        &= 2(k+1)-1. 
\end{align*}
So $f_k = 2k-1 \implies f_{k+1} = 2(k+1)-1$.
\end{proof}

It is useful at this time to provide some intuition regarding the algorithm. We can view the original distribution as a series of blocks dividing up the interval $[0, 1]$ (see Figure \ref{visualization}). By applying the algorithm, we are separating this larger interval into smaller intervals $[0, \frac{1}{2}]$ and $[\frac{1}{2}, 1]$, cutting any blocks on that boundary. When this separation occurs, the total size of the block is decreased (namely, in half), and so the probabilities representing those intervals change - these probabilities are precisely those of the algorithm. Namely, we can rewrite the three cases of the algorithm:
\begin{align*}
\textstyle{(\frac{a}{2^n}, \frac{b}{2^n}, \frac{c}{2^n})} = \\
\text{Case 1: } &\textstyle{(1, 0, 0) + (\frac{1}{2}, 0, \frac{1}{2})(\frac{a - 2^{n-1}}{2^{n-1}}, \frac{b}{2^{n-1}}, \frac{c}{2^{n-1}})}\\
\text{Case 2: } &\textstyle{(\frac{a}{2^{n-1}}, \frac{2^{n-1} - a}{2^{n-1}}, 0) + (\frac{1}{2}, 0, \frac{1}{2})(0, \frac{2^{n-1}-c}{2^{n-1}}, \frac{c}{2^{n-1}})}\\
\text{Case 3: } &\textstyle{(\frac{a}{2^{n-1}}, \frac{b}{2^{n-1}}, \frac{c - 2^{n-1}}{2^{n-1}}) + (\frac{1}{2}, 0, \frac{1}{2})(0, 0, 1)}
\end{align*}

\section{Realizing Binary $N$-state Distributions}

We are now ready to continue with our algorithm for $N$ states. Intuitively, we can describe the algorithm for $N$ states in the same way as for 3 states. We first find the smallest index $k$ for which $\sum_{i=0}^{k}p_i > \frac{1}{2}$. Then based on the index $k$, we can decompose our distribution in a way corresponding to the interval-block visualization; the only difference is that each interval can now have up to $N$ block-types instead of just 3.

\begin{figure}[!t]
\centering
\includegraphics[width=3.0in]{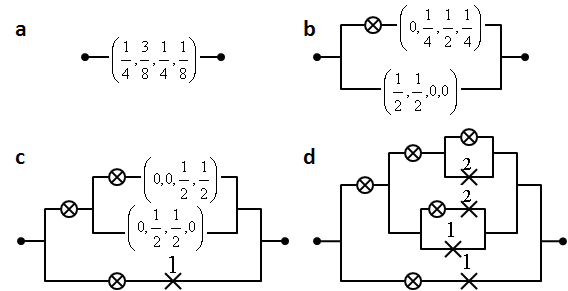}
\caption{\textbf{Realizing 4-state Distribution.} We use the $N$-state algorithm on the a 4-state distribution. A $\otimes$ symbol represents the pswitch $(\frac{1}{2}, 0, 0, \frac{1}{2})$. \textit{(a)} We want to realize $(\frac{1}{4}, \frac{3}{8}, \frac{1}{4}, \frac{1}{8})$. Initially, $p_0 + p_1 > \frac{1}{2}$. \textit{(b)} For the top distribution, $p_0 + p_1 + p_2 > \frac{1}{2}$. For the bottom distribution, $p_0 + p_1 > \frac{1}{2}$. \textit{(c)} For the top distribution, $p_0 + p_1 + p_2 + p_3 > \frac{1}{2}$. For the bottom distribution, $p_0 + p_1 + p_2 > \frac{1}{2}$. \textit{(d)} The final circuit.}
\label{algthm_n}
\end{figure}

\begin{theorem}[Binary N-state Distributions]
Using the switch set {\bf S} $ = \{\frac{1}{2}\}$ and the deterministic switches, we can realize all $N$-state distributions of the form $(\frac{x_0}{2^n}, \frac{x_1}{2^n}, ..., \frac{x_{N-1}}{2^n})$ with at most $f_{n, N} =$
\begin{align*}
\begin{cases} 
2^n-1, & n \leq \lceil\log_{2}N\rceil\\
2^{\lceil\log_{2}N\rceil}-1 + (N-1)(n-\lceil\log_{2}N\rceil), & n \geq \lceil\log_{2}N\rceil
\end{cases}
\end{align*}
switches, using the following recursive circuit construction (see Figure \ref{algthm_n}):

\begin{algorithm}[H]
\caption{Note that we are using $\frac{1}{2}$ as a shorthand for $(\frac{1}{2}, ..., \frac{1}{2})$ as explained in the introduction.}
\begin{algorithmic}
\IF {$p \in S$}
		\STATE return $p$
\ELSIF {$\frac{x_0}{2^n} > \frac{1}{2}$}
    \STATE return $\frac{1}{2}(\frac{x_0 - 2^{n-1}}{2^{n-1}}, \frac{x_1}{2^{n-1}}, ..., \frac{x_{N-1}}{2^{n-1}})$
\ELSIF {$\frac{x_0+x_1}{2^n} > \frac{1}{2}$}
    \STATE return $(\frac{x_0}{2^{n-1}}, \frac{2^{n-1} - x_0}{2^{n-1}}, ...) + \frac{1}{2}(0, \frac{x_0 + x_1 - 2^{n-1}}{2^{n-1}}, \frac{x_2}{2^{n-1}}, ...)$
\ELSIF {...}
    \STATE ...
\ELSIF {$\frac{x_0 + x_1 + ... + x_{N-1}}{2^n} > \frac{1}{2}$}
    \STATE return $(\frac{x_0}{2^{n-1}}, \frac{x_1}{2^{n-1}}, ..., \frac{x_{N-1} - 2^{n-1}}{2^{n-1}}) + \frac{1}{2}$
\ENDIF
\end{algorithmic}
\end{algorithm}
\end{theorem}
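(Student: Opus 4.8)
The plan is to verify two things: that the recursive construction halts and outputs a circuit realizing the target distribution, and that it never uses more than $f_{n,N}$ pswitches.

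For correctness I would first write down the $N$-state analogue of the 3-state sp composition rules (the lemma above), obtained by enumerating which ordered pairs $(i,j)$ give $\min(i,j)=k$ and which give $\max(i,j)=k$. The algorithm selects the least $k$ with $\sum_{i\le k}x_i/2^n>\tfrac12$, and this is exactly the guard of the $k$-th branch. For each branch I would check that the displayed decomposition obeys these composition rules — so it indeed realizes $(x_0/2^n,\dots,x_{N-1}/2^n)$ — and that every sub-distribution is a genuine probability vector: all entries are nonnegative precisely because $k$ is the \emph{first} prefix whose sum exceeds $2^{n-1}$ (so the truncated prefix sums stay on the correct side of $2^{n-1}$), and each rescaled vector sums to $1$. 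Every recursive call lowers $n$ by one, and a single-state (hence deterministic) distribution or the case $n=0$ is a base case returning a deterministic switch, so the recursion terminates. This also discharges the remark deferred in the 3-state proof: a sub-distribution supported on a block of $s$ consecutive states is processed by the same algorithm with the same $\tfrac12$-pswitch type, so the behaviour depends only on the block \emph{size} $s$, not on where the block sits.

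For the switch count, let $g_{n,N}$ denote the worst-case number of pswitches used on an $N$-state distribution with denominator $2^n$. A recursive call spends exactly one $\tfrac12$-pswitch and recurses on two level-$(n-1)$ sub-distributions: in the $k$-th branch one is confined to a block of $k+1$ states and the other to a block of $N-k$ states (the two extreme branches degenerate, with one side deterministic). With the shift-invariance above and the base cases $g_{0,\cdot}=0$, $g_{\cdot,1}=0$, this gives
\[
 g_{n,N}\ \le\ 1+\max_{1\le j\le N}\bigl(g_{n-1,j}+g_{n-1,N+1-j}\bigr).
\]
I would first note the crude bound $g_{n,N}\le 2^n-1$, which is an immediate induction on $n$ from the recurrence ($1+2(2^{n-1}-1)=2^n-1$); this already matches $f_{n,N}$ for $n\le L:=\lceil\log_2 N\rceil$. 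For $n\ge L$ the claimed value is $2^L-1+(N-1)(n-L)$, and I would obtain it by a second induction on $n$, anchored at $n=L$ by the crude bound and with each step having to raise the bound by at most $N-1$. Writing $f_{m,s}=I_s+(s-1)m$ for $m\ge\lceil\log_2 s\rceil$, where $I_s:=2^{\lceil\log_2 s\rceil}-1-(s-1)\lceil\log_2 s\rceil$, the choices $j\in\{1,N\}$ are trivial, and for $2\le j\le N-1$ both arguments are $<N$, so all the relevant $f$'s are in this linear regime, the terms linear in $m$ cancel, and what remains is the purely combinatorial inequality
\[
 I_a+I_b\ \le\ I_{a+b-1}+(a+b-3)\qquad(a,b\ge2).
\]
I would settle this by cases on whether $\lceil\log_2(a+b-1)\rceil$ equals $\max(\lceil\log_2 a\rceil,\lceil\log_2 b\rceil)$ or is one larger, each case collapsing to an estimate of the shape $d\le 2^d-1$.

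The step I expect to be the main obstacle is making the ``only the block size matters'' reduction airtight — the point the 3-state proof postponed to this section. Concretely, one must recheck that translating the active block of states changes neither the algorithm's branching nor the distribution realized by the output, i.e. redo the composition computation for a block that does not begin at state $0$, rather than invoking the symmetric special case. Once that is settled the recurrence bookkeeping is routine, apart from the combinatorial inequality on the $I_s$ and the usual care with boundary situations: very small $n$, arguments that are exact powers of $2$ (where $\lceil\log_2\rceil$ is flat rather than increasing), and the degenerate extreme branches of the algorithm.
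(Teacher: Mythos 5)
Your proposal is correct and its skeleton (correctness from the $N$-state composition rules plus termination, a ``block size is all that matters'' lemma, a worst-case recurrence, and a two-regime induction split at $L=\lceil\log_2 N\rceil$) matches the paper's. The genuine difference is in how the recurrence $g_{n,N}\le 1+\max_j(g_{n-1,j}+g_{n-1,N+1-j})$ is resolved in the regime $n\ge L$. The paper first proves that the maximum is attained at the balanced split $j=\lceil N/2\rceil$, by computing the discrete difference $\Delta_{n,N}=f_{n,N+1}-f_{n,N}$ and showing it is nonincreasing in $N$ (a concavity argument), and then evaluates the closed form at that single index, with separate floor/ceiling case analyses for $N$ even and odd. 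You instead bound \emph{every} split uniformly by the claimed closed form, which after cancelling the terms linear in $n$ reduces to the single combinatorial inequality $I_a+I_b\le I_{a+b-1}+(a+b-3)$ for $a,b\ge 2$; I checked this on several values (it holds with equality exactly at the balanced splits, consistent with the paper's identification of the maximizer) and your two-case analysis on $\lceil\log_2(a+b-1)\rceil$ is exhaustive since that quantity exceeds $\max(\lceil\log_2 a\rceil,\lceil\log_2 b\rceil)$ by at most one. Your route avoids having to identify the maximizing index at all, at the cost of one slightly fiddly inequality; the paper's route yields the extra information of where the worst case occurs. One small point in your favor: you correctly anchor the second induction at $n=L$ via the crude bound $2^n-1$, which guarantees that all subproblems at level $n-1\ge L\ge\lceil\log_2 j\rceil$ really are in the linear regime --- the boundary situation you rightly flag as needing care. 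Your worry about the active-state re-mapping is resolved in the paper exactly as you anticipate: an order-preserving relabeling of the $m$ active states commutes with $\max$ and $\min$, so only the count $m$ matters.
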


The algorithm's correctness can be confirmed from the composition rules for N-states. Before proving the complexity of the above algorithm, we state an intermediate lemma:
\begin{lemma}
Define an active state of a distribution as a state with non-zero probability. Given any distribution of the form $(\frac{x_0}{2^n}, \frac{x_1}{2^n}, ..., \frac{x_{N-1}}{2^n})$ with $m \leq N$ active states, the number of pswitches needed to realize the distribution is at most $f_{n, m}$.
\end{lemma}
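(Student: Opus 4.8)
The plan is to prove the lemma by strong induction on $n$, taking the lemma's statement itself as the induction invariant; the guiding intuition is that a zero coordinate simply does not count, so a distribution with $m$ active states ought to cost no more than a genuine $m$-state distribution, namely $f_{n,m}$. The real content is to extract from the algorithm the correct recursion for $f_{n,m}$. A first small observation: when the algorithm runs on a level-$n$ distribution with active set $S$, $|S|=m$, the crossover index $k$ (the smallest index with $\sum_{i=0}^{k}x_i>2^{n-1}$) always lies in $S$, since $\sum_{i\le k}x_i>2^{n-1}\ge\sum_{i<k}x_i$ forces $x_k>0$.

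Next I would go through the branches. In the two extreme branches, $k=0$ (return $\tfrac12(\tfrac{x_0-2^{n-1}}{2^{n-1}},\ldots)$) and $k=N-1$ (return $(\ldots)+\tfrac12$), the recursive sub-distribution lives at level $n-1$ and has at most $m$ active states, so by the induction hypothesis these branches cost at most $f_{n-1,m}+1$. For a middle branch, $1\le k\le N-2$, the algorithm returns $A+\tfrac12 B$ with $A=(\tfrac{x_0}{2^{n-1}},\ldots,\tfrac{x_{k-1}}{2^{n-1}},\tfrac{2^{n-1}-\sum_{i<k}x_i}{2^{n-1}},0,\ldots,0)$ and $B=(0,\ldots,0,\tfrac{\sum_{i\le k}x_i-2^{n-1}}{2^{n-1}},\tfrac{x_{k+1}}{2^{n-1}},\ldots,\tfrac{x_{N-1}}{2^{n-1}})$, both at level $n-1$. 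The key bookkeeping step is to check that the numbers $p,q$ of active states of $A,B$ satisfy $p+q\le m+1$: the active states of $A$ lie in $(S\cap\{0,\ldots,k-1\})\cup\{k\}$ and those of $B$ are $\{k\}\cup(S\cap\{k+1,\ldots,N-1\})$, and since $k$ is the only index the two sets can share, $p+q\le|S\cap\{0,\ldots,k-1\}|+1+1+|S\cap\{k+1,\ldots,N-1\}|=m+1$. The induction hypothesis then bounds this branch by $f_{n-1,p}+1+f_{n-1,q}$. Collecting the branches and using that $f_{n-1,\cdot}$ is nondecreasing in its second argument, the inductive step reduces to verifying
\[
f_{n,m}\ \ge\ \max\Big(f_{n-1,m}+1,\ \max_{\substack{p+q\le m+1\\ p,q\ge 1}}\big(f_{n-1,p}+f_{n-1,q}+1\big)\Big),
\]
with base cases $n\le\lceil\log_2 m\rceil$, where the bound is $2^n-1$ and the few small distributions are handled directly (using the $\tfrac12$-pswitch together with deterministic switches to realize the two-active-state distributions that appear).

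The main obstacle is this last verification: it is elementary but genuinely fiddly because $f_{n,m}$ is defined piecewise with a $\lceil\log_2 m\rceil$ crossover, so for each candidate split $(p,q)$ one must track which regime $p$ and $q$ fall into at level $n-1$. I expect the crux to be that the maximum on the right is attained at the balanced split $p\approx q\approx\tfrac{m+1}{2}$, not at an extreme split: for the balanced split the two halves of the piecewise formula telescope — $2^{n-1}-1$ doubles to $2^n-1$ in the first regime, and, using $\lceil\log_2\tfrac{m+1}{2}\rceil=\lceil\log_2(m+1)\rceil-1$, the two copies of $2^{\lceil\log_2 p\rceil}-1+(p-1)(n-1-\lceil\log_2 p\rceil)$ sum to exactly $2^{\lceil\log_2 m\rceil}-1+(m-1)(n-\lceil\log_2 m\rceil)$ in the second — so the recursion is tight and $f_{n,m}$ is precisely its solution. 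One also checks that $f_{n-1,m}+1$ and the extreme and intermediate splits never exceed this, and that $f_{n,m}$ is monotone in $m$ across the regime boundary; the power-of-two edge cases for $\lceil\log_2\cdot\rceil$ need a separate line or two. Finally, the Theorem's complexity bound is just the special case $m=N$.
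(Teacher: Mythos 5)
Your argument is correct in outline, but it takes a genuinely different -- and much heavier -- route than the paper. The paper's proof of this lemma is a two-line reduction: order-preservingly re-map the $m$ active states onto $\{0,1,\ldots,m-1\}$; since $\min$ and $\max$ commute with any order-preserving relabeling, a circuit realizing the re-mapped $m$-state distribution (which costs at most $f_{n,m}$ by the $m$-state case of the theorem) realizes the original distribution after the labels on the base switches are reversed. All of the recursion-versus-closed-form bookkeeping is then confined to the theorem's own complexity proof, where the lemma is invoked to justify the recurrence $f_{n,N}=\max_i\bigl(f_{n-1,i}+f_{n-1,N-i+1}\bigr)+1$. You instead prove the lemma directly by induction on $n$, tracking how the active-state count splits across the two sub-distributions ($p+q\le m+1$, with the crossover index $k$ the only shared active state -- this part of your accounting is exactly right and is in fact the same counting the paper uses to derive its recurrence), and then verifying the closed form against the recursion. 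That verification, which you flag as the main obstacle and only sketch (balanced split is extremal, telescoping in each regime, parity cases for $\lceil\log_2\cdot\rceil$), is precisely Parts 1--3 of the paper's proof of the $N$-state theorem; so your proof effectively absorbs the theorem's entire complexity analysis into the lemma and recovers the theorem as the special case $m=N$. What your approach buys is logical self-containment: the paper's presentation has a mild circular flavor (the lemma's proof appeals to the theorem's bound for $m$-state distributions while the theorem's proof appeals to the lemma), which a joint induction like yours resolves cleanly. What it costs is that a statement the paper dispatches with a relabeling argument becomes the repository for all the hard arithmetic. If you want the short proof, the one idea you are missing is the order-embedding: a zero-probability state can be deleted outright because $\min$ and $\max$ restricted to the active states are unchanged.
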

\begin{proof}
Re-map the $m$ active states to $0, 1, ..., m-1$ in a way that preserves state order. Since all operations under $\max$ and $\min$ are preserved, any distribution on the original $m$ active states can be constructed by constructing a distribution on the mapped states and reversing the mapping on the base switches. Therefore, the number of pswitches required cannot be greater than $f_{n, m}$. This also gives us insight into $f_{n, N}$ for the region $n \leq \lceil\log_{2}N\rceil$. In this region, the number of active states is limited by $n$, resulting in states with zero probability. Then, it makes sense that the complexity only depends on $n$.
\end{proof}

\begin{proof}
We will now prove the complexity of pswitches required. The following recursive relations hold for $f_{n,N}$,
\begin{align}
&f_{n, 1} &&= 0\label{complexity_3}\\
&f_{0, N} &&= 0\label{complexity_4}\\
&f_{n, N} &&= \max(f_{n-1, 1} + f_{n-1, N} + 1,\label{complexity_5}\\
&         &&f_{n-1, 2} + f_{n-1, N-1} +1,\notag\\
&         &&...,\notag\\
&         &&f_{n-1, N} + f_{n-1, 1} + 1)\notag\\
&         &&= \max_{1 \leq i \leq \lceil\frac{N}{2}\rceil}(f_{n-1, i} + f_{n-1, N-i+1}) + 1\label{complexity_eqn}
\end{align}

where (\ref{complexity_3}) and (\ref{complexity_4}) are trivial since both define deterministic switches and (\ref{complexity_5}) comes from an application of the above lemma to the different cases of the algorithm. If we fill out a table based on the recursive relations (see Table I), we find that the values match our closed form hypothesis. To prove it, we will:
\begin{enumerate}
\item Demonstrate that (\ref{complexity_eqn}) is maximized at $i = \lceil\frac{N}{2}\rceil$.
\item Use induction to prove $f_{n, N}$ for $n \leq \lceil\log_{2}N\rceil$.
\item Use induction to prove $f_{n, N}$ for $n \geq \lceil\log_{2}N\rceil$.
\end{enumerate}

\begin{table}
\begin{tabular}{c|ccccccccccc}
    & 0 & 1 & 2 & 3 & 4  & 5  & 6  & 7  & 8  & 9  & 10\\
\hline			
  1 & {\bf 0} & 0 & 0 & 0 & 0  & 0  & 0  & 0  & 0  & 0  & 0\\
  2 & 0 & {\bf 1} & 2 & 3 & 4  & 5  & 6  & 7  & 8  & 9  & 10\\
  3 & 0 & 1 & {\bf 3} & 5 & 7  & 9  & 11 & 13 & 15 & 17 & 19\\
  4 & 0 & 1 & {\bf 3} & 6 & 9  & 12 & 15 & 18 & 21 & 24 & 27\\
  5 & 0 & 1 & 3 & {\bf 7} & 11 & 15 & 19 & 23 & 27 & 31 & 35\\
  6 & 0 & 1 & 3 & {\bf 7} & 12 & 17 & 22 & 27 & 32 & 37 & 42\\
  7 & 0 & 1 & 3 & {\bf 7} & 13 & 19 & 25 & 31 & 37 & 43 & 49\\
  8 & 0 & 1 & 3 & {\bf 7} & 14 & 21 & 28 & 35 & 42 & 49 & 56\\
  9 & 0 & 1 & 3 & 7 & {\bf 15} & 23 & 31 & 39 & 47 & 55 & 63\\
\end{tabular}
\label{complexity_table}
\caption{(row, column) values correspond to ($N$, $n$). The $n = \lceil\log_{2}N\rceil$ border is bolded.}
\end{table}

\noindent{\it Part 1:} Define $\Delta_{n,N} = f_{n, N+1} - f_{n, N}$.

\noindent If $n \geq \lceil\log_{2}(N+1)\rceil$, 
\begin{align*}
\Delta_{n, N} &= [(n-\lceil\log_{2}(N+1)\rceil)(N+1-1) + 2^{\lceil\log_{2}(N+1)\rceil}]\\
              &\ \ - [(n-\lceil\log_{2}N\rceil)(N-1) + 2^{\lceil\log_{2}N\rceil}]
\end{align*}
For $N$ a multiple of 2, $\lceil\log_{2}(N+1)\rceil = \lceil\log_{2}N\rceil+1$, so
\begin{align*}
\Delta_{n, N} &= [(n-\lceil\log_{2}N\rceil-1)N + 2^{\lceil\log_{2}N\rceil+1}]\\
              &\ \ - [(n-\lceil\log_{2}N\rceil)(N-1) + 2^{\lceil\log_{2}N\rceil}]\\
              &= -N + n - \lceil\log_{2}N\rceil + N\\
              &= n - \lceil\log_{2}N\rceil
\end{align*}
For $N$ not a multiple of 2, $\lceil\log_{2}(N+1)\rceil = \lceil\log_{2}N\rceil$, so
\begin{align*}
\Delta_{n, N} &= n - \lceil\log_{2}N\rceil
\end{align*}

\noindent If $n \leq \lceil\log_{2}N\rceil$,
\begin{align*}
\Delta_{n, N} &= 2^n - 2^n = 0
\end{align*}

\noindent So we have,
\begin{align*}
\Delta_{n, N} &=
\begin{cases} 
n-\lceil\log_{2}N\rceil, &n \geq \lceil\log_{2}(N+1)\rceil\\
0, &n \leq \lceil\log_{2}N\rceil
\end{cases}
\end{align*}
\noindent Since $\Delta_{n, N}$ is nonincreasing for a fixed $n$ and increasing $N$, the maximum of $f_{n-1, i} + f_{n-1, N-i+1} +1$ is at the highest $i$ for which $i \leq N-i+1$. Otherwise, we could increment $i$, increasing the entire quantity by $\Delta_{n-1, i} - \Delta_{n-1, N-i} \geq 0$. Therefore the maximizing index is $i = \lceil\frac{N}{2}\rceil$.\\

\noindent{\it Part 2:} Prove $f_{n,N}$ for $n \leq \lceil\log_{2}N\rceil$.

\noindent Base Case: $f(0, N) = 0 = 2^0 - 1$.

\noindent Inductive Step: Assume $f_{k-1, N'} = 2^{k-1}-1$ $\forall N' \leq N$ and $k-1 \leq \lceil\log_{2}N'\rceil$. Then for $k \leq \lceil\log_{2}N\rceil$,
\begin{align} 
f_{k, N} &= f_{k-1, \lceil\frac{N}{2}\rceil} + f_{k-1, N-\lceil\frac{N}{2}\rceil+1} + 1\notag\\
         &= f_{k-1, \lceil\frac{N}{2}\rceil} + f_{k-1, \lfloor\frac{N}{2}\rfloor+1} + 1\notag\\
         &= 2(2^{k-1}-1)+1\label{complexity_8}\\
         &= 2^k+1\notag
\end{align}
\noindent where (\ref{complexity_8}) is true because
\begin{align*}
k \leq \lceil\log_{2}N\rceil &\implies k-1 \leq \lceil\log_{2}\frac{N}{2}\rceil \leq \lceil\log_{2}\lceil\frac{N}{2}\rceil\rceil\\
                             &\implies k-1 \leq \lceil\log_{2}(\lfloor\frac{N}{2}\rfloor+1)\rceil
\end{align*}

\noindent{\it Part 3:} Prove $f_{n,N}$ for $n \geq \lceil\log_{2}N\rceil$.

\noindent Base Cases: 
\begin{align*}
f(n, 1) &= 0 = (1-1)(n-\lceil\log_{2}1\rceil) + 2^{\lceil\log_{2}1\rceil}-1.
\end{align*}
\begin{align*}
f(\lceil\log_{2}N\rceil, N) = & 2^{\lceil\log_{2}N\rceil}-1 \text{ (from Part 1)}\\
                            = & (N-1)(\lceil\log_{2}N\rceil - \lceil\log_{2}N\rceil)\\
                              & + 2^{\lceil\log_{2}N\rceil}-1
\end{align*}

\noindent Inductive Step: Assume $f_{k-1, N'} = (N-1)(k-1-\lceil\log_{2}N'\rceil) + 2^{\lceil\log_{2}N'\rceil}-1$ $\forall N' \leq N$ and $k-1 \geq \lceil\log_{2}N'\rceil$. Then,
\begin{align*}
f_{k, N} &= f_{k-1, \lceil\frac{N}{2}\rceil} + f_{k-1, N-\lceil\frac{N}{2}\rceil+1} + 1\\
         &= f_{k-1, \lceil\frac{N}{2}\rceil} + f_{k-1, \lfloor\frac{N}{2}\rfloor+1} + 1\\
         &= (k-1-\lceil\log_{2}\lceil\frac{N}{2}\rceil\rceil)(\lceil\frac{N}{2}\rceil-1)+ 2^{\lceil\log_{2}\lceil\frac{N}{2}\rceil\rceil} -1\\
         &\ \ + (k-1-\lceil\log_{2}(\lfloor\frac{N}{2}\rfloor+1)\rceil)(\lfloor\frac{N}{2}\rfloor+1-1)\\
         &\ \ + 2^{\lceil\log_{2}(\lfloor\frac{N}{2}\rfloor+1)\rceil}-1+1\\
         &= (k-1-\lceil\log_{2}\frac{N}{2}\rceil)(\lceil\frac{N}{2}\rceil-1)+ 2^{\lceil\log_{2}\frac{N}{2}\rceil}-1\\
         &\ \ + (k-1-\lceil\log_{2}(\lfloor\frac{N}{2}\rfloor+1)\rceil)(\lfloor\frac{N}{2}\rfloor)+ 2^{\lceil\log_{2}(\lfloor\frac{N}{2}\rfloor+1)\rceil}
\end{align*}
\noindent For $N$ not a multiple of 2, $\lceil\log_{2}(\lfloor\frac{N}{2}\rfloor+1)\rceil = \lceil\log_{2}\frac{N}{2}\rceil$. In this case,
\begin{align*}
f_{k, N} &= (k-1-\lceil\log_{2}\frac{N}{2}\rceil)(\lceil\frac{N}{2}\rceil-1)+ 2^{\lceil\log_{2}\frac{N}{2}\rceil}\\
         &\ \ + (k-1-\lceil\log_{2}\frac{N}{2}\rceil)(\lfloor\frac{N}{2}\rfloor)+ 2^{\lceil\log_{2}\frac{N}{2}\rceil}-1\\
         &= (k-1-(\lceil(\log_{2}N)-1\rceil)(\lceil\frac{N}{2}\rceil + \lfloor\frac{N}{2}\rfloor - 1)\\
         &\ \ + 2^{\lceil\log_{2}\frac{N}{2}\rceil+1}-1\\
         &= (k-\lceil\log_{2}N\rceil)(N-1) + 2^{\lceil\log_{2}N\rceil}-1
\end{align*}
\noindent For $N$ a multiple of 2, $\lceil\log_{2}(\lfloor\frac{N}{2}\rfloor+1)\rceil = \lceil\log_{2}\frac{N}{2}\rceil+1$. In this case,
\begin{align*}
f_{k, N} &= (k-1-\lceil\log_{2}\frac{N}{2}\rceil)(\lceil\frac{N}{2}\rceil-1)+ 2^{\lceil\log_{2}\frac{N}{2}\rceil}-1\\
         &\ \ + (k-1-(\lceil\log_{2}\frac{N}{2}\rceil+1))(\lfloor\frac{N}{2}\rfloor)+ 2^{\lceil\log_{2}\frac{N}{2}\rceil+1}\\
         &= (k-1-\lceil\log_{2}\frac{N}{2}\rceil)(N-1) - \lfloor\frac{N}{2}\rfloor\\
         &\ \ + 2^{\lceil\log_{2}\frac{N}{2}\rceil}+2^{\lceil\log_{2}\frac{N}{2}\rceil+1}-1\\
         &= (k-1-(\lceil\log_{2}N\rceil-1))(N-1) - \frac{N}{2} + \frac{N}{2}\\
         &\ \ + 2^{\lceil(\log_{2}\frac{N}{2})+1\rceil}-1\\
         &= (k-\lceil\log_{2}N\rceil)(N-1) + 2^{\lceil\log_{2}N\rceil}-1
\end{align*}
\end{proof}

\section{Realizing Rational Distributions}

Given the previous results, a natural question arises: Can we also generate probability distributions over non-binary fractions? More generally, can we generate distributions for any rational distribution? This question was studied for the 2-state case by Wilhelm, Zhou, and Bruck\cite{WilhelmBruck}\cite{ZhouBruck} for distributions of the form $(\frac{a}{q^n}, \frac{b}{q^n})$. In their work, they demonstrated algorithms for realizing these distributions for any $q$ that is a multiple of 2 or 3. In addition, they proved that for any prime $q$, no algorithm exists that can generate all $(\frac{a}{q^n}, \frac{b}{q^n})$ using the switch set {\bf S} $ = \{\frac{1}{q}, \frac{2}{q}, ..., \frac{q-1}{q}\}$.

We approach the question of realizing rational $N$ state distributions from two angles and demonstrate algorithms for each. The key for these algorithms lie in a generalization of the Block-Interval construction of binary distributions.

\begin{lemma}[Block-Interval Construction of Distributions]
Let $p = (p_0, p_1, ..., p_{N-1})$ be any distribution. Then given any `cut', represented by the distribution $(q, 0, ..., 0, 1-q)$, and the index of the cut $k$ satisfying $q \geq \sum_{i=0}^{k-1}p_{i}$, $q \leq \sum_{i=0}^{k}p_{i}$,
\begin{align*}
&(p_0, p_1, ..., p_{N-1})\\
&=(\frac{p_0}{q}, ..., \frac{p_{k-1}}{q}, \frac{q - \sum_{i=0}^{k-1}p_{i}}{q}, 0, ...)\\
&\ \ \ + [(q,..., 1-q)(..., 0, \frac{\sum_{i=0}^{k}p_{i}-q}{1-q}, \frac{p_{k+1}}{1-q}, ..., \frac{p_{N-1}}{1-q})]\\
&=[(\frac{p_0}{q}, ..., \frac{p_{k-1}}{q}, \frac{q - \sum_{i=0}^{k-1}p_{i}}{q}, 0, ...) + (q, ..., 1-q)]\\
&\ \ \ *(..., 0, \frac{\sum_{i=0}^{k}p_{i}-q}{1-q}, \frac{p_{k+1}}{1-q}, ..., \frac{p_{N-1}}{1-q})
\end{align*}
\end{lemma}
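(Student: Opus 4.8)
The plan is to read the ``cut'' $(q,0,\dots,0,1-q)$ as an independent random choice between the two extreme states $0$ and $N-1$, and to exploit that state $0$ is the identity element for $\max$ (parallel) and the absorbing element for $\min$ (series), while state $N-1$ plays the reversed roles. Write $A=(\tfrac{p_0}{q},\dots,\tfrac{p_{k-1}}{q},\tfrac{q-\sum_{i=0}^{k-1}p_i}{q},0,\dots,0)$ and $B=(0,\dots,0,\tfrac{\sum_{i=0}^{k}p_i-q}{1-q},\tfrac{p_{k+1}}{1-q},\dots,\tfrac{p_{N-1}}{1-q})$ for the two non-cut factors. First I would check, using the hypotheses $\sum_{i=0}^{k-1}p_i\le q\le\sum_{i=0}^{k}p_i$, that $A$ and $B$ are genuine distributions: every entry is manifestly nonnegative except the $k$-th one of each, whose nonnegativity is exactly the two given inequalities, and each vector sums telescopically to $1$.

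Next I would condition on the value of the cut. In the first expression, with probability $q$ the cut is in state $0$, so the series block $(\mathrm{cut})\!\cdot\!B=\min(0,B)=0$ deterministically and the whole circuit equals $\max(A,0)=A$; with probability $1-q$ the cut is in state $N-1$, so $(\mathrm{cut})\!\cdot\!B=\min(N-1,B)=B$ and the circuit equals $\max(A,B)$. By the law of total probability, and using independence of the cut from $A$ and $B$, the first expression realizes the mixture $q\cdot A+(1-q)\cdot\max(A,B)$. Symmetrically, the second expression realizes $q\cdot\min(A,B)+(1-q)\cdot B$. The key structural observation is that $A$ is supported on $\{0,\dots,k\}$ and $B$ on $\{k,\dots,N-1\}$, so for independent draws $a\sim A$ and $b\sim B$ we always have $a\le k\le b$; hence $\min(a,b)=a$ and $\max(a,b)=b$ with probability one. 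Both expressions therefore collapse to the \emph{same} two-component mixture $q\cdot A+(1-q)\cdot B$.

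The finish is a componentwise check that $q\cdot A+(1-q)\cdot B=p$, which is three one-line cancellations matching the block-interval picture (the block at index $k$ being the one that straddles the cut point $q$): for $j<k$ the $j$-th entry is $q\cdot\tfrac{p_j}{q}+0=p_j$; for $j>k$ it is $0+(1-q)\cdot\tfrac{p_j}{1-q}=p_j$; and for $j=k$ it is $\big(q-\sum_{i=0}^{k-1}p_i\big)+\big(\sum_{i=0}^{k}p_i-q\big)=p_k$. I do not expect a substantive obstacle: the only care required is (i) making the ``conditioning collapses the circuit'' step precise via independence together with the identity/absorbing roles of the extreme states, and (ii) handling the degenerate endpoints $q=0$ or $q=1$ (which force $k=0$ or $k=N-1$), where one of $A,B$ becomes a deterministic switch and the corresponding division is vacuous; there the same mixture argument applies verbatim with that branch of the cut carrying probability zero, so one may as well assume $0<q<1$.
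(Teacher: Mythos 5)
Your proposal is correct and follows the same route the paper intends: the paper's proof is the single sentence ``follows directly from the max-min composition rules,'' and your argument---conditioning on the cut, using that state $0$ is the identity for $\max$ and absorbing for $\min$ (and dually for $N-1$), noting that the supports $\{0,\dots,k\}$ and $\{k,\dots,N-1\}$ force $\min(a,b)=a$ and $\max(a,b)=b$, and then verifying $qA+(1-q)B=p$ componentwise---is exactly the omitted verification, matching the block-interval picture the paper describes after the lemma. No gaps; your handling of the validity of $A$ and $B$ and of the degenerate endpoints $q\in\{0,1\}$ is appropriate.
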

\begin{proof}
The equation follows directly from the max-min composition rules.
\end{proof}
The name of the lemma comes from an intuitive representation of this equation. We can let probability distributions be represented by an interval covered with blocks. Each block corresponds to a different state of the distribution (a block can only be on the rightside of a block whose state it is greater than) and the ratio of the block length to the interval length represents the probability of the state. Then any cut along the interval will produce two separate block-intervals corresponding to exactly the distributions calculated above. This means that for any equation of this form, the block-interval cut will hold. Conversely, any cut of a block-interval corresponds to an equation with the appropriate distributions substituted. (see Figure \ref{visualization_rat}).

\begin{figure}[!t]
\centering
\includegraphics[width=3.0in]{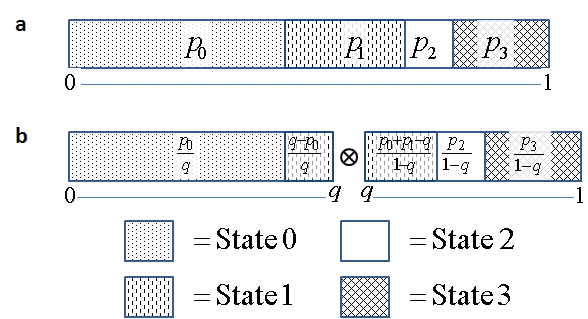}
\caption{\textbf{Block-Interval visualization (rational).} We want to realize $(p_0, p_1, p_2, p_3)$. A $\otimes$ symbol marks where the cut occurs, which corresponds to the use of one pswitch. \textit{(a)} We start with one interval split into blocks corresponding to the pswitch probabilities. \textit{(b)} When we cut with pswitch $(q, ..., 1-q)$, the interval is separated into 2 intervals: $[0, q]$ and $[q, 1]$.}
\label{visualization_rat}
\end{figure}

\subsection{State Reduction}

Given the results already proved on two state distributions, one natural way to tackle $N$ states is to first reduce the given $N$-state distribution into compositions of 2-state distributions. Then, if algorithms already exist for those 2-state forms, we are done.

\begin{theorem}[State Reduction Algorithm]
Using the switch set {\bf S} $ = \{\frac{1}{2}\}$, we can reduce any N-state distribution of the form $(\frac{x_0}{q}, \frac{x_1}{q}, ..., \frac{x_{N-1}}{q})$ into at most $N-1$ two-state distributions of the form $(..., \frac{x_i}{q}, ..., \frac{x_j}{q}, ...)$ using at most $f_{\log_{2}q, N} = (N-1)(\log_{2}q - \lceil\log_{2}N\rceil) + 2^{\lceil\log_{2}N\rceil}-1$ switches with the following algorithm - for an example, see Figure \ref{algthm}a:
\begin{algorithm}[H]
\begin{algorithmic}
\IF {$p \text{ is a 2 state distribution}$}
		\STATE return $p$
\ELSIF {$\frac{x_0}{q} > \frac{1}{2}$}
    \STATE return $\frac{1}{2}(\frac{2x_0 - q}{q}, \frac{2x_1}{q}, ..., \frac{2x_{N-1}}{q})$
\ELSIF {$\frac{x_0+x_1}{q} > \frac{1}{2}$}
    \STATE return $(\frac{2x_0}{q}, \frac{q - 2x_0}{q}, ...) + \frac{1}{2}(0, \frac{2x_0 + 2x_1 - q}{q}, \frac{2x_2}{q}, ...)$
\ELSIF {...}
    \STATE ...
\ELSIF {$\frac{x_0 + x_1 + ... + x_{N-1}}{q} > \frac{1}{2}$}
    \STATE return $(\frac{2x_0}{q}, \frac{2x_1}{q}, ..., \frac{2x_{N-1} - q}{q}) + \frac{1}{2}$
\ENDIF
\end{algorithmic}
\end{algorithm}
\end{theorem}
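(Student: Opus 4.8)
The plan is to argue in two parts, mirroring the structure of the earlier binary-distribution theorems: first correctness (the algorithm terminates and outputs a valid reduction into two-state distributions whose composition equals the target), then the switch count. For correctness, I would invoke the Block-Interval Construction Lemma directly. At each recursive step we have a distribution $(\frac{x_0}{q},\ldots,\frac{x_{N-1}}{q})$, and we locate the smallest index $k$ with $\sum_{i=0}^{k} x_i/q > \frac{1}{2}$; this $k$ selects which of the $N$ cases fires. The cut used is always $(\frac{1}{2},0,\ldots,0,\frac{1}{2})$, i.e. $q=\frac{1}{2}$ in the lemma, and the index-$k$ condition is exactly the hypothesis $\frac12 \ge \sum_{i=0}^{k-1} p_i$, $\frac12 \le \sum_{i=0}^{k} p_i$ required by the lemma. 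Substituting $q=\frac12$ into the lemma's two summands and clearing denominators (multiplying numerators and denominators by $2$) reproduces precisely the two pieces returned by case $k$ of the algorithm; so each step is a valid decomposition $p = A + \frac12 B$ (or the dual series form), where $A$ and $B$ are again distributions of the form $(\ldots,\frac{x_i}{q},\ldots)$. I would note that the doubling $\frac{x_i}{q}\mapsto \frac{2x_i}{q}$ keeps the denominator $q$ fixed while halving the "interval length," matching the Block-Interval picture.

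The key structural observation for termination is that each recursive call sends at least one state's probability to zero (the state straddling the cut loses its contribution on one side), so the number of active states strictly decreases in at least one of the two children, and more carefully: every call either returns immediately (when $p$ has become a two-state distribution) or splits into two subproblems each with strictly fewer active states than the parent. Hence after at most $N-1$ splits every leaf is a two-state distribution, giving the bound of "at most $N-1$ two-state distributions" in the statement. I would make this precise by induction on the number of active states $m$: a distribution with $m$ active states produces at most $m-1$ two-state leaves, with the base case $m \le 2$ immediate.

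For the switch count, the idea is that this algorithm is structurally the same recursion as the Binary $N$-state algorithm, but run until the distributions become two-state rather than until they become deterministic — in other words, it is the earlier algorithm "stopped early." Every use of the $(\frac12,0,\ldots,0,\frac12)$ pswitch is one switch, and the recursion tree has exactly the same branching as before down to the point where a node has $\le 2$ active states; below that point the earlier algorithm would keep going (for $\log_2 q$ total levels of halving, since the denominators are powers of $q$ and we are reducing to denominator... here I would reconcile with $f_{\log_2 q,\,N}$). Concretely, I expect to reuse the recurrence $f_{n,N} = \max_{1\le i \le \lceil N/2\rceil}(f_{n-1,i} + f_{n-1,N-i+1}) + 1$ from Section IV: the state-reduction recursion obeys the same recurrence with $n = \log_2 q$, because the Block-Interval cut at $q=\frac12$ partitions the $N$ states of the parent into an $i$-active-state child and an $(N-i+1)$-active-state child (the cut state is shared), and the worst case is the balanced split $i = \lceil N/2\rceil$ by Part 1 of the Section IV proof. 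Plugging $n = \log_2 q$ into the closed form $f_{n,N} = (N-1)(n - \lceil\log_2 N\rceil) + 2^{\lceil\log_2 N\rceil}-1$ for $n \ge \lceil\log_2 N\rceil$ gives exactly the claimed bound; I would remark that $\log_2 q \ge \lceil\log_2 N\rceil$ must be assumed (or is implied by the distribution being genuinely $N$-valued with denominator $q$).

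The main obstacle I anticipate is not the correctness argument — which is a fairly mechanical appeal to the Block-Interval Lemma plus an active-state induction — but pinning down the switch-count bookkeeping cleanly: namely justifying rigorously that the state-reduction recursion tree is "the same" as the binary algorithm's tree truncated at two-state nodes, so that the switch count is governed by the identical recurrence $f_{n,N}$ with $n=\log_2 q$. One has to be careful that the quantity being recursed on is the number of active states (handled by the intermediate lemma already proved), that halving the interval corresponds to decrementing the first index of $f$, and that stopping at $m=2$ active states rather than $m=1$ is exactly what distinguishes this count from a full construction — i.e. one must check the recursion indeed bottoms out at the level where $f_{\cdot,2}$ or $f_{\cdot,1}$ would be invoked, consistent with the table in Section IV. Once that correspondence is set up, the closed-form evaluation is the same induction (Parts 2 and 3) already carried out, so I would cite it rather than redo it.
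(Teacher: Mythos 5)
Your correctness step (appealing to the Block--Interval Lemma with a cut at $\frac{1}{2}$) and your complexity step (observing the recursion is the binary $N$-state algorithm's recurrence with $n=\log_2 q$, then citing the closed form) both match the paper's proof. The gap is in your termination argument. You claim that ``every call either returns immediately or splits into two subproblems each with strictly fewer active states than the parent,'' and conclude termination after at most $N-1$ splits. This is false: in case 1 of the algorithm ($\frac{x_0}{q}>\frac12$) the cut lands inside the first block, so one child is the deterministic switch $(1,0,\ldots,0)$ and the other child is $(\frac{2x_0-q}{q},\frac{2x_1}{q},\ldots)$, which can have exactly as many active states as the parent (symmetrically for the last case). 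Concretely, for $(\frac{1}{1024},\frac{1}{1024},\frac{1022}{1024})$ the algorithm repeatedly fires the last case, producing a three-active-state child for about $\log_2 1024 = 10$ consecutive rounds, far exceeding $N-1=2$. So an induction on the number of active states cannot establish termination, and the recursion depth is not $N-1$ but roughly $\log_2 q$. (Your separate claim that the number of two-state \emph{leaves} is at most $N-1$ is true --- each such leaf interval straddles one of the at most $N-1$ boundaries between consecutive active blocks --- but that is a statement about the leaves, not about the depth, and it does not bound the number of rounds.)

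The paper's termination argument is the one you need: after $k$ rounds every interval in the block-interval picture has length $2^{-k}$; once $2^{-k} < \frac{1}{q}$, no interval can properly contain a block, since a block strictly inside an interval of length less than $\frac{1}{q}$ would have probability strictly between $0$ and $\frac{1}{q}$, impossible when all probabilities are integer multiples of $\frac{1}{q}$. Hence every interval then meets at most two blocks, i.e.\ every leaf is a two-state distribution, and the algorithm stops after about $\log_2 q$ rounds --- which is also exactly where the parameter $n=\log_2 q$ in your (otherwise correct) complexity bookkeeping comes from. Ironically, the part you flagged as the anticipated obstacle (matching the recursion tree to $f_{n,N}$) is handled essentially as you describe and as the paper does; it is the part you called ``fairly mechanical'' that needs the denominator argument rather than the active-state one.
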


\begin{proof}
To prove this theorem, we need to show that 1) the breakdown of probabilities in each round is valid, 2) the algorithm will eventually terminate, and 3) the construction will have used $f_{\log_{2}q, N}$ pswitches, where $f_{n, N}$ is the complexity defined for binary $N$-state distributions. 

Notice that choosing $q = 2$ gives us the exact same algorithm as the binary $N$-state algorithm (except for the terminate clause). As with the binary case, Part 1 of the proof follows directly from the max-min composition rules.

We will now argue that the algorithm terminates. Let $k$ be the number of `rounds' the algorithm takes. Using the block-interval representation, we see that after each round, any interval will either have $\leq 2$ blocks (states) or the interval size will be $\frac{1}{2^k}$. Then when the interval size is less than $\frac{1}{q}$, each interval cannot have more than 3 blocks; otherwise, the middle block will be less than $\frac{1}{q}$, which is not possible except for a zero probability. Therefore, we have proved that the algorithm will eventually terminate.

To prove the complexity, we need to calculate the number of `rounds' before termination of the algorithm. This will happen at,
\begin{align*}
I &= \frac{1}{2^k} \leq \frac{1}{q}\\
  &\implies 2^k \geq q\\
  &\implies k \geq \log_{2}q
\end{align*}
Since the algorithm is identical to the binary $N$-state distribution, we can use the same complexity bound. In this case, $n$ corresponds to the $k$-th round when the algorithm terminates. Therefore, we use at most $(N-1)(\log_{2}q - \lceil\log_{2}N\rceil) + 2^{\lceil\log_{2}N\rceil}-1$ pswitches.
\end{proof}

\begin{figure}[!t]
\centering
\includegraphics[width=3.0in]{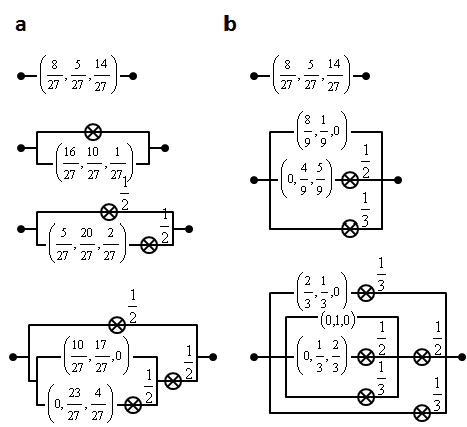}
\caption{\textbf{Algorithm Illustrations.} \textit{(a)} An example of the state recursion algorithm. \textit{(b)} An example of the denominator recursion algorithm.}
\label{algthm}
\end{figure}

\subsection{Denominator Reduction}

The second approach to realizing rational $N$-state distributions is to use a different switch set to directly reduces the power in the denominator. The intuition comes from a generalization of the block-interval construction. Rather than just cutting it at one point, i.e. at $\frac{1}{2}$ for realizing binary distributions, we can cut it at the points $\frac{1}{q}, \frac{2}{q}, ..., \text{ and }\frac{q-1}{q}$ to get $q$ equally sized intervals.

\begin{theorem}[Denominator Reduction Algorithm]
Using {\bf S} $= \{\frac{1}{2}, \frac{1}{3}, \frac{1}{4}, ..., \frac{1}{q}\}$ and the deterministic switches, we can realize any distribution $(\frac{x_0}{q^n}, \frac{x_1}{q^n}, ..., \frac{x_{N-1}}{q^n})$ using at most $(N-1)(q-1)(n-\lceil\log_{q}N\rceil) + q^{\lceil\log_{q}N\rceil}-1$ switches with the following algorithm - for an example, see Figure \ref{algthm}b:
\begin{algorithm}[H]
\begin{algorithmic}
\STATE Define the following quantities for $j = 0, 1, 2, ..., q$.
\STATE $k_j = $ the smallest index at which $\sum_{i=0}^{k_j}p_i > \frac{j}{q}$
\STATE $L_j = \frac{\sum_{i=0}^{k_{j-1}}x_i-(q-1)q^{n-1}}{q^{n-1}}$
\STATE $R_j = \frac{q^{n-1} - \sum_{i=0}^{k_{j}-1}x_i}{q^{n-1}}$
\IF {$p \in S$}
		\STATE return $p$
\ELSE 
    \STATE return $(\frac{x_0}{q^{n-1}}, ..., \frac{x_{k_1 - 1}}{q^{n-1}}, R_0, 0, ...)$
    \STATE  $ + \frac{1}{2}(..., 0, L_2, \frac{x_{k_1+1}}{q^{n-1}}, ..., \frac{x_{k_2-1}}{q^{n-1}}, R_2, 0, ...)$ 
    \STATE  $ + ...$
    \STATE  $ + \frac{1}{q-1}(..., 0, L_{q-1}, \frac{x_{k_{q-2}+1}}{q^{n-1}}, ..., \frac{x_{k_{q-1}-1}}{q^{n-1}}, R_{q-1}, 0, ...)$
    \STATE  $ + \frac{1}{q}(..., 0, L_q, \frac{x_{k_{q-1}+1}}{q^{n-1}}, ..., \frac{x_{N-1}}{q^{n-1}})$
\ENDIF
\end{algorithmic}
\end{algorithm}
\end{theorem}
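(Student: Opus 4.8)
The plan is to reuse, almost verbatim, the three‑part strategy of the binary $N$-state theorem, with $q$ playing the role of $2$ and with $q-1$ simultaneous cuts replacing the single cut. First I would settle correctness: the displayed decomposition is nothing but $q-1$ nested applications of the Block‑Interval Construction lemma, cutting the interval $[0,1]$ at $\tfrac1q,\tfrac2q,\dots,\tfrac{q-1}{q}$; the pswitches $\tfrac12,\tfrac13,\dots,\tfrac1q$ are exactly the cut pswitches expressed in the successively rescaled coordinates (attaching the $j$-th piece corresponds to a cut at $\tfrac{j-1}{j}$, i.e.\ the pswitch $\tfrac1j$), and the entries $L_j$, $R_j$, $\tfrac{x_i}{q^{n-1}}$ are the rescaled left/right pieces. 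So it suffices to invoke the max--min composition rules once and to check, from the defining inequalities $\sum_{i=0}^{k_j-1}p_i\le\tfrac jq\le\sum_{i=0}^{k_j}p_i$, that the $q$ resulting sub-distributions are legitimate (nonnegative entries summing to $1$, denominator $q^{n-1}$; in particular $L_j,R_j\ge 0$). Termination is then immediate, since each round drops the denominator exponent by one, so after $n$ rounds every piece is deterministic (the algorithm can only stop earlier, when a piece already lies in $S$).

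For the complexity, let $g_{n,N}$ denote the worst-case number of pswitches the algorithm uses on a distribution $(\tfrac{x_0}{q^n},\dots,\tfrac{x_{N-1}}{q^n})$. Because a block straddling $t$ of the $q-1$ cut points meets exactly $t+1$ of the $q$ sub-intervals, and because each cut point lies in the interior of at most one block, the total number of active states summed over the $q$ sub-distributions is at most $N+(q-1)$, with each sub-distribution having at least one active state. Combined with the active-state relabeling lemma (a statement about order-preserving relabelings under $\min/\max$, hence applicable here unchanged), this yields
\begin{align*}
g_{n,1}=0,\qquad g_{0,N}=0,\qquad g_{n,N}\ \le\ (q-1)+\max\sum_{\ell=1}^{q}g_{n-1,m_\ell},
\end{align*}
the maximum ranging over $m_\ell\ge 1$ with $\sum_\ell m_\ell\le N+q-1$. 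I would then follow the binary proof's three steps: (i) show $g_{n,\cdot}$ is concave in its second argument (equivalently, $g_{n,N+1}-g_{n,N}$ is nonincreasing in $N$), so the maximum is attained at the balanced split in which every $m_\ell$ equals $1+\lfloor\tfrac{N-1}{q}\rfloor$ or $1+\lceil\tfrac{N-1}{q}\rceil$; (ii) induct on $n\le\lceil\log_q N\rceil$ to obtain $g_{n,N}=q^n-1$, using that each balanced part satisfies $m_\ell\ge\lceil N/q\rceil$ and hence stays in the same regime, so that $g_{n,N}=(q-1)+q(q^{n-1}-1)=q^n-1$; (iii) induct on $n\ge\lceil\log_q N\rceil$ to obtain the closed form $(N-1)(q-1)(n-\lceil\log_q N\rceil)+q^{\lceil\log_q N\rceil}-1$, with base cases $g_{n,1}=0$ and $g_{\lceil\log_q N\rceil,N}=q^{\lceil\log_q N\rceil}-1$ supplied by step (ii). The regime $n<\lceil\log_q N\rceil$ (where the displayed formula would be meaningless) is covered by the relabeling lemma: such a distribution has fewer than $N$ active states, and the bound reduces to $q^n-1$.

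The main obstacle is step (iii)'s inductive computation. Substituting the balanced split into the recursion forces a case analysis on how the ``extra'' $N-1$ spreads across the $q$ parts and, more delicately, on the value of $\lceil\log_q m_\ell\rceil$ for the two part sizes $1+\lfloor\tfrac{N-1}{q}\rfloor$ and $1+\lceil\tfrac{N-1}{q}\rceil$: this is the $q$-ary analogue of the ``$N$ a multiple of $2$ / not'' dichotomy in the binary proof, now with up to two distinct part sizes, and one must separately dispatch the edge case $N=q^m$, where the larger part has size $q^{m-1}+1$ and $\lceil\log_q m_\ell\rceil=\lceil\log_q N\rceil$ rather than $\lceil\log_q N\rceil-1$ (harmless, since the inductive step is only invoked for $n\ge\lceil\log_q N\rceil+1$). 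Once the $\lceil\log_q(\cdot)\rceil$ bookkeeping is pinned down, the algebra telescopes exactly as in the binary case.
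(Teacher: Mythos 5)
Your proposal follows essentially the same route as the paper: correctness via $q-1$ nested applications of the Block--Interval Construction lemma (cutting off one $\tfrac{1}{q}$-length piece at a time with the pswitches $\tfrac{1}{q},\tfrac{1}{q-1},\dots,\tfrac{1}{2}$), termination by decrementing the denominator exponent, and complexity via the recursion $f_{q,n,N}=(q-1)+\max\sum_j f_{q,n-1,i_j}$ over partitions with $\sum_j i_j=N+q-1$, resolved by the same concavity-plus-two-regime induction used in the binary $N$-state theorem. The paper only outlines that induction, so your version supplies details it omits --- notably the active-state count $N+q-1$ and the correct balanced split $i_j\in\{1+\lfloor\tfrac{N-1}{q}\rfloor,\,1+\lceil\tfrac{N-1}{q}\rceil\}$, which generalizes (and quietly corrects) the paper's binary-case leftover $\{\lceil\tfrac{N}{2}\rceil,\lfloor\tfrac{N}{2}\rfloor+1\}$ --- but the strategy is identical.
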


\begin{proof}
The idea of the construction is to perform $q-1$ iterations of the block-interval construction lemma for each reduction of a pswitch. WLOG, let the original interval have length 1. Then the first cut at $\frac{q-1}{q}$ gives intervals of length $\frac{q-1}{q}$ and $\frac{1}{q}$. The second cut at $\frac{q-2}{q-1}$ of the $\frac{q-1}{q}$ interval gives intervals of length $\frac{q-1}{q}\frac{q-2}{q-1} = \frac{q-2}{q}$ and $\frac{q-1}{q}\frac{1}{q-1} = \frac{1}{q}$. The third cut leaves an interval of length $\frac{q-1}{q}\frac{q-2}{q-1}\frac{q-3}{q-2} = \frac{q-3}{q}$ and another $\frac{1}{q}$ interval. As this continues, we get $q$ intervals of length $\frac{1}{q}$. Our expression is exactly that of the block-interval construction lemma applied to the above description. Then since at each `round' of these decompositions, we reduce the denominator by 1, we will eventually terminate when $n = 0$, which is a deterministic switch.

The proof for the complexity is similar to the $N$-state binary proof. If we let $f_{q, n, N}$ be the maximum number of pswitches needed, then we have the following recursive relations:
\begin{align*}
f_{q, n, 1} &= 0\\
f_{q, 0, N} &= 0\\
f_{q, n, N} &= \max_{i_1, i_2, ..., i_{q}: \sum_{j}i_j = N + q-1}(q - 1 + \sum_{j}f_{q, n-1, i_j})
\end{align*}
One observation to note is that for $q = 2$, we get the same recursive functions as in the binary algorithm, which is expected:
\begin{align*}
f_{2, n, 1} &= 0\\
f_{2, 0, N} &= 0\\
f_{2, n, N} &= \max_{i_1, i_2: i_1 + i_2 = N-1}f_{2, n-1, i_1} + f_{2, n-1, i_2} + 1
\end{align*}

To go from this recursive form to the closed form version, we use similar methods to the binary $N$-state complexity function, so we will only outline the steps:
\begin{enumerate}
\item First prove that the recursive expression is maximized when all $i_j \in \{\lceil\frac{N}{2}\rceil, \lfloor\frac{N}{2}\rfloor + 1\}$. This is done by showing that the closed form for $f_{q, n, N}$ has a nonincreasing discrete derivative in $N$, i.e. $f_{q, n, N+1} - f_{q, n, N}$ is nonincreasing as $N$ increases.
\item Next show that the recursive expression evaluated at the appropriate $i_j$ will achieve the closed form by breaking it up into cases to simplify the ceiling and floor expressions.
\end{enumerate}
\end{proof}

\begin{corollary}
Let $q = p_1^{k_1} p_2^{k_2} ... p_{max}^{k_{max}}$, where $p_i$ are primes. Then using {\bf S} $= \{\frac{1}{2}, \frac{1}{3}, \frac{1}{4}, ..., \frac{1}{p_{max}}\}$, we can realize any distribution $(\frac{x_0}{q}, \frac{x_1}{q}, ..., \frac{x_{N-1}}{q})$ using at most $\sum_{i}f_{p_i, k_i, N}$ pswitches.
\end{corollary}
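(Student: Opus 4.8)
The plan is to prove this by induction on the number of distinct prime factors of $q$, with the Denominator Reduction Theorem serving both as the base case and as the engine of the inductive step. Write $q = p_1^{k_1} r$ where $r = p_2^{k_2}\cdots p_{max}^{k_{max}}$ is coprime to $p_1$. When $q$ has a single prime factor the statement is exactly the Denominator Reduction Theorem (there $p_{max}=p_1$ and $\mathbf{S}=\{\frac12,\ldots,\frac1{p_1}\}$). For the inductive step I would run the base-$p_1$ Denominator Reduction Algorithm on $(\frac{x_0}{q},\ldots,\frac{x_{N-1}}{q})$: each round performs a single $p_1$-fold Block-Interval cut (slicing $[0,1]$ into $p_1$ equal subintervals), which by the Block-Interval Construction Lemma turns a denominator-$p_1^{\,j}r$ distribution into at most $p_1$ sub-distributions of denominator $p_1^{\,j-1}r$ — the extra factors $r$ just ride along inside the numerators and do not touch the combinatorics. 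Crucially, those cuts use only the pswitches $\frac1{p_1},\frac1{p_1-1},\ldots,\frac12$, all of which lie in $\mathbf{S}$ since $p_1\le p_{max}$. After $k_1$ rounds every surviving non-deterministic leaf has denominator exactly $r$, and we realize each such leaf recursively by the inductive hypothesis (using the pswitches attached to the primes dividing $r$).

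Correctness and termination are inherited from the proofs already given: the decompositions obey the max–min composition rules (Block-Interval Lemma), and the denominator strictly drops each round until it hits $1$. The real content is the pswitch count. The $p_1$-elimination phase uses at most $f_{p_1,k_1,N}$ pswitches — its worst-case recursion is literally the recursion for $f_{p_1,k_1,N}$ analyzed in the Denominator Reduction proof, with $r$ a passive spectator. Letting $m_1,m_2,\ldots$ denote the state-counts of the denominator-$r$ leaves, a block-counting argument (walking along $[0,1]$ one meets $N$ blocks and at most $p_1^{k_1}-1$ cut points, and only a cut point interior to a block inflates the running total of states) gives $\sum_\ell (m_\ell-1)\le N-1$. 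By induction the leaves together cost at most $\sum_{i\ge2}\sum_\ell f_{p_i,k_i,m_\ell}$, so it remains to bound $\sum_\ell f_{p_i,k_i,m_\ell}$ by $f_{p_i,k_i,N}$ under the constraint $\sum_\ell(m_\ell-1)\le N-1$; telescoping over the primes then yields the claimed $\sum_i f_{p_i,k_i,N}$.

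I expect this final inequality to be the main obstacle, and it is genuinely delicate. The complexity function $f_{p,k,\cdot}$ has a \emph{nonincreasing} discrete derivative in its state argument (this is essentially Part 1 of the Denominator Reduction proof), so a term-by-term bound over the (possibly many) leaves over-counts rather than telescoping cleanly. To close the gap one must either establish a sharper aggregation property of $f$ — controlling $\sum_\ell f_{p,k,m_\ell}$ by $f_{p,k,N}$ given only $\sum_\ell(m_\ell-1)\le N-1$, handled by the same case split on $k\lessgtr\lceil\log_p N\rceil$ with ceiling/floor bookkeeping used in the earlier complexity proofs — or refine the construction so that eliminating the factor $p_i$ costs $f_{p_i,k_i,N}$ \emph{in aggregate} over the whole circuit rather than per leaf. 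A promising refinement is to replace a full $p_i$-fold round by small composite cuts assembled from several $\mathbf{S}$-pswitches (for example, $\frac1{ab}$-pswitch $= \frac1a$-pswitch in series with $\frac1b$-pswitch, and duals/parallel combinations such as $\frac1p + \frac1{p'}$ to realize flexible cut weights), so that several prime factors come out of the denominator per pswitch and the leaf state-counts never proliferate.

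Everything outside this accounting is a direct assembly of results already in hand: the Block-Interval Construction Lemma for validity, the Denominator Reduction Theorem for the single-prime case and for the $f_{p_1,k_1,N}$ bound on each peeling phase, and the observation that $\mathbf{S}$ contains $\{\frac12,\ldots,\frac1{p_i}\}$ for every $i$. The proof therefore reduces to (i) specifying the peeling order of the prime powers, (ii) verifying the switch-set containment, and (iii) the one nontrivial step — showing that the per-leaf complexities add up to no more than $\sum_i f_{p_i,k_i,N}$, which is where I would concentrate the effort.
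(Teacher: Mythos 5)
You have correctly reconstructed the paper's strategy --- the paper's entire proof consists of the two sentences ``apply denominator reduction on each $p_i^{k_i}$; the complexity is just a summation of each step'' --- and the realizability half of your argument (switch-set containment, the Block-Interval Lemma, termination by strict descent of the denominator) is sound and matches what the paper intends. The genuine gap is exactly the one you flagged, and unfortunately the aggregation inequality you propose as the first way to close it is false. Take $p=2$ and $N=3$: the first peeling phase can leave two leaves with $m_1=m_2=2$ active states (e.g.\ a cut through the middle block, so $\sum_\ell(m_\ell-1)=2=N-1$ is satisfied), yet
\[
f_{2,k,2}+f_{2,k,2}\;=\;2k\;>\;2k-1\;=\;f_{2,k,3}\qquad(k\ge 1),
\]
so $\sum_\ell f_{p,k,m_\ell}\le f_{p,k,N}$ fails under the constraint $\sum_\ell(m_\ell-1)\le N-1$. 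Concavity of $f$ in the state argument works \emph{against} you here: leaves proliferate while only the quantity $\sum_\ell(m_\ell-1)$ is conserved, and a concave $f$ with $f(\cdot,\cdot,1)=0$ rewards few large leaves, not many small ones. Nor is this an artifact of loose bookkeeping. For $q=6$, $N=3$ the corollary claims $f_{2,1,3}+f_{3,1,3}=1+2=3$ pswitches, but for the distribution $(\frac{1}{6},\frac{4}{6},\frac{1}{6})$ peeling the factor $2$ first costs $1+2+2=5$ (two denominator-$3$ leaves, each with two active states, each requiring a full $3$-round), and peeling the factor $3$ first costs $2+1+1=4$; either order already exceeds the claimed bound.

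So the proof is incomplete at precisely the step you identified, and of your two escape routes only the second --- reorganizing the construction so that a prime factor is eliminated at aggregate cost $f_{p_i,k_i,N}$ over the whole circuit, or choosing the peeling order and cut points adaptively --- could possibly rescue the stated count; it is not established that it does. For what it is worth, the paper's own proof does not address this issue at all, so you have not overlooked an argument that exists in the paper; you have located a real weakness in the claimed complexity bound. The qualitative statement (realizability with the switch set $\{\frac{1}{2},\ldots,\frac{1}{p_{max}}\}$, with \emph{some} finite pswitch count) follows from your argument as written; the quantitative bound $\sum_i f_{p_i,k_i,N}$ does not.
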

\begin{proof}
Here we are simply applying denominator reduction on each $p_i^{k_i}$. The complexity is just a summation of each step.
\end{proof}

\begin{corollary}
Let $q = p_1^{k_1} p_2^{k_2} ... p_{max}^{k_{max}}$, where $p_i$ are primes. Then using {\bf S} $= \{\frac{1}{2}, \frac{1}{3}, \frac{1}{5}, ..., \frac{1}{p_{max}}\}$, we can realize any distribution $(\frac{x_0}{q}, \frac{x_1}{q}, ..., \frac{x_{N-1}}{q})$.
\end{corollary}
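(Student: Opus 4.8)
The plan is to reduce this second corollary to the first one by choosing a slightly finer switch set and then invoking closure under series/parallel composition. The first corollary already shows that with $\mathbf{S} = \{\frac{1}{2}, \frac{1}{3}, \frac{1}{4}, \ldots, \frac{1}{p_{max}}\}$ we can realize any distribution with denominator $q = p_1^{k_1}\cdots p_{max}^{k_{max}}$. The only gap to close is that the second corollary uses the \emph{smaller} set $\mathbf{S} = \{\frac{1}{2}, \frac{1}{3}, \frac{1}{5}, \ldots, \frac{1}{p_{max}}\}$ containing only \emph{prime} reciprocals, whereas the Denominator Reduction Algorithm (and hence the first corollary) assumes access to every $\frac{1}{m}$ for $2 \le m \le p_{max}$, including composite $m$ such as $4, 6, 8, 9, \ldots$.

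So the key step is: for every composite $m \le p_{max}$, show that the pswitch $(\frac{1}{m}, 0, \ldots, 0, \frac{m-1}{m})$ can itself be realized as a circuit built only from the prime-reciprocal pswitches $\{\frac{1}{p} : p \text{ prime}, p \le m\} \subseteq \mathbf{S}$. This is immediate from the 2-state composition rules together with the factorization $m = \ell_1 \ell_2 \cdots \ell_r$ into (not necessarily distinct) primes $\ell_i \le m$: composing $\frac{1}{\ell_1}, \frac{1}{\ell_2}, \ldots, \frac{1}{\ell_r}$ in series gives a pswitch realizing $\prod_i \frac{1}{\ell_i} = \frac{1}{m}$, since series composition multiplies the "closed" probabilities (this is the $\frac{1}{2}$-shorthand convention, and the fact, noted repeatedly in the paper, that the 2-state proofs apply verbatim to these $(p, 0, \ldots, 0, 1-p)$ distributions regardless of the number of states). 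Hence any circuit that the first corollary builds over $\{\frac{1}{2}, \ldots, \frac{1}{p_{max}}\}$ can be rewritten as a circuit over $\{\frac{1}{2}, \frac{1}{3}, \frac{1}{5}, \ldots, \frac{1}{p_{max}}\}$ by substituting, for each composite-reciprocal pswitch, its series expansion into primes; the substitution preserves the realized distribution because the substituted sub-circuit realizes exactly the same distribution as the switch it replaces.

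Putting the pieces together: given a target $(\frac{x_0}{q}, \ldots, \frac{x_{N-1}}{q})$ with $q = \prod_i p_i^{k_i}$, apply the first corollary to obtain a circuit over $\{\frac{1}{2}, \frac{1}{3}, \ldots, \frac{1}{p_{max}}\}$ realizing it, then perform the prime-substitution just described to obtain an equivalent circuit over $\{\frac{1}{2}, \frac{1}{3}, \frac{1}{5}, \ldots, \frac{1}{p_{max}}\}$. This proves the corollary. (One could track a pswitch count — each composite reciprocal $\frac{1}{m}$ costs $\Omega(m)$ prime factors, so the bound degrades by a bounded multiplicative factor depending only on $p_{max}$ — but since the statement as given asserts only realizability and not a complexity bound, this bookkeeping can be omitted.)

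The main obstacle is essentially cosmetic rather than mathematical: one must be careful that the $\frac{1}{m}$-shorthand really does denote a distribution of the \emph{same} state-count $N$ as the ambient circuit at the point of substitution, so that replacing the switch by its series expansion is a legal local rewrite of the $N$-state circuit. This is exactly the point the paper flags earlier ("$\frac{1}{2}$ is shorthand for $(\frac{1}{2}, \ldots, \frac{1}{2})$ \ldots if the equation holds for any number of states"); since series composition of $(p,0,\ldots,0,1-p)$-type switches behaves identically in every state-count, the rewrite is valid, and no genuine difficulty remains.
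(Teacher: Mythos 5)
Your proposal is correct and takes essentially the same route as the paper: the paper's own proof simply observes that every composite reciprocal $\frac{1}{k}$ in the Denominator Reduction switch set can be produced by series-composing prime-reciprocal pswitches (e.g.\ $\frac{1}{4} = \frac{1}{2}*\frac{1}{2}$), at the cost of extra switches whose count it likewise declines to analyze. Only a cosmetic slip: under the paper's convention the shorthand $\frac{1}{m}$ denotes $(\frac{m-1}{m}, 0, \ldots, 0, \frac{1}{m})$, not $(\frac{1}{m}, 0, \ldots, 0, \frac{m-1}{m})$ as you wrote, though your series-multiplication computation uses the correct orientation.
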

\begin{proof}
Our claim here is that we can reduce the switch set to the inverse of the primes up to $p_{max}$. This is because using all of these switches, we can generate any other $\frac{1}{k}$, where $k$ is composite. However, this means that the number of pswitches will go up, i.e. $\frac{1}{4}$ will now use 2 switches $\frac{1}{2}*\frac{1}{2}$. We have not analyzed the complexity that this reduction in the switch set would result in.
\end{proof}

\section{Robustness of Probability Generation}

The above algorithms looked at probability generation given a fixed switch set of distributions. However, in physical systems, it may be the case that the generation of randomness is error-prone. If we want to use a pswitch with distribution $(p_0, p_1)$, the physical pswitch may actually have distribution $(p_0 + \epsilon, p_1 - \epsilon)$. Define the error of such a pswitch as $|\epsilon|$\cite{LohZhouBruck}. Loh, Zhou, and Bruck looked at generating 2-state probabilities given pswitches with errors. They found that any binary distribution generated according to their algorithm, regardless of size, had error bounded by $2\epsilon$. They also showed similar bounds for different distributions.

We examine the same problem in the context of multivalued distribution generation and show that a generalized result holds for any number of states. Define the error of a multivalued distribution as the largest error over all the states. That is, if a pswitch with desired distribution $(p_0, p_1, ..., p_{N-1})$ has an actual distribution of $(p_0 + \epsilon_0, p_1 + \epsilon_1, ..., p_{N-1} + \epsilon_{N-1})$, then the error of the switch is $\max_{i}|\epsilon_i|$. 

We will begin by demonstrating robustness for $N$-state binary distributions generated according to the algorithm in section IV. This algorithm uses switches from the switch set {\bf S} $ = \{\frac{1}{2}\}$ as well as the deterministic switches. For our analysis, we allow errors on the active states of pswitches. As a result, deterministic switches have no errors since the sum of the single active probability must equal 1. The $\frac{1}{2}$-pswitch has distribution $(\frac{1}{2} + \hat{\epsilon}, 0, ..., 0, \frac{1}{2} - \hat{\epsilon})$, $|\hat{\epsilon}| \leq \epsilon$.

\begin{lemma}[Error Bound on Boundary States]
Generate any distribution $(..., 0, \frac{x_i}{2^n}, ..., \frac{x_j}{2^n}, 0, ...)$ according to the binary $N$-state algorithm where state $i$ is the smallest active state and state $j$ is the largest active state. If we allow at most $\epsilon$ error on the pswitches in the switch set, then the actual distribution generated will be $(..., 0, \frac{x_i}{2^n} + \delta_i, ..., \frac{x_j}{2^n} + \delta_j, 0, ...)$ where $|\delta_i| \leq 2\epsilon$, $|\delta_j| \leq 2\epsilon$.
\end{lemma}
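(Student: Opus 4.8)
The plan is to induct on the recursive structure of the binary $N$-state algorithm, reducing the whole statement to the behaviour of a single scalar: the boundary mass $L_C := \Pr(C=i)$, where $i$ is the smallest active state of the distribution being built. The first observation is that $L_C = \Pr(C \le i)$, because every circuit produced by the algorithm --- ideal or perturbed --- keeps all of its probability on states in $[i,j]$; the auxiliary $\frac12$-pswitches have active states $\{0,N-1\}$, but they enter only compositions whose output again lies in $[i,j]$, and perturbing an active state of a $\frac12$-pswitch leaves it supported on $\{0,N-1\}$ (errors are allowed only on active states). The statement for the largest active state $j$ then follows by the mirror-image argument applied to $\Pr(C=j)=\Pr(C\ge j)$, or more economically by the Duality Theorem: $\bar C$ has smallest active state $N-1-j$ and is built from dual $\frac12$-pswitches (again $\frac12$-pswitches, with error $\le \epsilon$) and deterministic switches through the dualized (series $\leftrightarrow$ parallel) compositions, so the bound for $\Pr(\bar C = N-1-j) = \Pr(C=j)$ is identical.

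The heart of the argument is to show that at each recursive call $L_C$ is obtained from the boundary mass of the single recursively-built subcircuit $C'$ appearing on the ``spine'' through one of two updates, where $\hat\epsilon$ (with $|\hat\epsilon| \le \epsilon$) is the error of the $\frac12$-pswitch used at that step:
\[
\text{(a)}\quad L_C = L_{C'}\left(\tfrac12 + \hat\epsilon\right),
\qquad\qquad
\text{(b)}\quad 1 - L_C = \left(\tfrac12 - \hat\epsilon\right)\left(1 - L_{C'}\right).
\]
Both follow from the $\max/\min$ composition rules together with the fact that the non-spine factor of each composition contributes exactly $0$ or $1$ to the event $\{C \le i\}$. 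In the case $\frac{x_0}{2^n} > \frac12$ we have $C = s\,C'$ (series), so $\Pr(C \le i) = 1 - \Pr(s > i)\Pr(C' > i)$, which is (b); in the last case $C = C' + s$ (parallel), so $\Pr(C \le i) = \Pr(C' = i)\Pr(s \le i)$, which is (a); in a middle case $C = A + s\,B$ with cut index $k$, either $k > i$ --- then $B$ has all active states $\ge k > i$, so $\Pr(B \le i) = 0$ and $\Pr(C \le i) = \Pr(A = i)\Pr(s \le i)$, giving (a) with $C' = A$ --- or $k = i$, in which case $A$ is deterministic at $i$, $C$ collapses to $\min(s,B)$, and $\Pr(C \le i) = 1 - \Pr(s > i)\Pr(B > i)$, giving (b) with $C' = B$. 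In every application $\Pr(s \le i) = \frac12 + \hat\epsilon$, valid because $0 \le i \le N-2$ (if $i = N-1$ the target is deterministic).

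Given these updates the induction is a two-line estimate. Writing $\delta(\cdot)$ for the difference between a perturbed quantity and its ideal ($\hat\epsilon \equiv 0$) value, case (a) gives $\delta(L_C) = \tfrac12\,\delta(L_{C'}) + \hat\epsilon\,L_{C'}$, hence $|\delta(L_C)| \le \tfrac12(2\epsilon) + \epsilon = 2\epsilon$; case (b) yields the same estimate applied to $1 - L_C$. The base cases are immediate: a deterministic switch has $L = 1$ exactly, and a $\frac12$-pswitch has $L = \frac12 + \hat\epsilon$, so $|\delta(L)| = |\hat\epsilon| \le \epsilon \le 2\epsilon$. In effect $L_C$ is being computed by a chain of two-state series/parallel compositions mirroring the two-state algorithm of \cite{LohZhouBruck}, which is why the $2\epsilon$ bound reappears. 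The induction produces $|\delta_i| \le 2\epsilon$, and the duality (or mirror) argument gives $|\delta_j| \le 2\epsilon$.

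The step I expect to be fiddly is not the induction but the case bookkeeping above: checking that for every branch of the algorithm (Case 1, the $N-2$ middle cases, the last case) --- and for the degenerate sub-case where the cut index coincides with the smallest active state --- the recursion on $L_C$ genuinely reduces to exactly (a) or (b), with the non-spine factor clean at $0$ or $1$. This is where one must lean on the facts that errors never leak probability outside $[i,j]$ and that a $\frac12$-pswitch's cumulative distribution function at an interior threshold is exactly $\frac12 \pm \hat\epsilon$.
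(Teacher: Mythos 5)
Your proposal is correct and follows essentially the same route as the paper: an induction over the algorithm's recursive compositions in which the boundary state's probability is updated by a single $\frac12\pm\hat\epsilon$ factor per level, giving $|\delta| \le \frac12(2\epsilon)+\epsilon = 2\epsilon$; your updates (a) and (b) are exactly the paper's $i<k$ and $i=k$ error expressions, repackaged as CDF recursions. The only cosmetic differences are that you dispatch the largest active state $j$ via duality where the paper writes out the mirror computation, and you explicitly note the $\frac12$-pswitch base case, which the paper omits.
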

\begin{proof}
In each step of the algorithm, a distribution $r = (..., r_i, ..., r_j, ...)$ is made out of 2 pswitches $p = (..., p_i, ..., p_k, ...)$ and $q = (..., q_k, ..., q_j, ...)$ where $i \leq j \leq k$. The composition is in the form $r = p + (\frac{1}{2}, ..., \frac{1}{2})q$. We will prove robustness using induction on switches $p$ and $q$.

Base Case: We use the deterministic switches as our base case switches (for these, $i = k$ or $k = j$). These trivially satisfy the induction hypothesis since they have no error.

Inductive Step: Assume we are given $p$ and $q$ satisfying the inductive hypothesis. That is, for $p = (..., p_i + \Delta_i, ..., p_k + \Delta_k, ...)$, $q = (..., q_k + \delta_k, ..., q_j + \delta_j, ...)$, $|\Delta_i|, |\Delta_k|, |\delta_k|, |\delta_j| \leq 2\epsilon$. Then the errors for states $i$ and $j$ on distribution $r = p + (\frac{1}{2} + \hat{\epsilon}, ..., \frac{1}{2} - \hat{\epsilon})q$ can be calculated as follows:
\begin{align*}
|\text{error}_{i < k}| &= |(\frac{1}{2}+\hat{\epsilon})(p_i+\Delta_i)-\frac{1}{2}p_i|\\
                       &= |\frac{1}{2}\Delta_i + \hat{\epsilon}(p_i + \Delta_i)|\\
                       &\leq \frac{1}{2}|\Delta_i| + |\hat{\epsilon}||p_i + \Delta_i|\\
                       &\leq 2\epsilon\\
|\text{error}_{i = k}| &= |(\frac{1}{2}+\hat{\epsilon}) + (\frac{1}{2}-\hat{\epsilon})(q_i + \delta_i) - (\frac{1}{2} + \frac{1}{2}q_i)|\\
                       &= |\frac{1}{2}\delta_i + \hat{\epsilon}(1-(q_i + \delta_i))|\\
                       &\leq \frac{1}{2}|\delta_i| + |\hat{\epsilon}||1-(q_i + \delta_i)|\\
                       &\leq 2\epsilon\\
|\text{error}_{j > k}| &= |(\frac{1}{2}-\hat{\epsilon})(q_j+\delta_j)-\frac{1}{2}q_j|\\
                       &= |\frac{1}{2}\delta_j + \hat{\epsilon}(q_j + \delta_j)|\\
                       &\leq \frac{1}{2}|\delta_j| + |\hat{\epsilon}||q_j + \delta_j|\\
                       &\leq 2\epsilon\\
|\text{error}_{j = k}| &= |(\frac{1}{2}-\hat{\epsilon}) + (\frac{1}{2}+\hat{\epsilon})(p_j + \Delta_j) - (\frac{1}{2} + \frac{1}{2}p_j)|\\
                       &= |\frac{1}{2}\Delta_j + \hat{\epsilon}(1-(p_j + \Delta_j))|\\
                       &\leq \frac{1}{2}|\Delta_j| + |\hat{\epsilon}||1-(p_j + \Delta_j)|\\
                       &\leq 2\epsilon
\end{align*}
Then we find that the errors for states $i$ and $j$ still satisfy the inductive hypothesis, so we are done.
\end{proof}

\begin{theorem}[Robustness of Binary N-state Distributions]
Generate any distribution $(\frac{x_0}{2^n}, \frac{x_1}{2^n}, ..., \frac{x_{N-1}}{2^n})$ according to the binary $N$-state algorithm. If we allow the pswitches in the switch set {\bf S} $= \frac{1}{2}$ to have up to $\epsilon$ error in the active states, then the actual distribution $(\frac{x_0}{2^n} + \delta_0, \frac{x_1}{2^n} + \delta_1, ..., \frac{x_{N-1}}{2^n} + \delta_{N-1})$ has errors $|\delta_i| \leq 3\epsilon$, $|\delta_0|, |\delta_{N-1}| \leq 2\epsilon$.
\end{theorem}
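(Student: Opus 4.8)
The plan is to \emph{strengthen} the Error Bound on Boundary States (which is already proven) so that it also controls the interior states, and then establish the strengthened statement by induction on the recursive construction. Precisely, I would prove the following for every sub-distribution $s$ generated by a recursive call, with smallest active state $i$ and largest active state $j$: its realization is supported on $\{i,\dots,j\}$, the two boundary states satisfy $|\delta_i|,|\delta_j|\le 2\epsilon$ (this part is exactly the Boundary Lemma, so nothing new is needed), and every interior active state $i<\ell<j$ satisfies $|\delta_\ell|\le 3\epsilon$. Applying this to the top-level call $(\tfrac{x_0}{2^n},\dots,\tfrac{x_{N-1}}{2^n})$ yields the theorem, since states outside $\{i,\dots,j\}$ carry probability exactly $0$ and hence zero error, and in particular $|\delta_0|,|\delta_{N-1}|\le 2\epsilon$.

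For the inductive step, recall that each call decomposes $s = p + (\tfrac12+\hat\epsilon,0,\dots,0,\tfrac12-\hat\epsilon)\,q$ with $p$ supported on $\{i,\dots,k\}$, $q$ supported on $\{k,\dots,j\}$, and $k$ the cut index ($i\le k\le j$); the base case is a deterministic switch, which has no error. Let $\Delta_\ell$ and $\delta^q_\ell$ denote the realized errors of $p$ and $q$ at state $\ell$, and let $B$ be the realization of the ``$\tfrac12\,q$'' branch, so that $B$ is supported on $\{0\}\cup\{k,\dots,j\}$ with $\Pr(B=0)=\tfrac12+\hat\epsilon$ whenever $k\ge 1$. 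A fixed interior state $\ell$ of $s$ falls into exactly one of three cases. If $i<\ell<k$, then $\ell$ is interior to $p$, $B$ never equals $\ell$, and $s_\ell=(\tfrac12+\hat\epsilon)(p_\ell+\Delta_\ell)$; the error is $\tfrac12\Delta_\ell+\hat\epsilon(p_\ell+\Delta_\ell)$, which is at most $\tfrac12|\Delta_\ell|+|\hat\epsilon|\cdot|p_\ell+\Delta_\ell|\le \tfrac32\epsilon+\epsilon\le 3\epsilon$, using the inductive bound $|\Delta_\ell|\le 3\epsilon$ and $0\le p_\ell+\Delta_\ell\le 1$. The case $k<\ell<j$ is symmetric, through $q$. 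If $\ell=k$ (and $i<k<j$), then $k$ is the largest active state of $p$ and the smallest active state of $q$, so the Boundary Lemma gives $|\Delta_k|,|\delta^q_k|\le 2\epsilon$; collapsing the contributions of ``$A=k,\,B\le k$'' and ``$A<k,\,B=k$'' yields $s_k=(p_k+\Delta_k)(\tfrac12+\hat\epsilon)+(\tfrac12-\hat\epsilon)(q_k+\delta^q_k)$, so the error equals $\tfrac12(\Delta_k+\delta^q_k)+\hat\epsilon\big((p_k+\Delta_k)-(q_k+\delta^q_k)\big)$, which is at most $\tfrac12(2\epsilon+2\epsilon)+\epsilon\cdot 1=3\epsilon$. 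In every case the interior bound $3\epsilon$ is preserved, and support-preservation together with the $2\epsilon$ boundary bounds are inherited from the Boundary Lemma, closing the induction.

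The main obstacle I anticipate is bookkeeping rather than arithmetic: each case estimate is just the triangle inequality together with $|\hat\epsilon|\le\epsilon$ and the fact that every realized single-state probability lies in $[0,1]$. What requires care is (i) verifying that in the decomposition the cut index $k$ is the unique shared active state of $p$ and $q$, so that each interior $\ell$ lands in exactly one bucket; (ii) pinning down the behaviour of the ``$\tfrac12\,q$'' branch near $\ell$ (e.g.\ $q_0=0$ whenever $k\ge1$, which gives $\Pr(B\le\ell)=\tfrac12+\hat\epsilon$ for $\ell<k$); and (iii) the degenerate decompositions with $k=i$ or $k=j$, where the relevant state $\ell=k$ is a boundary state of $s$ and is covered directly by the Boundary Lemma rather than by the interior argument. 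The conceptual reason the bound is $3\epsilon$ and not $2\epsilon$ is precisely the cut state: there the errors of \emph{both} halves (each at most $2\epsilon$) get averaged, $\tfrac12(2\epsilon+2\epsilon)=2\epsilon$, and the pswitch's own $\epsilon$ error adds on top.
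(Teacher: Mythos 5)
Your proposal is correct and follows essentially the same route as the paper: the same decomposition $r = p + (\tfrac12+\hat\epsilon,0,\dots,0,\tfrac12-\hat\epsilon)q$, the same three-way case split on the position of the state relative to the cut index $k$, the same invocation of the Boundary Lemma for the $2\epsilon$ bounds at $i$, $j$, and at the shared state $k$, and the same arithmetic ($\tfrac52\epsilon$ for interior states, $\tfrac12(2\epsilon+2\epsilon)+\epsilon=3\epsilon$ at the cut). Your write-up is somewhat more explicit than the paper's about support preservation and the degenerate cases $k=i$ or $k=j$, but the argument is the same.
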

\begin{proof}
In each step of the binary $N$-state algorithm, a distribution $r = (..., r_i, ..., r_j, ...)$ is made out of 2 pswitches $p = (..., p_i, ..., p_k, ...)$ and $q = (..., q_k, ..., q_j, ...)$ where $i \leq j \leq k$, $i$ is the smallest active state and $j$ is the largest active state. The composition is in the form $r = p + (\frac{1}{2}, ..., \frac{1}{2})q$. We will prove robustness using induction on switches $p$ and $q$.

Base Case: We use the deterministic switches as our base case switches. These trivially satisfy the induction hypothesis since they have no error.

Inductive Step: Assume we are given $p$ and $q$ satisfying the inductive hypothesis. That is, for $p = (..., p_i + \Delta_i, ..., p_k + \Delta_k, ...)$, $q = (..., q_k + \delta_k, ..., q_j + \delta_j, ...)$, $|\Delta_s|, |\delta_s| \leq 3\epsilon$, $|\Delta_i|, |\Delta_k|, |\delta_k|, |\delta_j| \leq 2\epsilon$. Then the errors for states $s$ on distribution $r = p + (\frac{1}{2} + \hat{\epsilon}, ..., \frac{1}{2} - \hat{\epsilon})q$ can be calculated as follows:
\begin{enumerate}
\item For $s = i, j$, we apply the previous lemma on boundary states
\item For $i < s < k, k < s < j$, we have a similar expression as calculated in the proof for boundary error (steps skipped below since shown earlier).
\begin{align*}
|\text{error}_{i < s < k}| &\leq \frac{1}{2}|\Delta_s| + |\hat{\epsilon}||p_s + \Delta_s|\\
                       &\leq \frac{5}{2}\epsilon \leq 3\epsilon\\
|\text{error}_{k < s < j}| &\leq \frac{1}{2}|\delta_s| + |\hat{\epsilon}||q_s + \delta_s|\\
                       &\leq \frac{5}{2}\epsilon \leq 3\epsilon
\end{align*}
\item For $s = k$, we have the following error:
\begin{align*}
|\text{error}_{s = k}| &= |(\frac{1}{2} + \hat{\epsilon})(p_k + \Delta_k) + (\frac{1}{2} - \hat{\epsilon})(q_k + \delta_k)\\
                       &\ \ \  - \frac{1}{2}(p_k + q_k)|\\
                       &= |\frac{1}{2}\Delta_k + \frac{1}{2}\delta_k + \hat{\epsilon}[(p_k + \Delta_k) - (q_k + \delta_k)]|\\
                       &\leq \frac{1}{2}(|\Delta_k| + |\delta_k|) + \epsilon\\
                       &\leq 3\epsilon
\end{align*}
\end{enumerate}
Then we find that the errors for all states $s$ still satisfy the inductive hypothesis, so for $i = 0$ and $j = N-1$, we are done.
\end{proof}

Now we will show that the $N$-state rational distributions generated according to the algorithms in Section V are also robust. For the state reduction algorithm, all the results are identical to the binary robustness results since the proofs did not assume that $p$ and $q$ were binary distributions. However, since the state reduction algorithm depends on other unknown 2-state algorithms, the error will depend on the errors generated from the 2-state algorithm. The robustness result for the denominator reduction algorithm is very similar to the one for binary distributions. The main difference is that we have a different switch set {\bf S} $= \{\frac{1}{2}, \frac{1}{3}, ..., \frac{1}{q}\}$. 

\begin{theorem}[Robustness of Rational Distributions]
Generate any distribution $(\frac{x_0}{q^n}, \frac{x_1}{q^n}, ..., \frac{x_{N-1}}{q^n})$ according to the denominator reduction algorithm. If we allow the pswitches in the switch set {\bf S} $= \{\frac{1}{2}, ..., \frac{1}{q}\}$ to have up to $\epsilon$ error in the active states, then the actual distribution $(\frac{x_0}{q^n} + \delta_0, \frac{x_1}{q^n} + \delta_1, ..., \frac{x_{N-1}}{q^n} + \delta_{N-1})$ has errors $|\delta_i| \leq (q+1)\epsilon$, $|\delta_0|, |\delta_{N-1}| \leq q\epsilon$.
\end{theorem}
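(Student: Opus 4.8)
The plan is to follow the two-step template of Section VI verbatim --- first a boundary lemma, then a main induction --- but with the single fair pswitch $\tfrac12$ replaced by the family $\tfrac12,\ldots,\tfrac1q$ and the constant $2$ replaced by $q$. The key preliminary observation is that one round of the denominator reduction algorithm is, by the Block-Interval Construction Lemma, exactly a nested sequence of $q-1$ binary-style cuts: the cut that splits an interval in ratio $\tfrac{m-1}{m}:\tfrac1m$ uses the pswitch $(\tfrac{m-1}{m},0,\ldots,0,\tfrac1m)$ (the switch $\tfrac1m\in\mathbf{S}$) and has the form $r=p+\tfrac1m q$, where $p$ is the left part carrying the lower states and $q$ is attached to the $\tfrac1m$-pswitch in series. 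Hence the whole construction is a binary tree of such cuts with deterministic switches and the $\tfrac1m$-pswitches at the leaves, and both proofs become inductions over this tree, carried out one cut at a time. For a cut $r=p+\tfrac1m q$ one records the composition identities $r_s=\tfrac{m-1}{m}p_s$ for a state $s$ strictly between the smallest active state of $p$ and the cut point $k$; $r_s=\tfrac1m q_s$ for a state $s$ strictly between $k$ and the largest active state of $q$; and $r_k=\tfrac{m-1}{m}p_k+\tfrac1m q_k$ for the cut point itself; a $\tfrac1m$-pswitch with error contributes equal and opposite perturbations $\pm\hat{\epsilon}$ with $|\hat{\epsilon}|\le\epsilon$.

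First I would prove the analogue of the Error Bound on Boundary States lemma: any sub-circuit produced by the algorithm has error at most $q\epsilon$ on its smallest and largest active states. The base cases (deterministic switches and the $\tfrac1m$-pswitches) are trivial. In the inductive step for $r=p+\tfrac1m q$, the smallest active state $i$ of $r$ is the smallest active state of $p$, so $|\delta_i|\le\tfrac{m-1}{m}|\Delta_i|+|\hat{\epsilon}|\,|\hat{p}_i|\le\tfrac{m-1}{m}q\epsilon+\epsilon$, which is $\le q\epsilon$ precisely because $m\le q$; symmetrically the largest active state $j$ of $r$ is the largest active state of $q$, so $|\delta_j|\le\tfrac1m q\epsilon+\epsilon\le q\epsilon$ (using $q\ge2$, hence $m\ge2\ge\tfrac{q}{q-1}$). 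Thus $q\epsilon$ is the correct fixed point, exactly as $2\epsilon$ was for the fair coin.

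Second I would prove the theorem by the same tree induction, with hypothesis ``every interior state has error $\le(q+1)\epsilon$ and every extreme active state has error $\le q\epsilon$.'' At a cut $r=p+\tfrac1m q$: the extreme states of $r$ are handled by the boundary lemma; an interior state of $p$ gives $|\delta_s|\le\tfrac{m-1}{m}(q+1)\epsilon+\epsilon\le(q+1)\epsilon$ since $m\le q<q+1$; an interior state of $q$ gives $|\delta_s|\le\tfrac1m(q+1)\epsilon+\epsilon\le(q+1)\epsilon$; and the cut point $k$, being the largest active state of $p$ and the smallest active state of $q$, has both inherited errors bounded by $q\epsilon$ via the boundary lemma, so $|\delta_k|\le\tfrac{m-1}{m}q\epsilon+\tfrac1m q\epsilon+|\hat{\epsilon}|\,|\hat{p}_k-\hat{q}_k|\le q\epsilon+\epsilon=(q+1)\epsilon$ after bounding the last factor by $1$. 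Taking $i=0$, $j=N-1$ at the root yields $|\delta_0|,|\delta_{N-1}|\le q\epsilon$ and $|\delta_s|\le(q+1)\epsilon$ for interior states. As in Section VI one also notes that $\max$ and $\min$ only combine states lying within the active range, so no probability mass leaks onto inactive states and every error is accounted for; edge cases such as $i=0$ or $k=0$ reduce to one of the branch coefficients collapsing and are checked directly.

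The main obstacle is resisting the natural impulse to analyze one round as a single $q$-way parallel composition: done that way, an extreme state of the output accumulates $\hat{\epsilon}$-contributions from all $q-1$ pswitches at once, and the resulting bound (a harmonic-type sum times $\epsilon$) overshoots the budget $q\epsilon$ already at $q=3$. Decomposing the round into $q-1$ sequential binary cuts is what saves it: each individual cut first contracts the inherited error by a factor $\tfrac{m-1}{m}$ or $\tfrac1m$ and only then adds its own $\epsilon$, and since every such factor is at most $\tfrac{q-1}{q}<\tfrac{q}{q+1}$, the invariants $q\epsilon$ and $(q+1)\epsilon$ survive arbitrarily many cuts. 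Once this reduction is in place, the rest is the same four-case bookkeeping as in the binary proof, which I would present by pointing to that proof and exhibiting the modified inequalities above.
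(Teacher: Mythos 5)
Your proposal is correct and takes essentially the same route as the paper: the paper likewise reduces each round of the denominator reduction algorithm to $q-1$ successive binary cuts of the form $(\ldots,p_i,\ldots,p_k,\ldots)+\frac{1}{\hat q}(\ldots,q_k,\ldots,q_j,\ldots)$, proves a boundary-state bound of $q\epsilon$ by induction, and then bounds the remaining states, merely outlining the inequalities you write out explicitly. The only divergence is that the paper's outline states the final all-states bound as $(q-1)\epsilon$ (an apparent typo), whereas your $(q+1)\epsilon$ matches the theorem statement.
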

\begin{proof}
For generating rational distributions, each iteration repeats compositions of the form:
\begin{align*}
(z_0, ..., z_{N-1}) &= (p_0, ..., p_{i_1}, 0, ...)\\
                    & + \frac{1}{2}(..., 0, q_{i_1}, ..., q_{i_2}, 0, ...)\\
                    & + \frac{1}{3}(..., 0, r_{i_2}, ..., r_{i_3}, 0, ...)\\
                    & + ...\\
                    & + \frac{1}{q}(..., 0, y_{i_{q-1}}, ..., y_{i_q}, 0, ...)
\end{align*}
We notice that each composition of this form is the same as $q-1$ compositions of the form $(..., p_i, ..., p_k, ...) + \frac{1}{\hat{q}}(..., q_k, ..., q_j, ...)$ where $\hat{q} = 2, 3, ... q$. Then from here, the work is almost identical to the proof of the binary distributions, so we will only outline the steps.
\begin{enumerate}
\item Prove that the boundary states of distributions formed from $(..., p_i, ..., p_k, ...) + \frac{1}{\hat{q}}(..., q_k, ..., q_j, ...)$ compositions, $\hat{q} \in \{2, 3, ..., q\}$ have error of at most $q\epsilon$ regardless of the number of compositions (use induction). As before, we allow pswitches $(\frac{\hat{q}-1}{\hat{q}} + \hat{\epsilon}, ..., \frac{1}{\hat{q}} - \hat{\epsilon})$ to have error $\hat{\epsilon} \leq \epsilon$.
\item Using the boundary state error, prove that the errors of all states have error of at most $(q-1)\epsilon$.
\end{enumerate}
Then, since all the compositions making up the rational distributions are of the form $(..., p_i, ..., p_k, ...) + \frac{1}{\hat{q}}(..., q_k, ..., q_j, ...)$, the steps above are sufficient for proving our theorem.
\end{proof}

\section{Universal Probability Generator}

Up till now, we have only looked at circuits with set switches and described algorithms for realizing specific probability distributions. A next question is: What about circuits that implement stochastic functions? That is, how can we reason about circuits that implement different probability distributions given `input' switches? Wilhelm and Bruck approached this problem for the 2-state version\cite{WilhelmBruck} by constructing a circuit which they called a Universal Probability Generator (UPG). The function of this circuit was to map $n$-deterministic bits into output probabilities of the form $\frac{x_0}{2^n}$ in increasing order. 

This functionality alone is not surprising since it can easily be done with an exponential number of switches using a tree-like structure; however, the remarkable result is that the UPG only requires a linear number of switches in $n$. In this section we will show a generalization of this circuit which is able to realize any binary probability distribution of the form $(\frac{x_0}{2^n}, \frac{x_1}{2^n}, ..., \frac{x_{N-1}}{2^n})$ using a number of switches that is polynomial in $n$ with degree $N-1$.

The strategy we will take for deriving the UPG construction is first to ask: `How should the UPG function?' Then we will build a circuit that naively implements the functionality and use algebraic rules to reduce the complexity of the circuit.

\subsection{2-state UPG}

We will first review the 2-state UPG of Wilhelm and Bruck\cite{WilhelmBruck}, deriving the results in a manner that is similar to the steps we take to get the generalized $N$-state UPG. 

\begin{figure}[!b]
\centering
\includegraphics[width=3.0in]{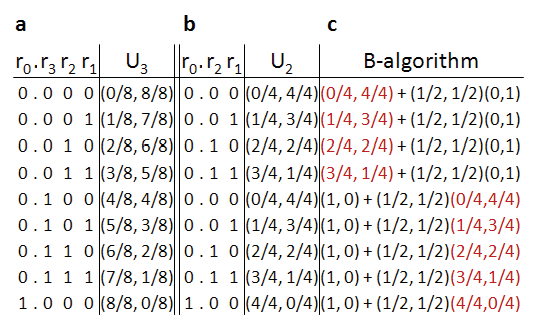}
\caption{\textbf{2-state Universal Probability Generator (UPG) mappings} \textit{(a)} A mapping for a UPG that generates distributions of the form $(\frac{x_0}{8}, \frac{x_1}{8})$. \textit{(b)} Removing $r_3$, we look at the inputs to a UPG that generates distributions of the form $(\frac{x_0}{4}, \frac{x_1}{4})$. \textit{(c)} Notice that the B-algorithm decompositions of the probabilities of part (a) correspond to those in part (b).}
\label{UPG2TAB}
\end{figure}

{\it Definition: (2-state UPG)}.
A 2-state UPG $U_{n}$ is a circuit that realizes distributions of the form $(\frac{x_0}{2^n}, \frac{x_1}{2^n})$ using $n+1$ input bits which we will refer to as $r_0$, $r_1$, ..., $r_n$. When the input bits, in the order $r_0, r_n, r_{n-1},$ ..., $r_2, r_1$ are set to the binary representation of $\frac{x_0}{2^n}$, then the circuit $U_{n}$ will realize distribution $(\frac{x_0}{2^n}, \frac{x_1}{2^n})$. In other words, to realize any desired binary probability $(\frac{x_0}{2^n}, \frac{x_1}{2^n})$, we set the input bits to the state 0 probability. 

As an example, we look at the input-output mappings for the circuit $U_{3}$. If we input $\mathbf{r} = 0001$, the circuit will realize $(\frac{1}{8}, \frac{7}{8})$ since $\frac{1}{8} = 0.001_2$. The input $0101$ will realize $(\frac{5}{8}, \frac{3}{8})$ since $\frac{5}{8} = 0.101_2$ (see Figure \ref{UPG2TAB}a).

The motivation for the UPG circuit comes from an interesting property in the truth table. For example, let us enumerate all the outputs for $U_{3}$ (Fig. \ref{UPG2TAB}a). For each row (input), we ask the following questions: What would the output of $U_{2}$ be given the inputs $r_0, r_2, r_1$ that were used for $U_{3}$? Is there a relationship to the construction of the $U_{3}$ output probability?

If we calculate these outputs, we find that they are the same probability distributions used in the binary algorithm for 2-states (Fig. \ref{UPG2TAB}bc). From here, the (exponential) recursive construction is straightforward.

\begin{figure}[!t]
\centering
\includegraphics[width=3.0in]{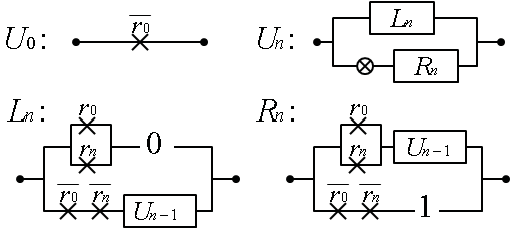}
\caption{\textbf{2-state exponential UPG} This is a UPG derived directly from the B-algorithm. It uses an exponential number of switches since $U_{n}$ uses two copies of $U_{n-1}$ in its recursive construction.}
\label{UPG2EXP}
\end{figure}

\begin{lemma}
A 2-state UPG $U_{n}$ with inputs $r_0, r_n, \ldots, r_2, r_1$ can be constructed with an exponential number of switches using the recursive construction in Figure \ref{UPG2EXP}, where the $n$ bits used in $U_{n-1}$ are $r_0, r_{n-1}, \ldots, r_2, r_1$.
\end{lemma}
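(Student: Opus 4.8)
The plan is to prove the lemma by induction on $n$, verifying that the recursive construction of Figure~\ref{UPG2EXP} produces a circuit with the stated input--output behavior and that its switch count grows exponentially. The inductive hypothesis is exactly the defining property of $U_{n-1}$: whenever the inputs $r_0, r_{n-1}, \ldots, r_1$, in that order, are set to the binary expansion of $\tfrac{x_0}{2^{n-1}}$, the circuit $U_{n-1}$ realizes $(\tfrac{x_0}{2^{n-1}}, \tfrac{x_1}{2^{n-1}})$.

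First I would dispatch the base case: $U_0$ has the single input $r_0$ and must realize $(x_0, x_1)$ with $x_0, x_1 \in \{0,1\}$ and $x_0 = r_0$, so a single deterministic switch governed by $r_0$ (as drawn in Figure~\ref{UPG2EXP}) does the job. For the inductive step, fix an assignment $r_0, r_n, r_{n-1}, \ldots, r_1$ spelling out $\tfrac{x_0}{2^n}$, so that $r_0$ is the integer bit and $r_n$ the leading fractional bit, and set aside the degenerate input $x_0 = 2^n$, which is handled directly by a deterministic switch controlled by $r_0$. I would then split on $r_n$. When $r_n = 0$ we have $\tfrac{x_0}{2^n} = \tfrac12\cdot(2\tfrac{x_0}{2^n}) \le \tfrac12$, and by the $2$-state composition rules this is precisely the state-$0$ probability realized by a $\tfrac12$-pswitch composed \emph{in parallel} with a pswitch realizing the residual probability $2\tfrac{x_0}{2^n}$. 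When $r_n = 1$ we have $1 - \tfrac{x_0}{2^n} \le \tfrac12$, and the composition rules identify $(\tfrac{x_0}{2^n}, \tfrac{x_1}{2^n})$ with a $\tfrac12$-pswitch composed \emph{in series} with a pswitch realizing the residual probability $2\tfrac{x_0}{2^n} - 1$. In both cases the residual probability is obtained from $\tfrac{x_0}{2^n}$ by deleting the leading fractional bit $r_n$, so its binary expansion is exactly the string $r_0, r_{n-1}, \ldots, r_1$ --- which is precisely what the construction routes into the corresponding copy of $U_{n-1}$. Hence the inductive hypothesis applies verbatim. This is the step where the ``truth-table coincidence'' pointed out around Figure~\ref{UPG2TAB}b--c is doing the real work: these residual distributions are exactly the recursive sub-cases of the $2$-state $B$-algorithm of \cite{WilhelmBruck}.

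The remaining obligation in the inductive step --- and what I expect to be the main obstacle --- is checking that the selector wiring behaves correctly: the deterministic switches $r_n$ and $\bar r_n$ must gate the two branches so that for each value of $r_n$ exactly one branch is active and the other contributes its identity element (the $\max$-identity, a switch stuck in state $0$, for a parallel branch; the $\min$-identity, a switch stuck in state $N-1$, for a series branch), so that $U_n$ realizes exactly the distribution of the active branch rather than some mixture of the two. One must also verify that the two copies of $U_{n-1}$ can consistently share the input $r_0$, and that the boundary inputs $x_0 = 0$ and $x_0 = 2^n$ are handled correctly by this gating together with the $r_0$ switch. Everything else is a mechanical application of the series ($\min$) and parallel ($\max$) formulas, plus careful bookkeeping of the index shift $r_n, r_{n-1}, \ldots, r_1 \mapsto r_{n-1}, \ldots, r_1$ between $U_n$ and its sub-circuits.

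Finally, for the complexity claim, let $S(n)$ be the number of switches in $U_n$. The construction builds $U_n$ from two copies of $U_{n-1}$ plus a constant number of deterministic switches and $\tfrac12$-pswitches, so $S(n) = 2S(n-1) + O(1)$, which solves to $S(n) = \Theta(2^n)$; in particular $U_n$ uses exponentially many switches, as claimed.
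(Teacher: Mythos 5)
Your proposal is correct and follows essentially the same route as the paper: induction on $n$, a single deterministic $r_0$-switch as the base case, and an inductive step that matches the two branches of the construction to the two cases of the 2-state B-algorithm, using the observation that deleting the leading fractional bit $r_n$ turns the input string into the binary expansion of the residual probability fed to $U_{n-1}$. The only cosmetic difference is that you select on $r_n$ and special-case $x_0 = 2^n$ via $r_0$, whereas the paper folds that degenerate case into the first branch by using the combined selector $r_{0n} = r_0 + r_n$ (so that $r_{0n}=1 \iff \tfrac{x_0}{2^n} \geq \tfrac12$), which is exactly the mux wiring you flagged as the remaining obligation.
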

\begin{proof}
We can prove this inductively. 

Base Case: $U_{0} = \bar{r}_0$ realizes $(0, 1)$ and $(1, 0)$ when $r_0 = 0$ and $r_0 = 1$ respectively. 

Inductive Step: Let $r_{0n} = r_0 + r_n$, which also implies $\bar{r}_{0n} = \bar{r}_0\bar{r}_n$. Assume $U_{n-1}$ realizes the correct distributions given the defined inputs. Then we want to show that
\begin{align*}
U_{n} &=L_{n} + (\frac{1}{2}, \frac{1}{2})R_{n},\\
L_{n} &=r_{0n}\cdot 0 + \bar{r}_{0n}U_{n-1},\\
R_{n} &=r_{0n}U_{n-1} + \bar{r}_{0n}\cdot 1,
\end{align*}
This is equivalent to showing,
\begin{align*}
U_{n} &=L_{n} + (\frac{1}{2}, \frac{1}{2})R_{n},\\
L_{n} &=
\begin{cases}
(1, 0),  &\text{ if }r_{0n} = 1\\
U_{n-1}, &\text{ if }\bar{r}_{0n} = 1
\end{cases}\\
R_{n}  &=
\begin{cases}
U_{n-1}, &\text{ if }r_{0n} = 1\\
(0, 1),  &\text{ if }\bar{r}_{0n} = 1
\end{cases}
\end{align*}
which is equivalent to showing,
\begin{align*}
U_{n} &=
\begin{cases}
(1, 0) + (\frac{1}{2}, \frac{1}{2})U_{n-1}, &\text{ if }r_{0n} = 1\\
U_{n-1} + (\frac{1}{2}, \frac{1}{2})(0, 1), &\text{ if }\bar{r}_{0n} = 1
\end{cases}
\end{align*}
Here, we are reminded of the B-algorithm (for realizing any 2-state binary distribution):
\begin{align*}
(\frac{x_0}{2^n}, \frac{x_1}{2^n}) &=
\begin{cases} 
(1, 0) + (\frac{1}{2}, \frac{1}{2})(\frac{x_0}{2^{n-1}}-1, \frac{x_1}{2^{n-1}}), &\text{ if }\frac{x_0}{2^n} \geq \frac{1}{2}\\
(\frac{x_0}{2^{n-1}}, \frac{x_1}{2^{n-1}}-1) + (\frac{1}{2}, \frac{1}{2})(0, 1), &\text{ if }\frac{x_0}{2^n} < \frac{1}{2}
\end{cases}
\end{align*}
Then we are almost done with our proof. We know that $r_{0n} = 1 \iff \frac{x_0}{2^n} \geq \frac{1}{2}$ and $\bar{r}_{0n} = 1 \iff \frac{x_0}{2^n} < \frac{1}{2}$. In addition, if $\mathbf{r}$ is set to generate $(\frac{x_0}{2^n}, \frac{x_1}{2^n})$, then using $r_0, r_{n-1}, ..., r_2, r_1$ as inputs to $U_{n-1}$, we get
\begin{align*}
U_{n-1} &=
\begin{cases}
(\frac{x_0}{2^{n-1}}-1, \frac{x_1}{2^{n-1}}), &\text{ if }r_{0n} = 1\\
(\frac{x_0}{2^{n-1}}, \frac{x_1}{2^{n-1}}-1), &\text{ if }\bar{r}_{0n} = 1
\end{cases}
\end{align*}
This is because when $r_0 = 0$, removing $r_n$ is like shifting the fractional bits left (multiplying by 2) and then subtracting 1 if the bit removed was 1. When $r_0 = 1$, then removing $r_n$ doesn't change anything, which still satisfies $1 = 1(2) - 1$. At this point, we can invoke the B-algorithm to conclude that $U_{n}$ will successfully realize $(\frac{x_0}{2^n}, \frac{x_1}{2^n})$. 
\end{proof}

\begin{figure}[!t]
\centering
\includegraphics[width=3.0in]{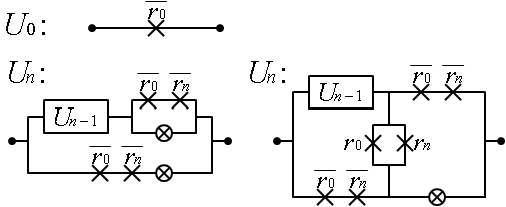}
\caption{\textbf{2-state UPG reduced} After reducing the exponential UPG, we get two linear UPGs. One is a sp-circuit and uses 2 stochastic switches. The other is non-sp and uses 1 stochastic switch.}
\label{UPG2RED}
\end{figure}

\begin{lemma}
A 2-state UPG $U_{n}$ with inputs $r_0, r_n, \ldots, r_2, r_1$ can be constructed with a linear number of switches using either of the two recursive constructions in Figure \ref{UPG2RED}, where the $n$ bits used in $U_{n-1}$ are $r_0, r_{n-1}, \ldots, r_2, r_1$.
\end{lemma}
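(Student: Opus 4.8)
The plan is to start from the exponential UPG of the previous lemma and algebraically collapse the duplicated sub‑circuits. Recall that in the exponential construction, $U_n = L_n + (\tfrac12,\tfrac12)R_n$ with $L_n = \bar r_{0n}U_{n-1}$ and $R_n = r_{0n}U_{n-1} + \bar r_{0n}$, where \emph{both} $L_n$ and $R_n$ contain a copy of $U_{n-1}$ on the same input bits $r_0,r_{n-1},\dots,r_1$. The key observation is that these two copies are identical circuits sharing their terminals through the deterministic switches $r_{0n}$ and $\bar r_{0n}$, so one can factor $U_{n-1}$ out. Concretely, I would show that the combination $\bar r_{0n}U_{n-1} + (\tfrac12,\tfrac12)(r_{0n}U_{n-1} + \bar r_{0n})$ can be rewritten, using the commutativity/distributivity identities for $\min$/$\max$ available from the composition rules and from the deterministic‑switch identities ($x\cdot 0 = 0$, $x + 1 = 1$, $\bar x\bar y = \overline{x+y}$, etc.), as a circuit that contains only a single copy of $U_{n-1}$ plus a bounded number of deterministic switches and one $(\tfrac12,\tfrac12)$ pswitch. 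Carrying this out in two different ways — once routing $U_{n-1}$ so that the resulting circuit is series‑parallel with two stochastic switches used, and once allowing a non‑sp topology that uses only one stochastic switch — yields exactly the two constructions of Figure \ref{UPG2RED}.

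The proof itself is an induction on $n$, mirroring the previous lemma. For the base case $U_0 = \bar r_0$ (no stochastic switches) the claim is immediate. For the inductive step I would assume $U_{n-1}$, built on inputs $r_0,r_{n-1},\dots,r_1$ by the reduced construction, correctly realizes $(\tfrac{x_0}{2^{n-1}},\tfrac{x_1}{2^{n-1}})$ whenever its bits are set appropriately; then I would verify that the reduced $U_n$ of Figure \ref{UPG2RED}, evaluated on the two cases $r_{0n}=1$ and $\bar r_{0n}=1$, reduces to exactly
\begin{align*}
U_n &=
\begin{cases}
(1,0) + (\tfrac12,\tfrac12)U_{n-1}, & r_{0n}=1\\
U_{n-1} + (\tfrac12,\tfrac12)(0,1), & \bar r_{0n}=1
\end{cases}
\end{align*}
which is precisely the B‑algorithm recursion invoked in the previous lemma, together with the bit‑shifting fact that the inputs $r_0,r_{n-1},\dots,r_1$ drive $U_{n-1}$ to the distribution $(\tfrac{x_0}{2^{n-1}}-1,\tfrac{x_1}{2^{n-1}})$ or $(\tfrac{x_0}{2^{n-1}},\tfrac{x_1}{2^{n-1}}-1)$ according to the removed bit. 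Hence $U_n$ realizes $(\tfrac{x_0}{2^n},\tfrac{x_1}{2^n})$, proving correctness. The switch count is a trivial second induction: each reduced step adds at most two (respectively one) stochastic switches and a constant number of deterministic ones, so $U_n$ uses $\Theta(n)$ switches.

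The main obstacle is the algebraic factoring step — establishing rigorously that the two embedded copies of $U_{n-1}$ in $L_n$ and $R_n$ can be merged into one without changing the realized distribution, and doing so in a way that produces a \emph{valid} circuit topology (every path still goes from source to sink, the deterministic switches correctly gate which sub‑paths are live). This is essentially a deterministic relay‑circuit equivalence that must hold identically for each setting of the input bits $r_0,\dots,r_n$; I would verify it by checking the four (or fewer) cases of $(r_{0n},\text{value of }U_{n-1})$ and appealing to the min/max path interpretation of non‑sp circuits given in the introduction, rather than by a purely symbolic manipulation. Once that equivalence is in hand, the rest is the routine induction sketched above, and the two variants in Figure \ref{UPG2RED} differ only in how the shared $U_{n-1}$ block is wired into the surrounding deterministic gates.
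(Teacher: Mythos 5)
Your proposal matches the paper's proof in its essentials: both start from the exponential UPG, drop the terms killed by the deterministic gates ($r_{0n}\cdot 0$ and $\bar r_{0n}\cdot 1$), and factor the shared copy of $U_{n-1}$ out of $L_n$ and $p_nR_n$ to obtain the two reduced forms of Figure \ref{UPG2RED}. The paper carries out the merge as a short symbolic chain of min/max identities (ending with $U_{n-1}(\bar r_{0n}+p_n)+\bar r_{0n}p_n$ and the bracketed non-sp path decomposition) rather than by case analysis on $r_{0n}$, but that is a difference of bookkeeping, not of approach, and your correctness/size arguments are the same ones the paper relies on.
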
 
\begin{proof}
We will prove this by algebraically reducing the exponential UPG. Let $p_i = (\frac{1}{2}, \frac{1}{2})$ be i.i.d. 
\begin{align*}
U_{n} &= L_{n} + p_nR_{n}\\
         &= r_{0n}\cdot 0 + \bar{r}_{0n}U_{n-1} + p_n(r_{0n}U_{n-1} + \bar{r}_{0n}\cdot 1)\\
         &= \bar{r}_{0n}U_{n-1} + p_n(r_{0n}U_{n-1} + \bar{r}_{0n})\\
         &= U_{n-1}(\bar{r}_{0n} + r_{0n}p_n) + \bar{r}_{0n}p_n\\
         &= U_{n-1}(\bar{r}_{0n} + p_n) + \bar{r}_{0n}p_n
\end{align*}
This is exactly the form of the series parallel construction in Fig. \ref{UPG2RED}
\begin{align*}
U_{n} &= L_{n} + p_nR_{n}\\
         &= r_{0n}\cdot 0 + \bar{r}_{0n}U_{n-1} + p_n(r_{0n}U_{n-1} + \bar{r}_{0n}\cdot 1)\\
         &= \bar{r}_{0n}U_{n-1} + p_n(r_{0n}U_{n-1} + \bar{r}_{0n})\\
         &= [U_{n-1}][\bar{r}_{0n}] + [\bar{r}_{0n}][r_{0n}][\bar{r}_{0n}]\\
         &\ \ \ \ \ + [U_{n-1}][r_{0n}][p_n] + [\bar{r}_{0n}][p_n]
\end{align*}
This is exactly the form of the non-series parallel construction in Fig. \ref{UPG2RED}
\end{proof}
The result thus far is nice, but one feels somewhat unsatisfied at the number of times $r_0$ must be used. It is strange that $r_0$ is used so many times in the circuit construction even though it is only set to 1 for a single distribution in the truth table - the deterministic distribution $(1, 0)$. We solve this problem to get to the final form of Wilhelm and Bruck's 2-state UPG.

\begin{figure*}[!b]
\centering
\includegraphics[width=6.5in]{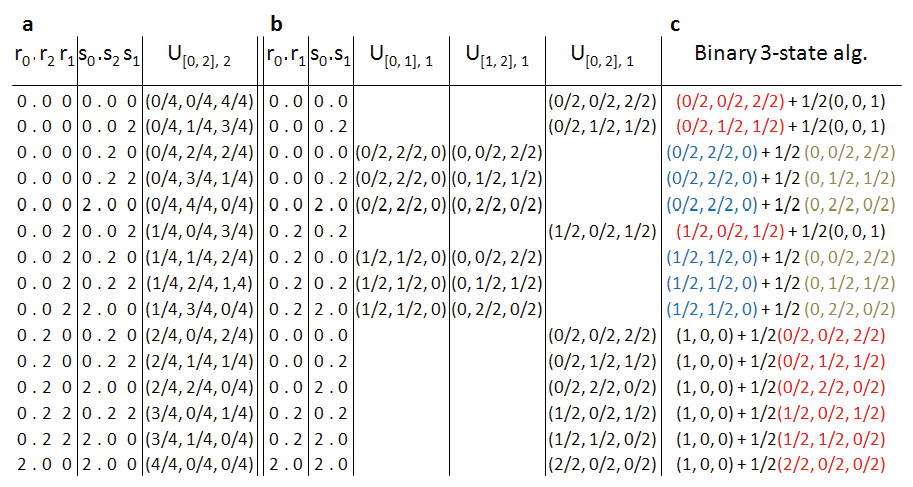}
\caption{\textbf{3-state UPG mapping} The mapping for a UPG that generates distributions of the form $(\frac{x_0}{4}, \frac{x_1}{4}, \frac{x_2}{4})$. Here we notice that $y_0$ is being used a couple of times. Again, we don't make predictions about inputs that don't correspond to a probability distribution: this includes both those with a sum greater than 1 and those for which the $y$ encoding is less than the $x$ encoding (which would mean a negative $\frac{x_1}{2^n}$ value).}
\label{UPG3TAB}
\end{figure*}

\begin{theorem}
A 2-state UPG $U_{n}$ with inputs $r_0, r_n, \ldots, r_2, r_1$ can be constructed with a linear number of switches using either of the two recursive constructions in Figure \ref{UPG2NOBIT}, where the n bits used in $U_{n-1}$ are $r_0, r_{n-1}, \ldots, r_2, r_1$.
\end{theorem}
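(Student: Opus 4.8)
The plan is to obtain the two constructions of Figure \ref{UPG2NOBIT} not from scratch but by algebraically pruning the redundant occurrences of $r_0$ from the two linear forms already proved for $U_n$ (the series-parallel and non-sp reductions of Figure \ref{UPG2RED}), and then checking that each pruning step is an exact circuit identity, so that correctness and the linear switch count carry over unchanged. Crucially, both target forms keep $r_0$ among the inputs passed down to $U_{n-1}$; the reduction only removes the \emph{extra} copies of $r_0$ that appear in the per-level gates. The whole argument rests on one structural fact about the UPG, which I would isolate first.

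First I would prove, by induction on $m$, the \emph{override property}: for every admissible input, $U_m$ realizes the deterministic distribution $(1,0)$ whenever $r_0=1$. The base case is immediate since $U_0=\bar r_0$ is the open switch $(1,0)$ when $r_0=1$, and the inductive step follows from either linear form because $\bar r_{0n}=\bar r_0\bar r_n$ is open when $r_0=1$ and series composition with an open switch forces state $0$; this is consistent with the encoding, since $r_0=1$ forces $x_0=2^n$ and hence all fractional bits to $0$. From the override property I extract the idempotence identities that drive the reduction, each an immediate consequence of the two-state min-max composition rules:
\begin{align*}
\bar r_0\, U_{m} &= U_{m}, & r_0\, U_{m} &= (1,0).
\end{align*}
The first holds because $\bar r_0$ in series is transparent when $r_0=0$ and forces $(1,0)$ when $r_0=1$, which is the value $U_m$ already takes there; the second holds symmetrically. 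I would also record the complementary-switch identity $\bar r_{0n}r_{0n}=(1,0)$, which makes the stray term $[\bar r_{0n}][r_{0n}][\bar r_{0n}]$ in the non-sp form vanish, an open switch being the identity for a parallel connection.

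With these in hand the reduction is a short calculation in the distributive min-max lattice. For the series-parallel form $U_n=U_{n-1}(\bar r_{0n}+p_n)+\bar r_{0n}p_n$, distributing series over parallel and applying $\bar r_0 U_{n-1}=U_{n-1}$ collapses the first factor $\bar r_{0n}=\bar r_0\bar r_n$ to $\bar r_n$, leaving $U_n=U_{n-1}(\bar r_n+p_n)+\bar r_0\bar r_n p_n$, which uses $\bar r_0$ once per level instead of twice. For the non-sp form I would first delete the vanishing term via $\bar r_{0n}r_{0n}=(1,0)$, then use $r_0 U_{n-1}=(1,0)$ with the distributive law to show $U_{n-1}(r_0+r_n)p_n=U_{n-1}r_n p_n$, eliminating the remaining interior $r_0$. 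A quick check shows the lone surviving $\bar r_0$ in the term $\bar r_0\bar r_n p_n$ cannot be pruned: dropping it makes that term contribute $p_n$ on the override input, destroying the deterministic output. Because every manipulation is an identity preserving the realized distribution on all inputs, the pruned circuits realize the same distributions as the linear forms, hence are valid UPGs by the B-algorithm argument of the previous lemmas; and since each recursive step still adds only a constant number of switches, a one-line induction reproduces the linear bound.

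The main obstacle I anticipate is the non-sp reduction, specifically the absorption $U_{n-1}(r_0+r_n)p_n=U_{n-1}r_n p_n$: this is exactly where the override property is indispensable, since it is what makes $U_{n-1}r_0$ the constant open switch, and it must be argued as an identity of circuits over all admissible inputs rather than as a numerical coincidence for a single distribution. Establishing the override property robustly enough to support both idempotence identities is therefore the real work; the remaining distributivity bookkeeping is routine, provided one respects that $\min$ and $\max$ distribute only because the state set is totally ordered (a chain lattice).
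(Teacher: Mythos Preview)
Your override property and the identities $\bar r_0\,U_m = U_m$ and $r_0\,U_m = 0$ are correct, and they are exactly the shadow of what the paper exploits. But you have misidentified the target. Figure~\ref{UPG2NOBIT} does \emph{not} keep $r_0$ ``among the inputs passed down'' with one copy per level; its whole point (the caption states it explicitly) is to introduce a \emph{new} recursive circuit $U'_n$ that uses only the bits $r_n,\ldots,r_1$, with recursion $U'_n = U'_{n-1}(\bar r_n + p_n) + p_n\bar r_n$ and base $U'_0 = 1$, and then to append a single $\bar r_0$ at the very top: $U_n = \bar r_0\,U'_n$. Your pruned form $U_n = U_{n-1}(\bar r_n + p_n) + \bar r_0\bar r_n p_n$ is a valid identity, but as a recursive \emph{construction} it still instantiates the full $U_{n-1}$ at each stage and therefore drags $\bar r_0$ through every level, yielding a different (though still linear) circuit with $n{+}1$ copies of $\bar r_0$ rather than one.

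The paper's proof is the constructive strengthening of your override property: it takes as induction hypothesis the factorization $U_{n-1} = \bar r_0\,U'_{n-1}$ (which implies your override but also names the cofactor), substitutes this into the series--parallel form of Figure~\ref{UPG2RED}, factors $\bar r_0$ out of both summands, and then absorbs the now-redundant inner $\bar r_0$ via $\bar r_0(\bar r_n\bar r_0 + p_n) = \bar r_0(\bar r_n + p_n)$ to obtain $U_n = \bar r_0\,U'_n$. Your own observation that ``the lone surviving $\bar r_0$ cannot be pruned'' is precisely why one must change the recursion variable from $U$ to $U'$: inside $U'_n$ there is no $\bar r_0$ left to prune, because it has already been pulled outside once and for all. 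So the missing step is not further local algebra on $U_n$, but the introduction of $U'_n$ and an inductive proof of the global factorization $U_n = \bar r_0\,U'_n$.
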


\begin{proof}
We prove that the circuits in Figure \ref{UPG2RED} and Figure \ref{UPG2NOBIT} are equivalent by using induction.
Base Case: $U_{0} = \bar{r}_0 = \bar{r}_0U'_{0}$

Inductive Step:
Assume that $U_{n-1} = r_0U'_{n-1}$. Then,
\begin{align*}
U_{n} &= U_{n-1}(\bar{r}_n\bar{r}_0 + p_n) + p_n\bar{r}_n\bar{r}_0\\
         &= \bar{r}_0U'_{n-1}(\bar{r}_n\bar{r}_0 + p_n) + p_n\bar{r}_n\bar{r}_0\\
         &= \bar{r}_0(U'_{n-1}(\bar{r}_n\bar{r}_0 + p_n) + p_n\bar{r}_n)\\
         &= \bar{r}_0(U'_{n-1}(\bar{r}_n + p_n) + p_n\bar{r}_n)\\
         &= \bar{r}_0U'_{n}
\end{align*}
\end{proof}

In this final form, the series-parallel circuit uses $2n$ pswitches and $2n+1$ deterministic switches. The non-sp construction uses $n$ pswitches and $3n+1$ deterministic switches.

Before we can generalize to $N$ states, we need to define some new notation and equalities. Define the 2-state UPG $U_{[i, i+1], n}$ to be a circuit generating probability $\frac{x_0}{2^n}$ on state $i$ and $\frac{x_1}{2^n}$ on state $i+1$.

Under this notation, the previous results we proved were for $U_{[0, 1], n}$. We can extend these results to $U_{[i, i+1], n}$ with the following changes:
\begin{enumerate}
\item The input bits $\mathbf{r}$ will take on values $i$ and $i+1$ instead of $0$ and $1$ respectively
\item The pswitch $p_n$ will take on values $i$ and $i+1$ with $\frac{1}{2}$ probability each instead of taking on values $0$, $1$ with equal probability. 
\end{enumerate}
But what if we didn't use values $i$ and $i+1$ for $\mathbf{r}, p_n$? 
\begin{lemma}
Let $U_{[i, i+1], n}(a, b)$ be a circuit identical to $U_{[i, i+1], n}$ except that $\mathbf{r}, p_n$ take on values $a$ and $b$. Our previous results for probability generation are for $U_{[i, i+1], n}(i, i+1)$. Then if $a \leq i$, $b \geq i+1$,
\begin{align*}
U_{[i, i+1], n}(i, i+1) = (i + U_{[i, i+1], n}(a, b))(i+1)
\end{align*}
\end{lemma}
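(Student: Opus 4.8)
The plan is to exploit the fact that a multivalued relay circuit computes nothing more than a $\max$--$\min$ formula (with complemented switches allowed) on the states of its switches, and that such formulas commute with order isomorphisms of the underlying value set. So the whole statement will reduce to: the map that clamps $\{a,b\}$ onto $\{i,i+1\}$ is exactly the order isomorphism $a\mapsto i$, $b\mapsto i+1$, and it passes through the circuit.

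First I would make the structural observation that every switch appearing in $U_{[i,i+1],n}(a,b)$ -- each input bit $r_j$ (possibly complemented) and each pswitch $p_j$ (possibly complemented) -- takes values only in the two-element set $\{a,b\}$. Indeed, $U_{[i,i+1],n}$ was obtained from the reduced $2$-state UPG (Figures \ref{UPG2RED}, \ref{UPG2NOBIT}) purely by renaming its two switch values $0,1$ to $i,i+1$, so it contains only input bits and pswitches and no other constant switches, and the ``complement'' inside the circuit is the two-element swap $i\leftrightarrow i+1$; renaming again to $a,b$ leaves only switches valued in $\{a,b\}$ with complement $a\leftrightarrow b$. Since $\max$, $\min$ and this two-element complement all preserve membership in $\{a,b\}$, and the circuit output is $\max$ over source-to-sink paths of $\min$ over the (possibly complemented) switch values on the path, every assignment of the input bits together with every realization of the pswitches makes the output of $U_{[i,i+1],n}(a,b)$ lie in $\{a,b\}$.

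Next I would introduce $\phi:\{a,b\}\to\{i,i+1\}$ with $\phi(a)=i$, $\phi(b)=i+1$. Because $a\le i<i+1\le b$, $\phi$ is an order isomorphism, hence it commutes with $\min$ and $\max$, and it carries the two-element complement on $\{a,b\}$ to the two-element complement on $\{i,i+1\}$. Moreover, for $v\in\{a,b\}$ one checks directly (using $a\le i$ and $b\ge i+1$) that $\phi(v)=\min(i+1,\max(i,v))$, i.e. $\phi(v)=(i+v)(i+1)$ in the paper's notation. Now couple the two circuits: an input assignment setting $r_j$ to $a$ (resp. $b$) in $U_{[i,i+1],n}(a,b)$ is matched with the assignment setting $r_j$ to $i$ (resp. $i+1$) in $U_{[i,i+1],n}(i,i+1)=U_{[i,i+1],n}$, and likewise each pswitch realization $a$ (resp. $b$) is matched with $i$ (resp. $i+1$); since every pswitch is $(\tfrac12,\tfrac12)$ on its two values, this is a probability-preserving bijection of the pswitch sample spaces. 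A short induction on the circuit structure (on $\max$ over paths of $\min$ over switch values, or equivalently on the recursive construction in $n$) shows that applying $\phi$ to every switch value turns the evaluation of $U_{[i,i+1],n}(a,b)$ into the evaluation of $U_{[i,i+1],n}(i,i+1)$, with output exactly $\phi$ of the former output.

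Combining: for each coupled pair of assignments/realizations, the output of $U_{[i,i+1],n}(i,i+1)$ equals $\phi$ of the output of $U_{[i,i+1],n}(a,b)$, which equals $(i+U_{[i,i+1],n}(a,b))(i+1)$ on that realization; since the coupling is measure preserving, the induced distributions coincide, giving $U_{[i,i+1],n}(i,i+1)=(i+U_{[i,i+1],n}(a,b))(i+1)$. (Sanity check: when $a=i$, $b=i+1$ the right side collapses to $(i+v)(i+1)=v$ for $v\in\{i,i+1\}$, consistent with the remark that the earlier results are the $U_{[i,i+1],n}(i,i+1)$ case.) The main obstacle I anticipate is bookkeeping rather than mathematics: pinning down that $U_{[i,i+1],n}$ really involves only switches valued in $\{i,i+1\}$ together with $\max$, $\min$ and the two-element complement -- so that $\phi$ commutes with the entire evaluation -- and handling complemented input bits and pswitches consistently under the coupling. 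Once that setup is in place, the commutation of an order isomorphism with $\max$--$\min$ does all the work.
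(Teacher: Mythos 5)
Your proposal is correct and follows essentially the same route as the paper's (two-sentence) proof: observe that $U_{[i,i+1],n}(a,b)$ realizes the target distribution on the states $a$ and $b$, and that composing with the constants $i$ (in parallel) and $i+1$ (in series) clamps $a\mapsto i$ and $b\mapsto i+1$ precisely because $a\le i$ and $b\ge i+1$. The order-isomorphism/coupling argument you give is just a careful justification of the first step, which the paper asserts without elaboration.
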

\begin{proof}
$U_{[i, i+1], n}(a, b)$ will realize the desired distribution on states $a$ and $b$. Then if $a \leq i$, $b \geq i+1$, it is trivially that taking the max of $i$ and the min of $i+1$ will give us the same distribution on states $i$ and $i+1$.
\end{proof}
We want to avoid this messy notation for future generalizations. All future instances of $U_{[i, i+1], n}$ are actually representing $(i + U_{[i, i+1], n})(i+1)$. In other words, $U_{[i, i+1], n}$ will always generate the `correct' distribution on states $i$ and $i+1$ as long as $\mathbf{r}, p_n$ take on values $a \leq i$ and $b \geq i+1$.

\begin{figure}[!t]
\centering
\includegraphics[width=3.0in]{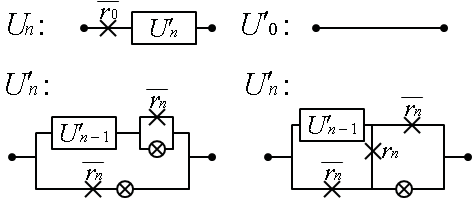}
\caption{\textbf{2-state UPG removed bit} Even though $r_0$ was important in controlling the cases of our algorithm, it turns out we can remove it from the recursive construction and just append it at the end. Note that $U'_{n}$ only uses bits $r_n, \ldots, r_2, r_1$}
\label{UPG2NOBIT}
\end{figure}

\subsection{3-state UPG}

Now we will derive the 3-state UPG by generalizing the steps we used in the 2-state UPG derivation. We first construct an exponential 3-state UPG that follows closely from the algorithm for realizing 3-state distributions. Then, we will algebraically reduce it to a quadratic construction and remove unnecessarily repetitive bits.

{\it Definition: (3-state UPG)}.
A 3-state UPG $U_{[0, 2], n}$ is a circuit that generates distributions of the form $(\frac{x_0}{2^n}, \frac{x_1}{2^n}, \frac{x_2}{2^n})$ using 2 sets of $n+1$ input bits $\mathbf{r}, \mathbf{s}$. These input bits will take on values $0$ or $2$; when the bits $\mathbf{r}$, in the order $r_0, r_n, r_{n-1}, \ldots, r_2, r_1$ are set to the binary representation of $\frac{x_0}{2^n}$ and the bits $\mathbf{s}$, in the order $s_0, s_n, s_{n-1}, \ldots, s_2, s_1$ are set to the binary representation of $\frac{x_0}{2^n} + \frac{x_1}{2^n}$ (with the symbol 2 replacing the boolean 1), then the circuit $U_{[0, 2], n}$ will realize distribution $(\frac{x_0}{2^n}, \frac{x_1}{2^n}, \frac{x_2}{2^n})$. In other words, to realize any desired binary probability $(\frac{x_0}{2^n}, \frac{x_1}{2^n}, \frac{x_2}{2^n})$, we set the input bits $\mathbf{r}$ and $\mathbf{s}$ to $p_0$ and $p_0 + p_1$ respectively.

As an example, we look at the input-output mappings for the circuit $U_{[0, 2], 2}$. If we input $\mathbf{r} = 002, \mathbf{s} = 020$, the circuit will realize $(\frac{1}{4}, \frac{1}{4}, \frac{1}{2})$ since $\frac{1}{4} = 0.01_2$ and $\frac{1}{4} + \frac{1}{4} = \frac{1}{2} = 0.10_2$. The input $\mathbf{r} = 020, \mathbf{s} = 022$ will realize $(\frac{1}{2}, \frac{1}{4}, \frac{1}{4})$ since $\frac{1}{2} = 0.10_2$ and $\frac{1}{2} + \frac{1}{4} = \frac{3}{4} = 0.11_2$. (see Figure \ref{UPG3TAB}a).

Again, the motivation for the exponential UPG comes from an interesting property in the truth table of $U_{[0, 2], 2}$. We first eumerate all outputs given inputs corresponding to valid probability distributions. For each row (input), we ask: What are the outputs of $U_{[0, 1], 1}, U_{[1, 2], 1}, U_{[0, 2], 1}$ if we use the inputs $r_0, r_1$ for $U_{[0, 1], 1}$, the inputs $s_0, s_1$ for $U_{[1, 2], 1}$, and $r_0, r_1, s_0, s_1$ for $U_{[0, 2], 2}$? Is there a relationship to the construction of the $U_{[0, 2], 1}$ output probability. 

The answer is yes. We find that they are the same probability distributions that are used in the binary algorithm for 3-states (Fig. \ref{UPG3TAB}bc). We can now derive the exponential recursive construction.

\begin{figure}[!t]
\centering
\includegraphics[width=3.0in]{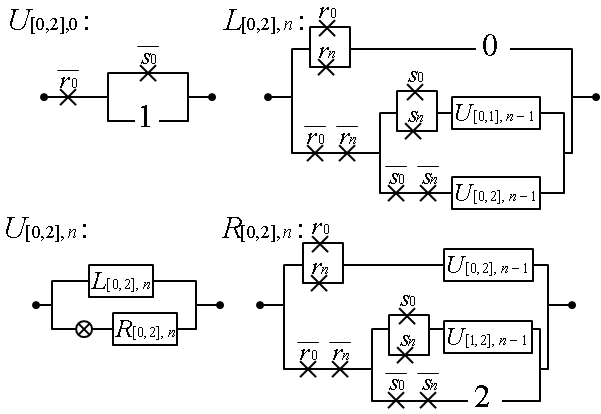}
\caption{\textbf{3-state exponential UPG} We design a UPG directly from the algorithm. Note that this uses an exponential number of switches since the recursion uses two copies of $U_{[0, 2], n-1}$.}
\label{UPG3EXP}
\end{figure}

\begin{lemma}
A 3-state UPG $U_{[0, 2], n}$ with inputs $r_0, r_n, \ldots, r_2, r_1$ and $s_0, s_n, \ldots, s_2, s_1$ can be constructed with an exponential number of switches using the recursive construction in Figure \ref{UPG3EXP}, where the bits used in $U_{[0, 1], n-1}$ are $r_0, r_{n-1}, \ldots, r_2, r_1$, the bits used in $U_{[1, 2], n-1}$ are $s_0, s_{n-1}, \ldots, s_2, s_1$, and the bits used in $U_{[0, 2], n-1}$ are $r_0, r_{n-1}, \ldots, r_2, r_1$ and $s_0, s_{n-1}, \ldots, s_2, s_1$.
\end{lemma}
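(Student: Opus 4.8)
The plan is to argue by induction on $n$, mirroring the proof of the exponential $2$-state UPG lemma with the $2$-state B-algorithm replaced by the Binary $3$-state Distributions theorem and with two input strings $\mathbf{r}$ and $\mathbf{s}$ carried along instead of one. First I would dispatch the base case $n=0$: here $U_{[0,2],0}$ is the small deterministic circuit shown in Figure~\ref{UPG3EXP}, and I would check by a finite enumeration, using only the min/max rules and the selector identities $r\cdot 0$ and $\bar r\cdot c$, that for each valid pair of single-bit encodings $r_0,s_0\in\{0,2\}$ (those for which $\frac{x_0}{2^0}\le\frac{x_0+x_1}{2^0}$) the circuit outputs the corresponding deterministic distribution $(1,0,0)$, $(0,1,0)$, or $(0,0,1)$.

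For the inductive step I would assume $U_{[0,1],n-1}$, $U_{[1,2],n-1}$, and $U_{[0,2],n-1}$ all realize the correct distributions on their specified input bits, and then split into the three cases governed by the most significant bits, matching the three cases of the $3$-state B-algorithm. Writing $r_{0n}=r_0+r_n$ and $s_{0n}=s_0+s_n$, so that $r_{0n}=2\iff\frac{x_0}{2^n}\ge\frac12$ and $s_{0n}=2\iff\frac{x_0+x_1}{2^n}\ge\frac12$, the three cases are: $r_{0n}=2$; $r_{0n}=0$ with $s_{0n}=2$; and $r_{0n}=s_{0n}=0$. In each case the deterministic selectors in Figure~\ref{UPG3EXP} collapse the recursive circuit onto one branch, and I would simplify it with the min/max algebra to obtain the matching line of the B-algorithm decomposition
\begin{align*}
\text{Case 1: } &\textstyle{(1,0,0)+(\frac12,0,\frac12)\big(\frac{x_0-2^{n-1}}{2^{n-1}},\frac{x_1}{2^{n-1}},\frac{x_2}{2^{n-1}}\big),}\\
\text{Case 2: } &\textstyle{\big(\frac{x_0}{2^{n-1}},\frac{2^{n-1}-x_0}{2^{n-1}},0\big)+(\frac12,0,\frac12)\big(0,\frac{2^{n-1}-x_2}{2^{n-1}},\frac{x_2}{2^{n-1}}\big),}\\
\text{Case 3: } &\textstyle{\big(\frac{x_0}{2^{n-1}},\frac{x_1}{2^{n-1}},\frac{x_2-2^{n-1}}{2^{n-1}}\big)+(\frac12,0,\frac12)(0,0,1).}
\end{align*}
Matching each case requires the same bit-shift observation used in the $2$-state proof: feeding $r_0,r_{n-1},\dots,r_1$ to a level-$(n-1)$ UPG encodes $\frac{x_0}{2^{n-1}}$ or $\frac{x_0}{2^{n-1}}-1$ according to whether the dropped bit $r_n$ was $0$ or $2$ (with $1=1\cdot 2-1$ still holding when $r_0=2$), and likewise for $\mathbf{s}$; thus $U_{[0,2],n-1}$ produces the residual $3$-state distribution on the right of Cases 1 and 3, while $U_{[0,1],n-1}$ and $U_{[1,2],n-1}$ produce the two-state residuals in Case 2, where one invokes the convention $U_{[i,i+1],m}=(i+U_{[i,i+1],m})(i+1)$ so that a two-state sub-UPG behaves correctly even when its bits take the ambient values $0$ and $2$. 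Applying the Binary $3$-state Distributions theorem then closes the induction. For the switch count, the recursion of Figure~\ref{UPG3EXP} contains two copies of $U_{[0,2],n-1}$ together with one $U_{[0,1],n-1}$, one $U_{[1,2],n-1}$ (both linear by the earlier $2$-state results), one $(\frac12,0,\frac12)$ pswitch, and $O(1)$ deterministic switches, so the size $T(n)$ satisfies $T(n)=2T(n-1)+O(n)$ and hence $T(n)=\Theta(2^n)$, which is exponential as claimed.

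The step I expect to be the main obstacle is the simultaneous bookkeeping of the two encodings: I must verify that the single fixed circuit of Figure~\ref{UPG3EXP} genuinely reduces to the correct B-algorithm branch in all three cases, and in particular that the well-formedness constraint $\frac{x_0}{2^n}\le\frac{x_0+x_1}{2^n}$ — equivalently, that the $\mathbf{s}$-encoding dominates the $\mathbf{r}$-encoding — is preserved when the most significant bits are dropped, so that the inputs handed to the sub-UPGs remain valid probability encodings and the residuals in Case 2 are genuinely supported on $\{0,1\}$ and $\{1,2\}$ respectively. Once that consistency is checked, the remaining manipulations are the routine min/max algebra already carried out for the $2$-state UPG.
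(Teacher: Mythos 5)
Your proposal follows essentially the same route as the paper's proof: induction on $n$ with a finite-enumeration base case, reduction of the selector circuit to the three branches of the 3-state B-algorithm according to $r_{0n}$ and $s_{0n}$, the same bit-shift observation for the dropped most-significant bits, and the $(i+U_{[i,i+1],m})(i+1)$ convention for the shifted two-state sub-UPGs. The well-formedness issue you flag does go through (in each case the residual $\mathbf{s}$-encoding still dominates the residual $\mathbf{r}$-encoding), and your explicit recurrence $T(n)=2T(n-1)+O(n)$ is a small addition the paper leaves implicit.
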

\begin{proof}
We can prove this inductively.

Base Case: $U_{[0, 2], 0} = \bar{r}_0(\bar{s}_0+1)$ realizes $(1, 0, 0)$ when $r_0 = 2$, $(0, 1, 0)$ when $r_0 = 0, s_0 = 2$, and $(0, 0, 1)$ when $r_0 = 0, s_0 = 0$.

Inductive Step: Assume $U_{[0, 2], n-1}$ realizes the correct distributions given the defined inputs. Then we want to show that
\begin{align*}
U_{[0, 2], n} &=L_{[0, 2], n} + (\frac{1}{2}, 0, \frac{1}{2})R_{[0, 2], n},\\
L_{[0, 2], n} &=r_{0n}\cdot 0 + \bar{r}_{0n}(s_{0n}U_{[0, 1], n-1} + \bar{s}_{0n}U_{[0, 2], n-1}),\\
R_{[0, 2], n} &=r_{0n}U_{[0, 2], n-1} + \bar{r}_{0n}(s_{0n}U_{[1, 2], n-1} + \bar{s}_{0n}\cdot 2)
\end{align*}
This is equivalent to showing,
\begin{align*}
U_{r, n} &=L_{r,n} + (\frac{1}{2}, 0, \frac{1}{2})R_{r, n},\\
L_{r, n} &=
\begin{cases}
(1, 0, 0),       &\text{ if }r_{0n} = 2\\
U_{[0, 1], n-1}, &\text{ if }\bar{r}_{0n} = 2, s_{0n} = 2\\
U_{[0, 2], n-1}, &\text{ if }\bar{r}_{0n} = 2, \bar{s}_{0n} = 2
\end{cases}\\
R_{r,n}  &=
\begin{cases}
U_{[0, 2], n-1}, &\text{ if }r_{0n} = 2\\
U_{[1, 2], n-1}, &\text{ if }\bar{r}_{0n} = 2, s_{0n} = 2\\
(0, 0, 1),       &\text{ if }\bar{r}_{0n} = 2, \bar{s}_{0n} = 2
\end{cases}
\end{align*}
which is equivalent to showing,
\begin{align*}
U_{[0, 2], n} &=
\begin{cases}
(1, 0, 0) + (\frac{1}{2}, 0, \frac{1}{2})U_{[0, 2], n-1}, &\text{ if }r_{0n} = 2\\
U_{[0, 1], n-1} + (\frac{1}{2}, 0, \frac{1}{2})U_{[1, 2], n-1}, &\text{ if }\bar{r}_{0n} = 2, s_{0n} = 2\\
U_{[0, 2], n-1} + (\frac{1}{2}, 0, \frac{1}{2})(0, 0, 1), &\text{ if }\bar{r}_{0n} = 2, \bar{s}_{0n} = 2
\end{cases}
\end{align*}
Here, we are reminded of the algorithm for realizing any 3-state binary distribution: $(\frac{x_0}{2^n}, \frac{x_1}{2^n}, \frac{x_2}{2^n}) =$
\begin{align*}
\begin{cases} 
(1, 0, 0)                                                         + \frac{1}{2}(\frac{x_0}{2^{n-1}}-1, \frac{x_1}{2^{n-1}}, \frac{x_2}{2^{n-1}}), &\text{ if }\frac{x_0}{2^n} \geq \frac{1}{2}\\
(\frac{x_0}{2^{n-1}}, \frac{x_1+x_2}{2^{n-1}}-1, 0)               + \frac{1}{2}(0, \frac{x_0 + x_1}{2^{n-1}}-1, \frac{x_2}{2^{n-1}}),             &\text{ if }\frac{x_0}{2^n} < \frac{1}{2},\\
                                                                                                                                                  &\frac{x_0 + x_1}{2^n} \geq \frac{1}{2}\\
(\frac{x_0}{2^{n-1}}, \frac{x_1}{2^{n-1}}, \frac{x_2}{2^{n-1}}-1) + \frac{1}{2}(0, 0, 1),                                                         &\text{ if }\frac{x_0}{2^n} < \frac{1}{2},\\
                                                                                                                                                  &\frac{x_0 + x_1}{2^n} < \frac{1}{2}
\end{cases}
\end{align*}

From here, it is straightforward to complete our proof. $r_{0n} = 2 \iff \frac{x_0}{2^n} \geq \frac{1}{2}$, $\bar{r}_{0n} = 2 \iff \frac{x_0}{2^n} < \frac{1}{2}$, $s_{0n} = 2 \iff \frac{x_0 + x_1}{2^n} \geq \frac{1}{2}$, and $\bar{s}_{0n} = 2 \iff \frac{x_0 + x_1}{2^n} < \frac{1}{2}$. In addition, if $\mathbf{r}, \mathbf{s}$ are set to generate $(\frac{x_0}{2^n}, \frac{x_1}{2^n}, \frac{x_2}{2^n})$, then using $r_0, r_{n-1}, \ldots, r_2, r_1$ and $s_0, s_{n-1}, \ldots, s_2, s_1$ as inputs to $U_{[0, 1], n-1}$, $U_{[1, 2], n-1}$ and $U_{[0, 2], n-1}$ we get the corresponding values to the 3-state binary algorithm. Then we can conclude that $U_{[0, 2], n}$ will successfully realize $(\frac{x_0}{2^n}, \frac{x_1}{2^n}, \frac{x_2}{2^n})$.
\end{proof}

\begin{figure}[!t]
\centering
\includegraphics[width=3.0in]{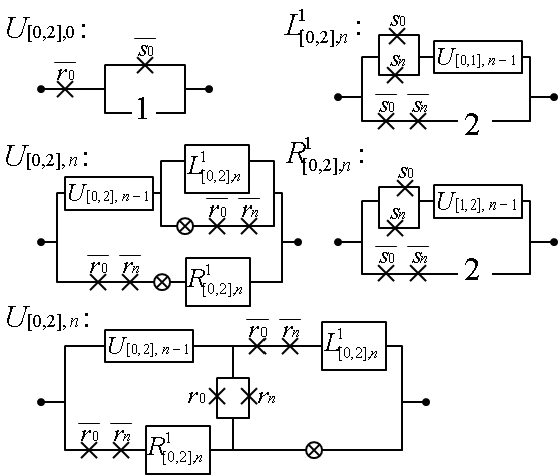}
\caption{\textbf{3-state UPG reduced} After reducing the exponential UPG, we get two linear UPGs. One is a sp-circuit and uses 2 stochastic switches. The other is non-sp and uses 1 stochastic switch.}
\label{UPG3RED}
\end{figure}
\begin{lemma}
A 3-state UPG $U_{[0, 2], n}$ with inputs $r_0, r_n, \ldots, r_2, r_1$ and $s_0, s_n, \ldots, s_2, s_1$ can be constructed with a quadratic number of switches using either of the two recursive constructions in Figure \ref{UPG3RED}, where the bits used in $U_{[0, 1], n-1}$ are $r_0, r_{n-1}, \ldots, r_2, r_1$, the bits used in $U_{[1, 2], n-1}$ are $s_0, s_{n-1}, \ldots, s_2, s_1$, and the bits used in $U_{[0, 2], n-1}$ are $r_0, r_{n-1}, \ldots, r_2, r_1$ and $s_0, s_{n-1}, \ldots, s_2, s_1$.
\end{lemma}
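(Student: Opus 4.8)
The plan is to imitate, step for step, the reduction of the exponential $2$-state UPG to its linear form (the preceding $2$-state lemma). I start from the exponential recursion just proved, $U_{[0,2],n} = L_{[0,2],n} + p_n R_{[0,2],n}$ with $p_n = (\tfrac12,0,\tfrac12)$, $L_{[0,2],n} = \bar r_{0n}(s_{0n}U_{[0,1],n-1} + \bar s_{0n}U_{[0,2],n-1})$ and $R_{[0,2],n} = r_{0n}U_{[0,2],n-1} + \bar r_{0n}(s_{0n}U_{[1,2],n-1} + \bar s_{0n}\cdot 2)$, first using $r_{0n}\cdot 0 = 0$ and $\bar s_{0n}\cdot 2 = \bar s_{0n}$ (since $\bar s_{0n}\le 2$) to clear the deterministic constants. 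Writing $A = U_{[0,1],n-1}$, $B = U_{[1,2],n-1}$, $C = U_{[0,2],n-1}$ for brevity, the goal is to rewrite $U_{[0,2],n}$ so that $C$ is wired in only \emph{once} (it currently appears in both $L$ and $R$); then the recursion invokes one copy each of $A$, $B$, $C$ plus a constant number of control switches.

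The algebraic engine is that the state set $\{0,1,2\}$ is a chain, so the max-min algebra is a distributive lattice: $\min$ distributes over $\max$ and conversely, and the usual absorption laws hold. Concretely I will use $xy + xz = x(y+z)$, $x + xy = x$, and --- because the control literals $r_{0n},\bar r_{0n},s_{0n},\bar s_{0n}$ are deterministic and $r_{0n},\bar r_{0n}$ are complementary --- identities of the form $\bar r_{0n}\bar s_{0n}C + r_{0n}p_nC = C(\bar r_{0n}\bar s_{0n} + r_{0n}p_n)$ to merge the two copies of $C$, and $\bar r_{0n}s_{0n}A + p_n\bar r_{0n}s_{0n}B = \bar r_{0n}s_{0n}(A + p_nB)$ to group the $A,B$ branch. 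Distributing $p_n$ through $R_{[0,2],n}$ and re-collecting terms yields one expression of the form ``$U_{[0,2],n-1}$ placed in series/parallel with control switches, in parallel with a branch built from $U_{[0,1],n-1}$, $U_{[1,2],n-1}$ and a single $p_n$'' --- this is the series-parallel circuit of Figure \ref{UPG3RED}; keeping the same terms but leaving the products expanded rather than factored (exactly as in the $2$-state derivation) yields the non-sp circuit of Figure \ref{UPG3RED}. Correctness then comes for free: every step is an identity in the max-min algebra, so by the exponential lemma the reduced circuits still realize $(\tfrac{x_0}{2^n},\tfrac{x_1}{2^n},\tfrac{x_2}{2^n})$ on the prescribed inputs, and the base case is $U_{[0,2],0} = \bar r_0(\bar s_0 + 1)$ as before.

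For the complexity, the reduced recursion uses one instance of $U_{[0,2],n-1}$, one of $U_{[0,1],n-1}$, one of $U_{[1,2],n-1}$, and $O(1)$ additional switches; since the $2$-state UPG theorem gives $|U_{[0,1],n-1}| = |U_{[1,2],n-1}| = O(n)$, we obtain $T(n) \le T(n-1) + cn$ for a constant $c$, hence $T(n) = O(n^2)$ --- quadratically many switches.

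I expect the main obstacle to be bookkeeping in the max-min reduction rather than anything deep. Two points need care: (i) the two occurrences of $U_{[0,2],n-1}$ must be identified with a \emph{single} physical sub-circuit fed the common bits $r_0,r_{n-1},\ldots,r_1$ and $s_0,s_{n-1},\ldots,s_1$ (only the $p_n$ pswitches at distinct recursion levels are independent), because $C + C = C$ and $CC = C$ hold for a shared circuit but not for i.i.d.\ copies; and (ii) the absorptions that involve $p_n$, which is stochastic rather than deterministic, must be checked honestly --- they go through precisely because each such $p_n$ is combined against a complementary pair of deterministic control literals, so on each deterministic branch the identity collapses to a trivial one.
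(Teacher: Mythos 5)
Your proposal follows essentially the same route as the paper: starting from the exponential recursion, you use max--min distributivity and absorption (with the two occurrences of $U_{[0,2],n-1}$ identified as one shared sub-circuit, exactly the point the paper exploits via $L_{[0,2],n}=U_{[0,2],n-1}L_{[0,2],n}$ and its auxiliary terms $L^1,R^1$) to collapse the duplicate copy, and then the recursion $T(n)\le T(n-1)+O(n)$ gives the quadratic bound. This matches the paper's proof in both the key algebraic step and the complexity accounting.
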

\begin{proof}
We will prove this by algebraically reducing the exponential UPG. Let $p_i = (\frac{1}{2}, 0, \frac{1}{2})$ be i.i.d. We also introduce some new variables to make the final expressions concise.
\begin{align*}
L_{[0, 2], n}^1 &= s_{0n}U_{[0, 1], n-1} + \bar{s}_{0n}\cdot 2,\\
R_{[0, 2], n}^1 &= s_{0n}U_{[1, 2], n-1} + \bar{s}_{0n}\cdot 2
\end{align*}
A visual of these is in Figure \ref{UPG3RED}. $R_{[0, 2], n}^1$ is defined so that $R_{[0, 2], n} = r_{0n}U_{[0, 2], n-1} + \bar{r}_{0n}R_{[0, 2], n}^1$. $L_{[0, 2], n}^1$ is defined similarly except that the $U_{[0, 2], n-1}$ component is replaced by a $2$. We notice that,
\begin{align*}
L_{[0, 2], n} &= U_{[0, 2], n-1}L_{[0, 2], n}\\
              &= U_{[0, 2], n-1}(r_{0n}\cdot 0 + \bar{r}_{0n}L_{[0, 1], n-1}^1)
\end{align*}

\begin{align*}
U_{[0, 2], n} &= L_{[0, 2], n} + p_nR_{[0, 2],n}\\
              &= L_{[0, 2], n}U_{[0, 2], n-1} + p_nR_{[0, 2], n}\\
              &= (r_{0n}\cdot 0 + \bar{r}_{0n}L_{[0, 2], n}^1)U_{[0, 2], n-1}\\
              &\ \ \ \ + p_n(r_{0n}U_{[0, 2], n-1} + \bar{r}_{0n}R_{[0, 2], n}^1)\\
              &= \bar{r}_{0n}L_{[0, 2], n}^1U_{[0, 2], n-1} + p_n(r_{0n}U_{[0, 2], n-1} + \bar{r}_{0n}R_{[0, 2], n}^1)\\
              &= U_{[0, 2], n-1}(\bar{r}_{0n}L_{[0, 2], n}^1 + p_nr_{0n}) + p_n\bar{r}_{0n}R_{[0, 2], n}^1
\end{align*}
This is exactly the form of the series parallel construction in Figure \ref{UPG3RED}.
\begin{align*}
U_{[0, 2], n} &= L_{[0, 2], n} + p_nR_{[0, 2], n}\\
              &= L_{[0, 2], n}U_{[0, 2], n-1} + p_nR_{[0, 2], n}\\
              &= (r_{0n}\cdot 0 + \bar{r}_{0n}L_{[0, 2], n}^1)U_{[0, 2], n-1}\\
              &\ \ \ \ + p_n(r_{0n}U_{[0, 2], n-1} + \bar{r}_{0n}R_{[0, 2], n}^1)\\
              &= \bar{r}_{0n}L_{[0, 2], n}^1U_{[0, 2], n-1} + p_n(r_{0n}U_{[0, 2], n-1} + \bar{r}_{0n}R_{[0, 2], n}^1)\\
              &= [U_{[0, 2], n-1}][\bar{r}_{0n}L_{[0, 2], n}^1] + [\bar{r}_{0n}R_{[0, 2], n}^1][r_{0n}][\bar{r}_{0n}L_{[0, 2], n}^1]\\
              &\ \ \ \ + [U_{[0, 2], n-1}][r_{0n}][p_n] + [\bar{r}_{0n}R_{[0, 2], n}^1][p_n]
\end{align*}
This is exactly the form of the non-series parallel construction in Figure \ref{UPG3RED}.
\end{proof}
Finally, as in the 2-state UPG, we will remove the repetitive instances of $r_0$ and $s_0$.

\begin{figure}[!t]
\centering
\includegraphics[width=3.0in]{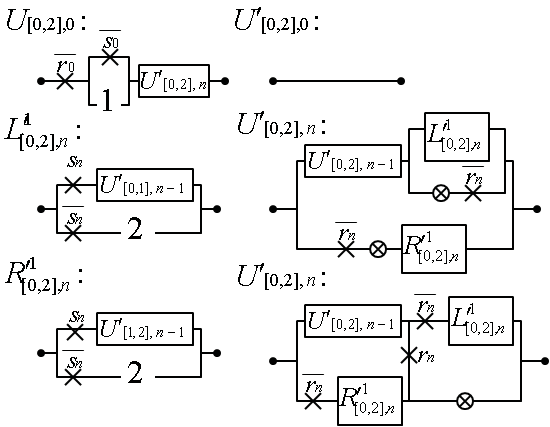}
\caption{\textbf{3-state UPG removed bit} Even though $r_0$ and $s_0$ were important in controlling the cases of our algorithm, it turns out we can remove it from the recursive construction and just append it at the end.}
\label{UPG3NOBIT}
\end{figure}
\begin{figure}[!t]
\centering
\includegraphics[width=3.0in]{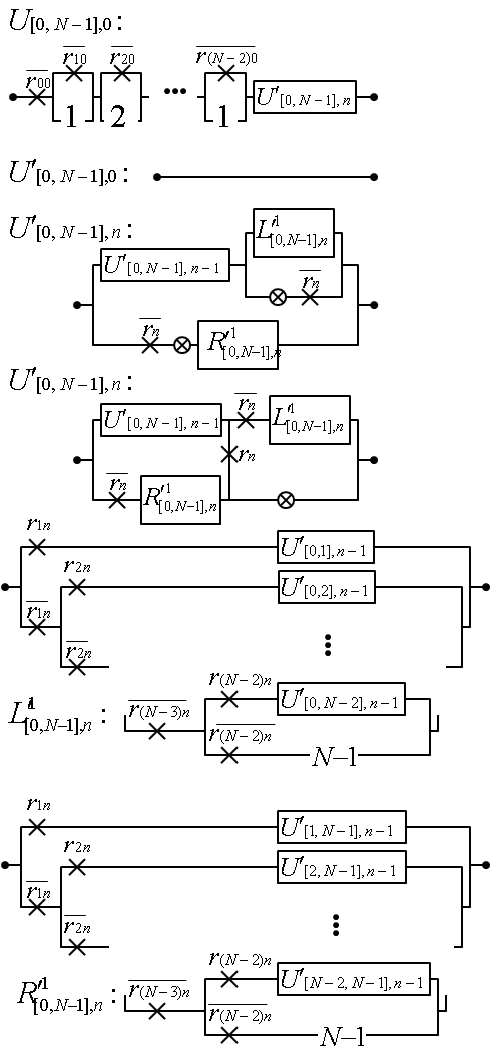}
\caption{\textbf{$N$-state UPG} The $N$-state UPG is almost exactly the same as the 3-state UPG.}
\label{UPGNNOBIT}
\end{figure}

\begin{theorem}
A 3-state UPG $U_{[0, 2], n}$ with inputs $r_0, r_n, \ldots, r_2, r_1$ and $s_0, s_n, \ldots, s_2, s_1$ can be constructed with a quadratic number of switches using either of the two recursive constructions in Figure \ref{UPG3NOBIT}, where the bits used in $U_{[0, 1], n-1}$ are $r_0, r_{n-1}, \ldots, r_2, r_1$, the bits used in $U_{[1, 2], n-1}$ are $s_0, s_{n-1}, \ldots, s_2, s_1$, and the bits used in $U_{[0, 2], n-1}$ are $r_0, r_{n-1}, \ldots, r_2, r_1$ and $s_0, s_{n-1}, \ldots, s_2, s_1$.
\end{theorem}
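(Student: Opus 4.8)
The plan is to follow exactly the route used for the two-state statement: show that the construction of Figure~\ref{UPG3NOBIT} realizes the same distribution, circuit-for-circuit, as the quadratic construction of the preceding lemma (Figure~\ref{UPG3RED}), and then read off both correctness and the quadratic switch count from that lemma. Since Figure~\ref{UPG3RED} was already proved to realize every $(\frac{x_0}{2^n},\frac{x_1}{2^n},\frac{x_2}{2^n})$ with a quadratic number of switches, nothing about probabilities remains to check; what is left is a purely algebraic identity between the two circuit expressions, which I would establish by induction on $n$.

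The inductive claim is that the Figure~\ref{UPG3RED} circuit $U_{[0,2],n}$ can be rewritten in the bit-$0$-extracted form of Figure~\ref{UPG3NOBIT}, which (generalizing the $n=0$ circuit $\bar{r}_0(\bar{s}_0+1)$) has the shape $U_{[0,2],n} = \bar{r}_0\,(\bar{s}_0 + 1)\,U'_{[0,2],n}$, where $U'_{[0,2],n}$ uses only the bits $r_{n-1},\dots,r_1$ and $s_{n-1},\dots,s_1$ together with one fresh $(\frac{1}{2},0,\frac{1}{2})$-pswitch $p_n$ (and $U'_{[0,2],0}$ is the deterministic state-$2$ switch, the identity under $\min$). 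One checks that this wrapper forces $(1,0,0)$ when $r_0$ is high and caps the output at state $1$ when $s_0$ is high, exactly as the definition of the $3$-state UPG demands. The two-state sub-UPGs $U_{[0,1],n-1}$ and $U_{[1,2],n-1}$ that appear in the recursion are themselves put in bit-$0$-extracted form by the two-state theorem already proved, modulo the shift convention $U_{[i,i+1],m}\equiv(i+U_{[i,i+1],m})(i+1)$.

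For the inductive step I would substitute the factored $U_{[0,2],n-1}$, $U_{[0,1],n-1}$, $U_{[1,2],n-1}$ into the series-parallel expression $U_{[0,2],n} = U_{[0,2],n-1}\,(\bar{r}_{0n}L^1_{[0,2],n} + p_n r_{0n}) + p_n\bar{r}_{0n}R^1_{[0,2],n}$ from the previous lemma (and, for the second construction, into the four-summand non-sp expansion given there), and then push $\bar{r}_0$ and $\bar{s}_0+1$ out to the front. The only tools needed are: distributivity of $\min$ over $\max$, idempotence $\min(a,a)=a$ (equivalently $\max(a,a)=a$), and the consequence that inside a subexpression already dominated by an outer factor $\bar{r}_0$ (resp. $\bar{s}_0+1$) one may freely insert or delete further copies of that factor. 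Carrying this through, the residue left inside the $\bar{r}_0(\bar{s}_0+1)(\cdot)$ wrapper is precisely the recursion Figure~\ref{UPG3NOBIT} prescribes for $U'_{[0,2],n}$, which closes the induction. The switch count then transfers: the extraction only deletes $\Theta(n)$ deterministic switches from the level-by-level recursion and adds a constant-size wrapper, so Figure~\ref{UPG3NOBIT} uses as many switches as Figure~\ref{UPG3RED} up to lower-order terms --- still quadratic in $n$ (and the two variants differ, as before, only in the number of stochastic vs.\ deterministic switches).

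The main obstacle is the bookkeeping of extracting two control bits that live at different depths: $\bar{r}_0$ guards the whole circuit and forces state $0$, whereas $\bar{s}_0+1$ guards only the ``not forced to state $0$'' branch and merely caps the state at $1$; moreover the three embedded sub-UPGs carry the three different state ranges $[0,1]$, $[1,2]$, $[0,2]$ and interact with the $(i+U_{[i,i+1],m})(i+1)$ convention. Keeping straight which stray $\bar{s}_0$ or $\bar{r}_0$ factors are legitimately removable --- because they already sit under a matching outer factor --- is where care is needed; the $\min/\max$ identities themselves are entirely routine, and once the sp case is done the non-sp case is the same computation term by term.
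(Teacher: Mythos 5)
Your proposal matches the paper's proof in essence: both argue by induction on $n$ that the reduced circuit of Figure~\ref{UPG3RED} factors as $U_{[0,2],n} = \bar{r}_0(\bar{s}_0+1)U'_{[0,2],n}$, which is exactly the form of Figure~\ref{UPG3NOBIT}, and then inherit correctness and the quadratic switch count from the preceding lemma. The only difference is in how the algebra is organized --- the paper carries out the inductive step by splitting into the three valid assignments of $(r_0,s_0)$, namely $(2,2)$, $(0,2)$, $(0,0)$, so that the control switches become constants and the min/max expressions collapse case by case, rather than as a single input-independent lattice identity; that framing is also the safer one, since the equivalence is only asserted for inputs that encode genuine distributions.
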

\begin{proof}
We will prove this by induction. Assume $U_{[0, 2], n-1} = \bar{r}_0(\bar{s}_0 + 1)U'_{[0, 2], n-1}$. Then, by breaking up the circuit into cases based on $r_0$ and $s_0$, we will show that $U_{[0, 2], n} = \bar{r}_0(\bar{s}_0 + 1)U'_{[0, 2], n}$. 

Case 0: $r_0 = 2, s_0 = 2$.
In this case, we know that 
\begin{align*}
U_{[0, 2], n} &= 0\\
\bar{r}_0 &= 0 \implies \bar{r}_0(\bar{s}_0 + 1)U'_{0, 2], n} = 0
\end{align*}
Therefore, $U_{[0, 2], n} = \bar{r}_0(\bar{s}_0 + 1)U'_{[0, 2], n}$.

Case 1: $r_0 = 0, s_0 = 2$.
\begin{align*}
U_{[0, 2], n}  &= L_{[0, 2], n} + p_nR_{[0, 2], n}\\
               &= L_{[0, 2], n}U_{[0, 2], n-1} + p_nR_{[0, 2], n}\\
U'_{[0, 2], n} &= L'_{[0, 2], n} + p_nR'_{[0, 2], n}\\
               &= L'_{[0, 2], n}U'_{[0, 2], n-1} + p_nR'_{[0, 2], n}
\end{align*}
We want to show that $L_{[0, 2], n}U_{[0, 2], n-1} = \bar{r}_0(\bar{s}_0 + 1)L'_{[0, 2], n}U'_{[0, 2], n-1}$, $R_{[0, 2], n} = \bar{r}_0(\bar{s}_0 + 1)R'_{[0, 2], n}$.
\begin{align*}
\bar{r}_0(\bar{s}_0 + 1) &= 1\\
L'_{[0, 2], n}U'_{[0, 2], n-1}   &= \bar{r}_nU'_{[0, 1], n-1}\\
L_{[0, 2], n}U_{[0, 2], n-1}     &= \bar{r}_nU_{[0,1], n-1}\cdot 1\cdot U_{[0, 2], n-1}\\
                                 &= \bar{r}_nU_{[0, 1], n-1}\cdot 1\\
                                 &= \bar{r}_n\cdot 1 \cdot U'_{[0, 1], n-1}\cdot 1\\
                                 &= \bar{r}_0(\bar{s}_0 + 1)L'_{[0, 2], n}U'_{[0, 2], n-1}\\
R'_{[0, 2], n}                   &= r_nU'_{[0, 2], n-1} + \bar{r}_n\\
R_{[0, 2], n}                    &= r_nU_{[0, 2], n-1} + \bar{r}_n(U_{[1, 2], n-1} + 1)\\
                                 &= r_nU_{[0, 2], n-1}\cdot 1 + \bar{r}_n\cdot 1\\
                                 &= r_n\cdot 1 \cdot U'_{[0, 2], n-1} \cdot 1 + \bar{r}_n \cdot 1\\
                                 &= \bar{r}_0(\bar{s}_0 + 1)R'_{[0, 2], n}
\end{align*}
Therefore, $U_{[0, 2], n} = \bar{r}_0(\bar{s}_0 + 1)U'_{[0, 2], n}$.

Case 2: $r_0 = 0, s_0 = 0$.
In this case, we know that
\begin{align*}
r_{0n} &= r_n\\
s_{0n} &= r_n\\
\bar{r}_0(\bar{s}_0 + 1) &= 2
\end{align*}
Therefore, $\bar{r}_0(\bar{s}_0 + 1)U'_{[0, 2], n} = U'_{[0, 2], n} = U_{[0, 2], n}$.
\end{proof}

Finally, as in the 2-state case, the 3-state UPG $U_{[i, i+2], n}$ will generate the appropriate distributions over the states $i, i+1, i+2$.

\subsection{$N$-state UPG}

The results will now be generalized to $N$-states. The steps and the proof are all identical to the 3-state version except that the variables $L$ and $R$ are defined to be generalized constructions. Since they are identical, we will define the $N$-state UPG and then jump to the conclusion and circuit diagrams.

{\it Definition ($N$-state UPG):}
A $N$-state UPG $U_{[0, N-1], n}$ is a circuit that generates distributions of the form $(\frac{x_0}{2^n}, \frac{x_1}{2^n}, ..., \frac{x_{N-1}}{2^n})$ using $N-1$ input vectors $\mathbf{r_0}, \mathbf{r_1}$, ..., $\mathbf{r_{N-2}}$, where each vector $\mathbf{r_i}$ has $n+1$ bits $r_{i0}$, $r_{i1}$, ..., $r_{in}$. When the input vectors $\mathbf{r_i} = (r_{i0}, r_{in}, r_{i(n-1)}, \ldots, r_{i2}, r_{i1})$ are set to the binary representation of $\frac{x_0 + x_1 + ... + x_i}{2^n}$ with the symbol $N-1$ replacing the boolean 1, then the circuit $U_{[0, N-1], n}$ will realize distribution $(\frac{x_0}{2^n}, \frac{x_1}{2^n}, ..., \frac{x_{N-1}}{2^n})$. In other words, to generate any desired binary distribution $(\frac{x_0}{2^n}, \frac{x_1}{2^n}, ..., \frac{x_{N-1}}{2^n})$, we set the input vector $\mathbf{r_0}$ to $\frac{x_0}{2^n}$, the input vector $\mathbf{r_1}$ to the sum $\frac{x_0 + x_1}{2^n}, \ldots$, and the input vector $\mathbf{r_{N-2}}$ to the sum $\frac{x_0 + x_1 + ... + x_{N-2}}{2^n}$.

\begin{theorem}
A $N$-state UPG $U_{[0, N-1], n}$ with inputs $\mathbf{r_i}, 0 \leq i \leq N-2$ can be constructed with a polynomial number of switches $O(n^{N-1})$ using either of the two recursive constructions in Figure \ref{UPGNNOBIT}, where the bits used in $U_{[0, i], n-1}$ are $\mathbf{r_0}, \mathbf{r_1}, \ldots, \mathbf{r_{i-1}}$ and the bits used in $U_{[i, N-1], n-1}$ are $\mathbf{r_i}, \mathbf{r_{i+1}}, \ldots, \mathbf{r_{N-2}}$.
\end{theorem}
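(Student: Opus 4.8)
The plan is to follow the same three-stage strategy that produced the $2$- and $3$-state UPGs: (i) build an exponential-size $N$-state UPG directly from the binary $N$-state algorithm of Section IV; (ii) algebraically collapse the two recursive copies of $U_{[0,N-1],n-1}$ appearing in that construction into a single copy, leaving only single copies of strictly smaller UPGs; and (iii) factor out the leading bits $r_{i0}$ so that each is used once, appended at the root. Throughout I would write $r_{i0n}$ for the composite leading bit $\max(r_{i0},r_{in})$, with $\bar r_{i0n}=\min(\bar r_{i0},\bar r_{in})$, in direct analogy with the bits $r_{0n},s_{0n}$ of the $3$-state derivation; since the encoding of $\mathbf r_i$ is the binary representation of $\tfrac{x_0+\cdots+x_i}{2^n}$, one has $r_{i0n}=N-1$ exactly when that partial sum is $\ge\tfrac12$.

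For stage (i), recall that the binary $N$-state algorithm picks case $k$ precisely when $\sum_{j=0}^{k-1}\tfrac{x_j}{2^n}<\tfrac12\le\sum_{j=0}^{k}\tfrac{x_j}{2^n}$, i.e.\ when $r_{00n}=\cdots=r_{(k-1)0n}=0$ and $r_{k0n}=N-1$. So I would define $L_{[0,N-1],n}$ and $R_{[0,N-1],n}$ as the $N$-way multiplexers on the bits $r_{00n},\ldots,r_{(N-2)0n}$ that, in case $k$, output the two factors of that case of the algorithm: for $0<k<N-1$ these are built from $U_{[0,k],n-1}$ and $U_{[k,N-1],n-1}$, while for $k=0$ (resp.\ $k=N-1$) the left (resp.\ right) factor degenerates to the deterministic switch $(1,0,\dots,0)$ (resp.\ $(0,\dots,0,1)$). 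Correctness is then an induction on $n$ whose structure is identical to the $3$-state lemma: deleting the high bit $r_{in}$ from $\mathbf r_i$ shifts the partial-sum encoding left by one and subtracts $1$ iff the removed bit was set, so the vectors fed to the sub-UPGs are exactly the arguments the algorithm hands to its recursive calls, and invoking the algorithm's case analysis closes the induction.

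For stage (ii) I would introduce nested auxiliary variables $L^1_{[0,N-1],n},R^1_{[0,N-1],n},L^2_{[0,N-1],n},\ldots$ generalizing $L^1_{[0,2],n},R^1_{[0,2],n}$, where $R^m$ (resp.\ $L^m$) strips the outermost $m$ multiplexer layers of $R$ (resp.\ $L$), with the single copy of $U_{[0,N-1],n-1}$ replaced by a deterministic switch inside $L^1$. The same manipulations as before — distributing $p_n$ over the $r_{00n}$-multiplexer, using the identity $L_{[0,N-1],n}=U_{[0,N-1],n-1}\,L_{[0,N-1],n}$ established exactly as in the $3$-state case, and collecting terms — yield the series-parallel form $U_{[0,N-1],n}=U_{[0,N-1],n-1}\bigl(\bar r_{00n}L^1_{[0,N-1],n}+p_n r_{00n}\bigr)+p_n\bar r_{00n}R^1_{[0,N-1],n}$, together with the analogous non-sp form, matching Figure \ref{UPGNNOBIT}. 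Stage (iii) is the straightforward induction $U_{[0,N-1],n}=\bar r_{00}(\bar r_{10}+\cdots+\bar r_{(N-2)0}+1)\,U'_{[0,N-1],n}$ already carried out for $N=3$: the case split is on the tuple $(r_{00},\dots,r_{(N-2)0})$, and only the ``all leading bits $=N-1$'' slice forces $U=0$, while the ``all leading bits $=0$'' slice reproduces $U'$ verbatim.

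For the switch count, let $g_N(n)$ denote the number of switches in $U_{[0,N-1],n}$. The reduced recursion uses one copy of $U_{[0,N-1],n-1}$ together with, over the $N-2$ intermediate cases, single copies of $U_{[0,k],n-1}$ and $U_{[k,N-1],n-1}$, each a UPG on at most $N-1$ states, plus $O(N)$ deterministic and $O(1)$ stochastic glue switches. By strong induction on $N$ (base cases $N=2,3$ being the two preceding theorems), every proper sub-UPG has size $O(n^{N-2})$, so $g_N(n)=g_N(n-1)+O(n^{N-2})$, which telescopes to $g_N(n)=O(n^{N-1})$. The main obstacle I anticipate is bookkeeping rather than mathematics: writing the $N$-way multiplexer definitions of $L,R$ and the nested $L^m,R^m$ cleanly enough that the algebraic collapse of stage (ii) is transparent — once the notation is set up correctly, every identity invoked is one that has already appeared in the $2$- and $3$-state proofs.
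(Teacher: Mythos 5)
Your proposal is correct and follows essentially the same route the paper intends: the paper gives no explicit proof for the $N$-state case, stating only that the steps are identical to the $3$-state derivation, and your three-stage plan (exponential construction from the binary $N$-state algorithm, algebraic collapse of the duplicated $U_{[0,N-1],n-1}$, then factoring out the leading bits) is exactly that generalization, carried out in more detail than the paper provides. Your switch-count recursion $g_N(n)=g_N(n-1)+O(n^{N-2})$, justified by strong induction on the number of states of the proper sub-UPGs, is a correct and welcome way to make the $O(n^{N-1})$ bound explicit.
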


\section{Partial Orders}

All current work has been done on states that are a total order. i.e. $0 < 1 < ... < n-1$. It is interesting to think about a logic on states that are in a partial order. We perform a cursory examination of constructing probabilities on lattices that are partial orders and find that we cannot generate many distributions on partial orders.

Composition rules must be generalized again since you cannot take a max or min of incomparable states. Instead of max and min, we use the $\vee$ (join) and $\wedge$ (meet) operators respectively.

\begin{theorem}[Partial Order Inexpressibility]
For the lattice in Figure \ref{lattice}b, no distributions of the form $v = (0, 1-p, p, 0)$, where $p \geq 0$ are realizable by building a sp circuit with any switch set unless $(0, 1-p, p, 0)$ itself is in the switch set.
\end{theorem}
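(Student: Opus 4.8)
The plan is to prove a stronger structural lemma by induction on the number of switches in the circuit, and then read off the theorem by contraposition. The lattice of Figure \ref{lattice}b is the four-element lattice with bottom $0$, top $3$, and two incomparable states $1$ and $2$ (so $1\wedge 2 = 0$ and $1\vee 2 = 3$). The lemma I would aim for is: for any fixed switch set, if an sp circuit $C$ has output distribution supported only on $\{1,2\}$, then the distribution of $C$ equals the distribution of a single switch in the switch set. Applied to a distribution of the form $(0,1-p,p,0)$, this says such a distribution is realizable only when it is itself one of the base switch distributions, which is exactly the statement.

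Before the induction I would record the elementary meet/join tables: $a\wedge b\neq 0$ only when $a,b$ both lie in $\{1,3\}$ or both lie in $\{2,3\}$, and $a\wedge b = 3$ only when $a = b = 3$; dually $a\vee b\neq 3$ only when $a,b$ both lie in $\{0,1\}$ or both lie in $\{0,2\}$, and $a\vee b = 0$ only when $a = b = 0$. Since in an sp composition the two subcircuits are built from disjoint switches, they are independent, so the support of a series composition $C_1 * C_2$ (respectively a parallel composition $C_1 + C_2$) is precisely the set of meets (respectively joins) over $\operatorname{supp}(C_1)\times\operatorname{supp}(C_2)$.

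For the inductive step, write $C = C_1 * C_2$ or $C = C_1 + C_2$. In the series case, $\Pr[C=3]=0$ together with $a\wedge b = 3 \iff a=b=3$ and independence forces one subcircuit, say $C_1$, to avoid state $3$; then $\Pr[C=0]=0$ forces $C_1$ to avoid state $0$ as well, so $\operatorname{supp}(C_1)\subseteq\{1,2\}$. A short split on whether $\operatorname{supp}(C_1)$ equals $\{1,2\}$, $\{1\}$, or $\{2\}$ shows in each case that $\Pr[C=0]=0$ pins $C_2$ down enough (to the deterministic top switch, or to support inside $\{1,3\}$, etc.) that $C$ realizes exactly the distribution of $C_1$. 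Since $C_1$ has strictly fewer switches and support inside $\{1,2\}$, the induction hypothesis gives that its distribution is that of a switch in the set, hence so is $C$'s. The parallel case is the order-reversing dual of this argument (exchange $\wedge\leftrightarrow\vee$ and $0\leftrightarrow 3$), using the lattice's order-reversing automorphism $i\mapsto 3-i$, the same symmetry that underlies the Duality Theorem. The base case is a single switch, which trivially satisfies the claim.

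The main obstacle is not depth but bookkeeping: making sure the inductive case analysis is exhaustive, so that every way of making $\Pr[C=0]$ and $\Pr[C=3]$ vanish simultaneously is seen to collapse the composition onto one of its two subcircuits. The conceptual key that makes this go through is picking the right induction hypothesis, namely the support-based statement ``$\operatorname{supp}(C)\subseteq\{1,2\}$ implies the distribution of $C$ is a base switch distribution'' rather than the bare statement about $(0,1-p,p,0)$, so that the degenerate single-state subcases $\{1\}$ and $\{2\}$ that arise during the recursion are already covered.
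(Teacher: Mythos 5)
Your proposal is correct, and it proves the theorem by a route that is recognizably the same underlying mechanism as the paper's but packaged quite differently. The paper works directly with the target distribution $v=(0,1-p,p,0)$: assuming $v=xy$, it uses $v_{00}=0$ and $v_{11}=0$ together with the explicit meet table to force $y$ to be the deterministic top switch, hence $x=v$, and then asserts (with the descent left implicit, and the parallel case ``left out since it is almost identical'') that $v$ cannot be built without being a base switch. You instead prove a strengthened, support-based structural lemma --- any sp circuit whose support lies in the two incomparable states has the distribution of a single base switch --- by explicit induction on the number of switches, and you dispatch the parallel case via the order-reversing automorphism $i\mapsto 3-i$ rather than redoing the computation. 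The core step is identical in both (vanishing mass at the bottom and top of the lattice collapses one factor of any composition onto the other), but your formulation buys three things the paper's write-up lacks: the recursion down to a single switch is made airtight rather than implicit; the degenerate endpoints $p\in\{0,1\}$, where the paper's inferences $v_{10}=p\Rightarrow x_{10}\neq 0$ and $v_{01}=1-p\Rightarrow x_{01}\neq 0$ break down, are covered by the singleton-support subcases of your induction hypothesis; and the parallel half is obtained for free from duality. The only point to be careful about when writing it out is the one you already flag --- the exhaustiveness of the subcase split on $\operatorname{supp}(C_1)\in\bigl\{\{1,2\},\{1\},\{2\}\bigr\}$ and the independence of the two subcircuits (which holds because sp subcircuits use disjoint switches) --- and both check out.
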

\begin{proof}
Assume that $v$ is realizable by a series composition. That is, $v = xy$. Then,
\begin{align*}
\displaystyle v_{00} &= x_{00} + y_{00} - x_{00}y_{00} + x_{01}y_{10} + x_{10}y_{01}\\
                     &\Rightarrow x_{00} = 0 \text{ and } y_{00} = 0\\
              v_{11} &= 0 \\
                     &\Rightarrow x_{11} = 0 \text{ or } x_{11} = 0
\end{align*}
WLOG, we let $x_{11} = 0$, so that,
\begin{align*}
\displaystyle v_{10} &= x_{10}y_{11} + x_{11}y_{10} + x_{10}y_{10}\\
                     &= x_{10}y_{11} + x_{10}y_{10}\\
                     &= x_{10}(y_{11} + y_{10}) = p\\
                     &\Rightarrow x_{10} \neq 0\\
              v_{01} &= x_{01}y_{11} + x_{11}y_{01} + x_{01}y_{01}\\
                     &= x_{01}y_{11} + x_{01}y_{01}\\
                     &= x_{01}(y_{11} + y_{01}) = 1-p\\
                     &\Rightarrow x_{01} \neq 0\\
              v_{00} &= x_{01}y_{10} + x_{10}y_{01}\\
                     &\Rightarrow y_{10} = 0 \text{ and } y_{01} = 0\\
                     &\Rightarrow y_{11} = 1\\
                     &\Rightarrow x = v
\end{align*}
But then we have $x$ = $v$, so we have shown that need without $v$ itself, we cannot realize $v$ through a series connection. In the same way, we can show the same result for a parallel composition (left out since it is almost identical), which means that $v$ is not realizable with a sp circuit.
\end{proof}

\begin{figure}[!t]
\centering
\includegraphics[width=3.0in]{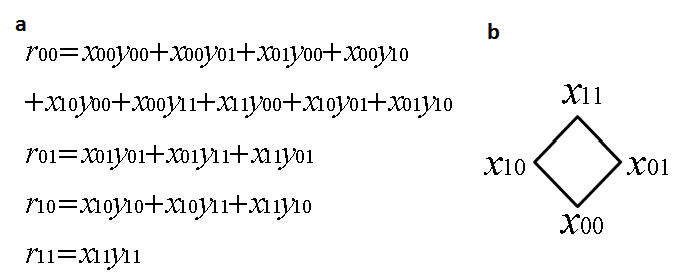}
\caption{\textbf{A simple partially ordered lattice.} This is an example of a simple partial order. \textit{(a)} An example of a series composition (meet) of switches $x$ and $y$. \textit{(b)} The lattice structure.}
\label{lattice}
\end{figure}

\section{Applications}
One intuitive understanding of this multivalued alphabet is to think of timings and dependencies. Switches represent events, and the states represent the discrete time when the event occurs. When two switches $x$ and $y$ are composed to make the circuit $z$, we are saying that for event $z$ to occur, it depends on events $x$ and $y$. If event $z$ requires both $x$ and $y$ to occur before it can occur, then this can be represented by a parallel connection since the time $z$ occurs will be the $\max(x, y)$. If event $z$ only needs either $x$ or $y$ to occur, then this can be represented by a series connection since the time $z$ occurs will be the $\min(x, y)$.

It is possible to implement multivalued relay circuits using physical (open and closed) relay switches. If we consider switches as normally closed and the state as the time it is opened, then we have a direct correspondence to the description above. When switches are composed in series, the time that the entire circuit is open, is when either of the switches open (min) and when the switches are composed in parallel, the time that the entire circuit is open is when both the switches are open (max). 

We can imagine a biological neural circuit working in a very similar way. Consider a network of neurons where each neuron only has incoming signals from 2 other neurons. Then, consider a neuron $x$ when incoming neurons $y$ and $z$. Neuron $x$ will fires when the incoming signals exceeds a certain threshold potential. If the threshold is low, we can imagine that neuron $x$ will fire once it receives the first incoming signal (i.e. $\min(x,y)$). If the threshold is high, we can imagine that neuron $x$ will only fire if it receives signals from both incoming neurons (i.e. $\max(x,y)$). In general, with $m$ incoming neuron signals, it becomes slightly more complex; depending on the threshold of the neuron of interest, the time it fires would be when it receives the 1st, 2nd, 3rd, ..., mth signal. These gates can be implemented by max/min gates.

There are a number of directions for extending this work to better model and understand biological circuits. These include studying joint distributions, stochastic functions, probability distributions on different state structures, and stochasticity in compositions. With further understanding of biological circuits, this can also aid in building artificial molecular circuits, such as in DNA computing\cite{Winfree}. In addition, stochastic circuits may be useful in modeling other stochastic networks in engineering.

\section{Conclusion}
In this paper, we studied probability generation in the context of multivalued relay circuits, a generalization of two-state relays. We proved a duality result on this max/min generalization and then proved construction algorithms for generating any rational probability distribution. We extended the robustness result to these algorithms and showed that switch error remains bounded linearly regardless of the circuit size. Finally, we constructed a universal probability generator for mapping deterministic inputs to probability distributions and demonstrated a basic non-realizability result for partial orders.

Further work on multivalued stochastic switches may have many biological applications such as for neural coding.

\section*{Acknowledgment}
The authors would like to thank Dan Wilhelm, Hongchao Zhou, and Ho-lin Chen for helpful discussions. They would also like to thank the Caltech SURF program, the Molecular Programming Project funded by the NSF Expeditions in Computing Program under grant CCF-0832824, and Aerospace Corporation for funding to make this research possible.

\end{document}